\newcommand{\esp}[2][\mathbb E] {#1\left[#2\right]}
\newcommand{\expon}[1]{\exp\left(#1\right)}
\let\orgdescriptionlabel\descriptionlabel
\renewcommand*{\descriptionlabel}[1]{
  \let\orglabel\label
  \let\label\@gobble
  \phantomsection
  \edef\@currentlabel{#1}
  \let\label\orglabel
  \orgdescriptionlabel{#1}
}
\numberwithin{equation}{section}
\newtheorem{theorem}{Theorem}[section]
\newtheorem{corollary}[theorem]{Corollary}
\newtheorem{definition}[theorem]{Definition}
\newtheorem{example}[theorem]{Example}
\newtheorem{lemma}[theorem]{Lemma}
\newtheorem{proposition}[theorem]{Proposition}
\newtheorem{remark}[theorem]{Remark}
\newenvironment{proof}[1][Proof]{\textbf{#1.} }{\ \rule{0.5em}{0.5em}}
\title{A Unified Approach to Systemic Risk Measures via Acceptance Sets}
\date{\today }
\author{Francesca Biagini\thanks{Department of Mathematics, University of Munich, Theresienstra{\ss}e 39, 80333 Munich, Germany. {\em francesca.biagini@math.lmu.de}.}
\and Jean-Pierre Fouque \thanks{Department of Statistics \& Applied Probability,
 University of California,
        Santa Barbara, CA 93106-3110, {\em fouque@pstat.ucsb.edu}. Work  supported by NSF grants
DMS-1107468 and DMS-1409434.} \and Marco Frittelli\thanks{Dipartimento di Matematica,
Universit\`a degli Studi di Milano,
Via Saldini 50,
20133 Milano, Italy, {\em marco.frittelli@unimi.it}. } \and
Thilo Meyer-Brandis\thanks{Department of Mathematics, University of Munich, Theresienstra{\ss}e 39, 80333 Munich, Germany.{\em meyerbr@math.lmu.de}. Francesca Biagini and Thilo Meyer-Brandis thank the Frankfurt Institute for Risk Management and
Regulation and  the Europlace Institute of Finance for support. \newline
Part of this research was performed while F. Biagini, M. Frittelli and T.
Meyer-Brandis were visiting the University of California Santa Barbara, and
the paper was finalized while the four authors were visiting the Institute
for Pure and Applied Mathematics (IPAM), which is supported by the National
Science Foundation.}}
\begin{document}

\maketitle

\begin{abstract}
The financial crisis has dramatically demonstrated that the traditional approach to apply univariate monetary risk measures to single institutions does not capture sufficiently the perilous systemic risk that is generated by the interconnectedness of the system entities and the corresponding contagion effects. This has brought awareness of the urgent need for novel approaches that capture systemic riskiness. The purpose of this paper is to specify a general methodological framework that is flexible enough to cover a wide range of possibilities to design systemic risk measures via multi-dimensional acceptance sets and aggregation functions, and to study corresponding examples. Existing systemic risk measures can usually be interpreted as the minimal amount of cash needed to secure the system \emph{after aggregating} individual risks. In contrast, our approach also includes systemic risk measures that can be interpreted as the minimal amount of cash  that secures the aggregated system by allocating capital to the single institutions \emph{before aggregating} the individual risks. This allows for a possible ranking of the institutions in terms of systemic riskiness measured by the optimal allocations. 
Another important feature of our
approach is the possibility of allocating cash according to the future
state of the system (scenario-dependent allocation). We illustrate with
several examples the advantages of
this feature. We also provide conditions which ensure monotonicity,
convexity, or quasi-convexity properties of our systemic risk measures.

\end{abstract}

\noindent {\bf Keywords}: Systemic risk, risk measures, acceptance set, aggregation.\newline
\noindent {\bf  Mathematics Subject Classification (2010):} 60A99; 91B30; 91G99; 93D99.\\ \vspace{2mm}

\section{Introduction}

A large part of the current literature on systemic financial risk is concerned
with the modeling structure of financial networks and the analysis of the
contagion and the spread of a potential exogenous (or even endogenous)
shock into the system. For a given financial (possibly random) network and
a given random shock one then determines the \textquotedblleft
cascade\textquotedblright\ mechanism which generates possibly many defaults.
This mechanism often requires a detailed description of the balance sheet of
each institution; assumptions on the interbank network and exposures, on the
recovery rate at default, on the liquidation policy; the analysis of direct
liabilities, bankruptcy costs, cross-holdings, leverage structures, fire
sales, and liquidity freezes. 

Among the many contributions we mention here the classical contagion model
proposed by \cite{EisenbergNoe}, the default model of \cite{GaiKapadia}, the illiquidity cascade models of \cite{GaiKapadia2}, \cite{HurdCellaiMelnikShao} and \cite{SeungHwanLee}, the asset fire
sale cascade model by \cite{CifuentesFerrucciShin} and \cite{CaccioliShresthaMooreFarmer}, as well as the  model  in \cite{AwiszusWeber} that additionally includes cross-holdings. For an exhaustive reference on the literature we defer the
reader to the recent volume: \textquotedblleft Contagion! The Spread of
Systemic Risk in Financial Networks\textquotedblright, \cite{Hurd}. 

These approaches may be relevant also from the viewpoint of a policy maker
that has to intervene and regulate the banking system to reduce the risk
that, in case of an adverse (local) shock, a substantial part or even the
complete system breaks down. 

However, once such a model for the financial network has been identified and
the mechanism for the spread of the contagion determined, one still has to
understand how to compare the possible final outcomes in a reasonable way
or, in other words, how to measure the risk carried by the global financial
system.  This is the focus of our approach, as we measure the risk embedded
in a financial system taking as primitive a vector $\mathbf{X}=(X^{1},\ldots
,X^{N})$ of positions, where $X^{i}$ represents the position of
institution $i$.
Our approach is very close in spirit to the \textquotedblleft
classical\textquotedblright\ conceptual framework initiated by the seminal
paper by \cite{ArtznerDelbaenEberHeath} and that has been recently adopted also to analyze systemic
risk by \cite{ChenIyengarMoallemi}, \cite{KromerOverbeckZilch} and \cite{HoffmannMeyer-BrandisSvindland}. \newline
We recall this classical approach, in the case of one single institution, by
the following two quotes from \cite{ArtznerDelbaenEberHeath}:

\

\textquotedblleft \textit{The basic objects of our study shall therefore be
the random variables on the set of states of nature at a future date,
interpreted as possible future values of positions or portfolios currently
held.\textquotedblright }

...

\textit{\textquotedblleft These measures of risk can be used as (extra)
capital requirements to regulate the risk assumed by market participants,
traders, and insurance underwriters, as well as to allocate existing
capital.\textquotedblright }

\bigskip

\noindent Of course one main difference is that we have to take into consideration not
just one single institution but the global system and in this paper we will
illustrate how to achieve this in an appropriate way. 
We interpret $X^{i}$ as the profits and losses of institution $i$ at a future time $T$, 
precisely as the gain if $X^{i}$ is
positive or as the loss if $X^{i}$ is negative. Such  profit and loss is typically uncertain and therefore it will be modeled by
a random variable $X^{i}(\omega )$ on some space of possible scenarios $\omega \in \Omega$. 

To summarize, we consider  the random vector $\mathbf{X}=$ $(X^{1},\ldots,X^{N})$ as primitive: One may interpret it as a  \textquotedblleft reduced form model\textquotedblright\  of a complex financial system without reference to a specific structural network model, and consequently $\mathbf{X}$ may already comprehend the potential risk of a contagion spread into the system. Otherwise one may also interpret $\mathbf{X}$ as the net worth of the positions before the contagion takes place and then the contagion mechanism will be embedded in the risk measure via an aggregation function (as in the model of \cite{EisenbergNoe}).  Either way, our scope is to provide a consistent criterion to asses wether one possible vector $\mathbf{X}$ is riskier than another.

\subsection{From one-dimensional to N-dimensional risk profiles}

In this subsection we review the literature on risk measurement based on acceptable sets, both in the traditional one-dimensional
setting as well as in the case of $N$ interacting financial institutions.
Here we denote with $\mathcal{L}^{0}(\mathbb{R}^{N}):=\mathcal{L}^{0}(\Omega
,\mathcal{F};\mathbb{R}^{N})$, $N\in \mathbb{N}$, the space of $\mathbb{R}%
^{N}$-valued random variables on the probability space $(\Omega ,\mathcal{F},\mathbb{P})$.

Traditional risk management strategies of financial systems have predominantly focused on the solvency of individual
institutions as if they were in isolation. A typical approach is to evaluate the risk $\eta (X^i)$ of each institution $i\in \{1,...N\}$ by applying a  \textit{univariate monetary risk measure} $\eta$ to the single financial positions. A monetary risk measure (see \cite{FollmerSchied2}) is a map $\eta :$ $\mathcal{L}^{0}(\mathbb{R})\rightarrow \mathbb{R}$  that can be interpreted as the minimal capital needed to secure a financial position with payoff $X\in \mathcal{L}^{0}(\mathbb{R})$,  i.e.~the minimal amount $m\in \mathbb{R}$ that must be added to $X$ in order to make the resulting (discounted) payoff at time $T$ acceptable: 
\begin{equation}
\eta (X):= \inf \{m\in \mathbb{R}\mid X+m\in \mathbb{A}\},  \label{1}
\end{equation}%
where the acceptance set $\mathbb{A}%
\subseteq \mathcal{L}^{0}(\mathbb{R})$ is assumed to be monotone, i.e. $%
X\geq Y\in \mathbb{A}$ implies $X\in \mathbb{A}$. In addition to decreasing
monotonicity, the characterizing feature of these maps is the cash
additivity property:
\begin{equation}\label{CAdd}
\eta (X+m)=\eta (X)-m\text{, for all }m\in \mathbb{R}\text{.}
\end{equation}%
Under the assumption that the set $\mathbb{A}$ is convex (resp. is a convex
cone) the maps in (\ref{1}) are convex (resp. convex and positively
homogeneous) and are called \textit{convex} (resp. coherent) \textit{risk
measures}, see \cite{ArtznerDelbaenEberHeath}, \cite{FollmerSchied}, \cite{FrittelliRosazza}. The principle that diversification should not
increase the risk is mathematically translated not necessarily with the
convexity property but with the weaker condition of quasiconvexity:
\begin{equation*}
\eta (\lambda X+(1-\lambda )Y)\leq \eta (X)\vee \eta (Y).
\end{equation*}%
As a result, in \cite{Cerreia-Vioglio} and \cite{FrittelliMaggis}, the only properties assumed in the definition of a
\textit{quasi-convex risk measure} are monotonicity and quasiconvexity. Such
risk measures can always be written as:%
\begin{equation}
\eta (X):= \inf \{m\in \mathbb{R}\mid X\in \mathbb{A}^{m}\},
\label{11}
\end{equation}%
where each set $\mathbb{A}^{m}\subseteq \mathcal{L}^{0}(\mathbb{R})$ is
monotone and convex, for each $m$. Here $\mathbb{A}^{m}$ is interpreted as
the class of payoffs carrying the same risk level $m$. Contrary to the
convex, cash additive case where each random variable is binary cataloged as
acceptable or as not acceptabe, in the quasi-convex case one admits various
degrees of acceptability, described by the risk level $m$, see \cite{ChernyMadan}.
Furthermore, in the quasi-convex case the cash additivity property will not
hold in general and one looses a direct interpretation of $m$ as the minimal
capital required to secure the payoff $X$, but preserves the interpretation
of $\mathbb{A}^{m}$ as the set of positions acceptable for the given risk
level $m$. By selecting $\mathbb{A}^{m}:=\mathbb{A-}m$, the risk measure in (%
\ref{1}) is clearly a particular case of the one in (\ref{11}).

However, the financial crisis has dramatically demonstrated that the traditional approach to apply univariate monetary risk measures to single institutions does not capture sufficiently the perilous systemic risk that is generated by the interconnectedness of the system entities and the corresponding contagion effects. This has brought awareness of the urgent need for novel approaches that capture systemic riskiness, and a rapidly growing literature
is concerned with designing more appropriate risk measures for financial systems. A \textit{systemic risk measure} is then a map $\rho :$ $\mathcal{L}^{0}(\mathbb{R}^N)\rightarrow \mathbb{R}$ that evaluates the risk $\rho (\mathbf{X})$ of the complete system $\mathbf{X}$ of financial positions. Most of the systemic risk measures in the existing literature are of the form
\begin{equation}
\rho (\mathbf{X})=\eta (\Lambda (\mathbf{X})),  \label{112}
\end{equation}%
where $\eta :$ $\mathcal{L}^{0}(\mathbb{R})\rightarrow \mathbb{R}$  is a
univariate risk measure and
\begin{equation*}
\Lambda :\mathbb{R}^{N}\rightarrow \mathbb{R}
\end{equation*}
is an aggregation rule that aggregates the $N$-dimensional risk factor $%
\mathbf{X}$ into a univariate risk factor $\Lambda (\mathbf{X})$
representing the total risk in the system. Some examples of aggregation rules found in the literature are the following:

\begin{itemize}
\item In general, one of the most common ways to aggregate multivariate risk is to simply sum the single risk factors: $\Lambda (\mathbf{x})=\sum_{i=1}^{N}x_{i}, \,\mathbf{x}=(x_1,...,x_N) \in \mathbb{R}^{N}$. Also in the literature on systemic risk measures there are examples using this aggregation rule, like for example the \textit{Systemic Expected Shortfall} introduced in \cite{Acharya}, or the \textit{Contagion Value at Risk (CoVaR)} introduced in \cite{AdrianBrunnermeier}. However, while summing up profit and loss positions might be reasonable from the viewpoint of a portfolio manager where the portfolio components compensate each other, this aggregation rule seems inappropriate for a financial system where cross-subsidization between institutions is rather unrealistic. Further, if the sum was a suitable aggregation of risk in financial systems, then the traditional approach of applying a univariate coherent risk measure $\eta$ to the single risk factors would be sufficiently prudential in the sense that by sub-linearity it holds that $\eta(\sum_{i=1}^{N}X_{i}) \le \sum_{i=1}^{N}\eta(X_{i})$.

\item One possible aggregation that takes the lack of cross-subsidization between financial institutions into account is to sum up losses only: $\Lambda (\mathbf{x})=\sum_{i=1}^{N}-x_{i}^{-}$. This kind of aggregation is for example used in \cite{Huang}, \cite{Lehar}. See also \cite{BrunnermeierCheridito} for an extension of this type of aggregation rule that also considers a certain effect of gains, as $\Lambda (\mathbf{x})=\sum_{i=1}^{N}-\alpha_i x_{i}^{-} + \sum_{i=1}^{N}\beta_i(x_{i}-v_i)^{+}$ for some $\alpha_i, \beta_i, v_i \in\mathbb{R}^{+}$, $i=1,\cdots,N$.


\item Beside the lack of cross-subsidization in a financial system, the
aggregation rule may also accounts for contagion effects that can
considerably accelerate systemwide losses resulting from an initial shock.
Motivated by the structural contagion model of \cite{EisenbergNoe}, in \cite%
{ChenIyengarMoallemi} they introduce an aggregation function that explicitly
models the net systemic cost of the contagion in a financial system by
defining the aggregation rule 

\begin{equation*}
\Lambda _{CM}(x)=\min_{y_{i}\geq x_{i}+\sum_{j=1}^{N}\Pi
_{ij}y_{j},\forall i=1,\cdots ,N,\,y\in \mathbb{R}_{+}^{N}}\left\{
\sum_{i=1}^{N}y_{i}\right\} \,.
\end{equation*}%
Here,  $\Pi =(\Pi _{ij})_{i,j=1,\cdots ,N}$ represents the relative liability matrix, i.e.firm $i$ has to pay the proportion $\Pi _{ij}$ of its total
liabilities to firm $j$. \\
In the literature there are various extensions of the structural contagion model of \cite{EisenbergNoe} and the corresponding aggregation rule that take into account further contagion channels of systemic risk such as effects from firesales or liquidity freezes, see e.g. \cite{AminiFilipovicMinca}, \cite{AwiszusWeber}, \cite{CifuentesFerrucciShin},  \cite{GaiKapadia}.
\end{itemize}

\noindent An axiomatic characterization of systemic risk measures of the form \eqref{112} on a finite state space is provided in \cite{ChenIyengarMoallemi}, see also \cite{KromerOverbeckZilch} for the extension to a general probability space and \cite{HoffmannMeyer-BrandisSvindland} for a further extension to a conditional setting. Also, in these references further examples of possible aggregation functions can be found.  Our framework may accommodate also such aggregation functions, provided these
satisfy the (simple) conditions outlined in Section \ref{one_dim_set}.

If $\eta$ in (\ref{112}) is a monetary risk measure it follows from (\ref{1}) that we can rewrite the systemic risk measure $\rho $ in (\ref{112}) as
\begin{equation}
\rho (\mathbf{X}):= \inf \{m\in \mathbb{R}\mid \Lambda (\mathbf{X}%
)+m\in \mathbb{A}\}\,.  \label{2}
\end{equation}
Thus, presuming $\Lambda (\mathbf{X})$ represents some loss, systemic risk can again be interpreted as the minimal cash amount that secures the system when it is added to the total aggregated system loss $\Lambda (\mathbf{X})$. If $\Lambda (\mathbf{X})$ does not allow for an interpretation as cash, the risk measure in \eqref{2} has to be understood as some general risk level of the system rather than some capital requirement. Similarly, if $\eta $ is a quasi-convex risk measure the systemic risk measure $\rho $ in (\ref{112}) can be rewritten as%
\begin{equation}\label{QCFA}
\rho (\mathbf{X}):= \inf \{m\in \mathbb{R}\mid \Lambda (\mathbf{X}%
)\in \mathbb{A}^{m}\}.
\end{equation}%
Again one first aggregates the risk factors via the function $\Lambda $ and
in a second step one computes the minimal risk level associated to $\Lambda (\mathbf{X})$. \newline
While the approach prescribed in \eqref{2} and \eqref{QCFA} defines an interesting class of systemic risk measures, one could think of meaningful alternative or extended procedures of measuring systemic risk not captured by  \eqref{2} or \eqref{QCFA}. The purpose of this paper is to specify a general methodological framework that is flexible enough to cover a wide range of possibilities to design systemic risk measures via acceptance sets and aggregation functions and to study corresponding examples. In the following subsections we extend the conceptual framework for systemic risk measures via acceptance sets step by step in order to gradually include certain novel key features of our approach.






\subsection{First add capital, then aggregate} 
The interpretation of \eqref{2} to measure systemic risk as minimal capital needed to secure the system \textit{after aggregating individual risks} is for example meaningful in the situation where some kind of rescue fund shall be installed to repair damage from systemic loss. However, for instance from the viewpoint of a regulator that has the possibility to intervene on the level of the single institutions before contagion effects generate further losses it might be more relevant to measure systemic risk as the minimal capital that secures the aggregated system by injecting the capital into the single institutions \textit{before aggregating the individual risks}. This way of measuring
systemic risk can be expressed by
\begin{equation} 
\rho (\mathbf{X}):= \inf \{\sum_{i=1}^{N}m_{i}\mid
\mathbf{m}=(m_{1},...,m_{N})\in \mathbb{R}^{N},\,\Lambda (\mathbf{X}+\mathbf{%
m})\in \mathbb{A}\}\,.  \label{3}
\end{equation}
Here, the amount $m_i$ is added to the financial position $X^i$ of institution $i\in\{1,...,N\}$ before the corresponding total loss $\Lambda (\mathbf{X}+\mathbf{m})$ is computed. For example, considering the aggregation function $\Lambda_{CM}$ from above it becomes clear that injecting cash first might prevent further losses that would be generated by contagion effects. The systemic risk is then measured as the minimal total amountl $\sum_{i=1}^{N}m_{i}$ injected into the institutions to secure the system.
\footnote{Independently a related concept in the context of set-valued systemic risk measures has been  developed in \cite{FeinsteinRudloffWeber}.}\\
Another interesting feature of the approach in \eqref{3} is that it  delivers at the same time a measure of total systemic risk as well as a potential ranking of the institutions in terms of systemic riskiness. Indeed, for $\mathbf{X}$ given, let $\mathbf{m^*}=(m^*_{1},...,m^*_{N})$ be such that $\rho (\mathbf{X})=\sum_{i=1}^{N}m^*_{i}$ and denote the ordered cash allocations by $m^*_{i_1}\ge...\ge m^*_{i_N}$. Then, one could argue that the risk factor $X^{i_1}$ that requires the biggest cash allocation $m^*_{i_1}$ corresponds to the systemic riskiest institution, $X^{i_2}$ corresponds to the systemic second riskiest institution, and so on. Of course, such allocation $\mathbf{m^*}$ does not need to be unique, in which case one has to discuss criteria that justify the choice of a specific allocation.

\subsection{First add scenario-dependent allocation, then aggregate} 
One
main novelty of this paper is that we want to allow for the possibility
of adding to $\mathbf{X}$ not merely a vector $\mathbf{m}=(m_{1},...,m_{N})%
\in \mathbb{R}^{N}$ of cash but a random vector
$$\mathbf{Y}\in \mathcal{C}\subseteq \mathcal{L}^{0}(\mathbb{R}^{N})$$
which represents admissible
assets with possibly random payoffs at time $T$, in the spirit of \cite{FrittelliScandolo}. To each $\mathbf{Y}\in \mathcal{C}$ we
assign a measure $\pi (\mathbf{Y})$ of the risk (or cost) associated to $%
\mathbf{Y}$ determined by a monotone increasing map
\begin{equation}\label{CashSet}
\pi :\mathcal{C}\rightarrow \mathbb{R}\,.
\end{equation}%
This leads to the following extension of \eqref{3}:
\begin{equation}
\rho (\mathbf{X}):= \inf \{\pi (\mathbf{Y})\in \mathbb{R}\mid
\mathbf{Y}\in \mathcal{C}, \,\Lambda (\mathbf{X}+\mathbf{Y})\in \mathbb{A}\}\,.
\label{4}
\end{equation}
Note that in order to establish a ranking of the institutions in a system $\mathbf{X}$ in terms of systemic riskiness implied by a $\mathbf{Y^*}=(Y^*_1,...,Y^*_N)\in \mathcal{C}$ with $\rho (\mathbf{X})=\pi (\mathbf{Y^*})$ in analogy to the ranking process described above implied by a deterministic $\mathbf{m^*}=(m^*_{1},...,m^*_{N})$ one now first has to introduce an ordering of the $Y^*_1,...,Y^*_N$. For example, one could say $X^i$ is systemic riskier than $X^j$ if $E[Y^*_i]>E[Y^*_j]$, presumed the expectations $E[Y^*_i]$, $i=1,...,N$ are well defined. \newline
Considering a general set $\mathcal{C}$ in \eqref{4} allows for more general
measurement of systemic risk than the cash needed today for each institution to secure the 
system. For example, $\mathcal{C}$ could be a set of (vectors of) general
admissible financial assets that can be
used to secure a system by adding $\mathbf{Y}$ to $\mathbf{X}$
component-wise, and $\pi (\mathbf{Y})$ is a valuation of $\mathbf{Y}$.
Another example that we focus on in this paper and which is particularly interesting from the viewpoint of a lender of last resort is the following class of sets $\mathcal{C}$:
\begin{equation}\label{def:RandCap}
\mathcal{C}\subseteq \{\mathbf{Y}\in \mathcal{L}^{0}(\mathbb{R}^{N})\mid
\sum_{n=1}^{N}Y^{n}\in \mathbb{R}\}=:\mathcal{C}_{\mathbb{R}},
\end{equation}%
and $\pi (\mathbf{Y})=\sum_{n=1}^{N}Y^{n}$.  Here the notation $\sum_{n=1}^{N}Y^{n}\in \mathbb{R}$ means that $\sum_{n=1}^{N}Y^{n}$ is equal to some deterministic constant in $\mathbb{R}$, even though each single $Y^n$, $n=1,\cdots,N$, is a random variable. Then, as in \eqref{3} the systemic risk measure
\begin{equation} \label{CashRM}
\rho (\mathbf{X}):= \inf \{\sum_{n=1}^{N}Y^{n}\mid \mathbf{Y}\in
\mathcal{C}, \,\Lambda (\mathbf{X}+\mathbf{Y})\in \mathbb{A}\}
\end{equation}%
can still be interpreted as the minimal total cash amount $\sum_{n=1}^{N}Y^{n}\in \mathbb{R}$ needed today to secure the system by
distributing the cash at the future time $T$ among the components of the
risk vector $\mathbf{X}$. However, contrary to \eqref{3}, in general the allocation $Y^i(\omega)$ to institution $i$ does not need to be decided today but depends on the scenario $\omega$ that has been realized at time $T$. This corresponds to the situation of a lender of last resort who is equipped with a certain amount of cash today and who will allocate it according to where it serves the most depending on the scenario that has been realized. Restrictions on the possible distributions of cash are given by the set $\mathcal{C}$. For example, for $\mathcal{C}=\mathbb{R}^{N}$ the situation corresponds to \eqref{3} where the distribution is already determined today, while for $\mathcal{C}=\mathcal{C}_{\mathbb{R}}$ the distribution can be chosen completely freely depending on the scenario $\omega$ that has been realized (including negative amounts, i.e. withdrawals of cash from certain components). \newline
Section \ref{sec:RandCash}, \ref{sec:Gaussian}, and \ref{finite_space} will be devoted to the analysis and concrete examples of the class of systemic risk measures using a set $\mathcal{C}$ as in \eqref{def:RandCap}. We will see that in the case $\mathcal{C}=\mathcal{C}_{\mathbb{R}}$ where unrestricted cross-subsidization is possible the canonical way of measuring systemic risk measure is of the form \eqref{112} with aggregation rule $\Lambda (\mathbf{x})=\sum_{i=1}^{N}x_{i},\, \mathbf{x}\in\mathbb{R}^{N}$, i.e. to apply a univariate risk measure to the sum of the risk factors. Another interesting feature of allowing scenario depending allocations of cash $\mathbf{Y} \in \mathcal{C} \subseteq \mathcal{C}_{\mathbb{R}}$ is that in general the systemic risk measure will take the dependence structure of the components of  $\mathbf{X}$ into account even though acceptable positions might be defined in terms of the marginal distributions of $X^i$, $i=1,...,N$ only. For instance, the example in Section \ref{sec:Gaussian} employs the aggregation rule $\Lambda (\mathbf{x})=\sum_{i=1}^{N}-x_{i}^{-},\, \mathbf{x}\in\mathbb{R}^{N}$, and the acceptance set $\mathbb{A}_{\gamma}:=\{Z\in\mathcal{L}^{0}(\mathbb{R})\mid E[Z]\ge\gamma\}\,,\gamma\in\mathbb{R}$. Then a risk vector $\mathbf{Z}=(Z_1,...,Z_N) \in\mathcal{L}^{0}(\mathbb{R}^{N})$ is acceptable if and only if $\Lambda (\mathbf{Z})\in\mathbb{A}$, i.e.
$$
\sum_{i=1}^{N}-E[Z_{i}^{-}]\ge\gamma\,,
$$
which only depends on the marginal distributions of $\mathbf{Z}$. Thus, if we choose $\mathcal{C}=\mathbb{R}^{N}$ then it is obvious that in this case also the systemic risk measure $\rho(\mathbf{X})$ in \eqref{CashRM} depends on the marginal distributions of $\mathbf{X}$ only. If, however, one allows for more general allocations $\mathbf{Y} \in \mathcal{C} \subseteq \mathcal{C}_{\mathbb{R}}$ that might differ from scenario to scenario the systemic risk measure will in general depend on the multivariate distribution of $\mathbf{X}$ since it can play on the dependence of the components of $\mathbf{X}$ to minimize the costs.

\subsection{Multi-dimensional Acceptance Sets}
Until now we have always defined systemic risk measures in terms of acceptability of an aggregated, one-dimensional loss figure. However, not necessarily every relevant systemic risk measure is of this aggregated type. Consider for instance the popular approach (though possibly problematic for financial systems as explained above) to add single univariate monetary risk measures $\eta _{i},\, i=1,...,N$, i.e.
\begin{equation}\label{SRMAdd}
\rho (\mathbf{X}):= \sum_{i=1}^N \eta _{i}(X^{i})\,.
\end{equation}%
In general, the systemic risk measure in \eqref{SRMAdd} cannot be expressed in the
form (\ref{4}). Denoting by $\mathbb{A}_{i}\subseteq \mathcal{L}^{0}(\mathbb{R})$ the acceptance set of $\eta _{i},\, i=1,...,N$, one easily sees from \eqref{1}, however, that $\rho$ in \eqref{SRMAdd} can be written in terms of the multivariate acceptance set $\mathbb{A}_{1}\times ...\times \mathbb{A}_{N}$:
\begin{equation*}
\rho (\mathbf{X}):= \inf \{\sum_{i=1}^N m_{i}\mid \,\mathbf{m}=(m_1,...,m_N)\in \mathbb{R}^N,\ \mathbf{X}+
\mathbf{m}\in \mathbb{A}_{1}\times ...\times \mathbb{A}_{N}\}\,.
\end{equation*}
Motivated by this example, we extend (\ref{4}) to the formulation of systemic risk measures as the minimal cost of admissible asset vectors $\mathbf{Y}\in \mathcal{C}$ that, when added to the vector of financial positions $\mathbf{X}$, makes the augmented financial positions $\mathbf{X}+\mathbf{Y}$ acceptable in terms of a general multidimensional acceptance set $\mathcal{A}\subseteq \mathcal{L}^{0}(\mathbb{R}^{N})$:
\begin{equation}
\rho (\mathbf{X}):= \inf \{\pi (\mathbf{Y})\in \mathbb{R}\mid
\mathbf{Y}\in \mathcal{C},\,\mathbf{X}+\mathbf{Y}\in \mathcal{A}\}\,.
\label{5}
\end{equation}
Note that by putting $\mathcal{A}:=\left\{\mathbf{Z}\in \mathcal{L}^{0}(\mathbb{R}^{N})\mid \Lambda(\mathbf{Z})\in \mathbb{A}\right\}$ Definition (\ref{4}) is a special case of \eqref{5}. Also, in analogy to \eqref{CAdd}, we remark that for linear valuation rules $\pi $ the systemic risk measure given in \eqref{5} exhibits an extended type of cash invariance in the sense that
\begin{equation} \label{CashAdd}
\rho (\mathbf{X}+\mathbf{Y})=\rho(\mathbf{X})+\pi (\mathbf{Y})
\end{equation}%
for $\mathbf{Y}\in \mathcal{C}$ such that $\mathbf{Y^{\prime }}\pm \mathbf{Y}%
\in \mathcal{C}$ for all $\mathbf{Y^{\prime }}\in \mathcal{C}$, see \cite{FrittelliScandolo}.

\subsection{Degree of Acceptability}
In order to reach the final, most general formulation of systemic risk measures, we assign, in analogy to \eqref{11}, to each $\mathbf{Y}\in \mathcal{C}$ a set $\mathcal{A}^{\mathbf{Y}}\subseteq \mathcal{L}^{0}(\mathbb{%
R}^{N})$ of risk vectors that are acceptable for the given
(random) vector $\mathbf{Y}$, and define the systemic risk measure by:
\begin{equation}
\rho (\mathbf{X}):= \inf \{\pi (\mathbf{Y})\in \mathbb{R}\mid
\mathbf{Y}\in \mathcal{C},\,\mathbf{X}\in \mathcal{A}^{\mathbf{Y}}\}\,.
\label{6}
\end{equation}
Note that analogously to the one-dimensional quasi-convex case \eqref{11}, the systemic risk measures \eqref{6} cannot necessarily be interpreted as cash added to the system but in general represents some minimal aggregated risk level $\pi (\mathbf{Y})$ at which the system $\mathbf{X}$ is acceptable. The approach in (\ref{6}) is very flexible and unifies a variety of different features in the
design of systemic risk measures. In particular, it includes all previous cases if we set
\begin{equation*}
\mathcal{A}^{\mathbf{Y}}:=\mathcal{A}-\mathbf{Y},
\end{equation*}%
where the set $\mathcal{A}\subseteq \mathcal{L}^{0}(\mathbb{R}^{N})$
represents acceptable risk vectors. Then obviously (\ref{5}) is obtained from (\ref{6}).

Another advantage of the formulation in terms of general acceptance
sets is the possibility to design systemic risk measures via general aggregation rules. Indeed the formulation  \eqref{6} includes the case
\begin{equation} \label{GenAggr}
\rho (\mathbf{X}):= \inf \{\pi (\mathbf{Y})\in \mathbb{R}\mid
\mathbf{Y}\in \mathcal{C},\,\Theta (\mathbf{X},\mathbf{Y)}\in \mathbb{A}\},
\end{equation}%
where $\Theta :\mathcal{L}^{0}(\mathbb{R}^{N})\times \mathcal{C}\rightarrow
\mathcal{L}^{0}(\mathbb{R})$ denotes some aggregation function  jointly in $\mathbf{X}$ and $\mathbf{Y}$. Just select $\mathcal{A}^{\mathbf{Y}}:=\left\{ \mathbf{Z}\in \mathcal{L}^{0}(\mathbb{R}^{N})\mid \Theta (\mathbf{Z},\mathbf{%
Y)}\in \mathbb{A}\right\} .$ In particular, \eqref{GenAggr} includes both the case  \textquotedblleft injecting capital before aggregation\textquotedblright  as in (\ref{3})  and \eqref{4} by putting
$\Theta (\mathbf{X},\mathbf{Y)=}\Lambda (\mathbf{X}+\mathbf{Y})$, and the case \textquotedblleft  aggregation before injecting capital\textquotedblright  as in (\ref{2}) by putting $\Theta (\mathbf{X},\mathbf{Y):=}%
\Lambda _{1}(\mathbf{X)+}\Lambda _{2}(\mathbf{Y),}$ where $\Lambda _{1}:%
\mathcal{L}^{0}(\mathbb{R}^{N})\rightarrow \mathcal{L}^{0}(\mathbb{R})$ is
an aggregation function and $\Lambda _{2}:\mathcal{C}\rightarrow \mathcal{L}%
^{0}(\mathbb{R})$ could be, for example, the discounted cost of $\mathbf{Y}$. \newline
Also, again in analogy to the one-dimensional case \eqref{11}, the more general dependence of
the acceptance set on $\mathbf{Y}$ in (\ref{6}) allows for multi-dimensional quasi-convex risk measures. Note that the cash additivity property \eqref{CashAdd} is then lost in general.\newline
The remainder of the paper is organized as follows. In the next Section we structure and lay the theoretical foundations of the approach to systemic risk measures motivated and outlined above. In particular, we provide reasonable conditions on the ingredients $\mathcal{C}$, $\pi $, $\mathcal{A}$, $\mathbb{A}$, and $\Lambda $ such that the above definitions of $\rho $ are well posed and $\rho $ has the natural properties of decreasing monotonicity
and quasi-convexity (or convexity). In Section \ref{sec:Aggr} we analyze the situation and give various families of systemic risk measures when the risk measurement is defined in terms of the natural approach to apply some kind of aggregation to risk factors and test acceptability with respect to some one-dimensional acceptance set as in \eqref{GenAggr} above. Section \ref{sec:RandCash} investigates the interesting class of systemic risk measures that are defined in terms of a set $\mathcal{C}$ of scenario-dependent allocations as in \eqref{def:RandCap}. Then we present two concrete examples within this class of systemic risk measures in Sections \ref{sec:Gaussian} and \ref{finite_space}. In Section \ref{sec:Gaussian} we look Gaussian systems and consider both deterministic cash allocations as well as a certain class of random cash allocations. Further, we apply the results to a particular Gaussian system where the flow of money between the institutions (borrowing and lending) is modeled by a system of interacting diffusions (see \cite{CarmonaFouqueSun}). In Section \ref{finite_space} we introduce an example on a finite probability space. As a consequence of the finite probability space we are able to compute systemic risk measures for very general random cash allocations $\mathcal{C} \subseteq \mathcal{C}_{\mathbb{R}}$.


\section{Definition of Systemic Risk Measures and Properties\label{sec:GenSetting}}

In this Section we provide the definitions and properties of the systemic risk measures in our setting. As in the Introduction, we
consider the set of random vectors 
\begin{equation*}
\mathcal{L}^{0}(\mathbb{R}^{N}):=\{\mathbf{X}=(X^{1},\ldots ,X^{N})\mid X^{n}\in
\mathcal{L}^{0}(\Omega ,\mathcal{F}, \mathbb{P}),\;n=1,\cdots,N\},
\end{equation*}
on the probability space $(\Omega ,\mathcal{F}, \mathbb{P})$. 
 We assume that $%
\mathcal{L}^{0}(\mathbb{R}^{N})$ is equipped with an order relation $\succeq $ such that it is a
vector lattice. One such example is provided by the order relation: $\mathbf{X}%
_{1}\succeq \mathbf{X}_{2}$ if $X_{1}^{i}\geq X_{2}^{i}$ for all components $%
i=1,...,N$, where for random variables on $\mathcal{L}^{0}(\mathbb{R})$, the order relation is determined by $\mathbb{P}-$a.s
inequality.

\begin{definition}
Let $\mathbf{X}_{1},$ $\mathbf{X}_{2}\in \mathcal{L}^{0}(\mathbb{R}^{N})$.
\begin{enumerate}
\item A set $\mathcal{A}\subset \mathcal{L}^{0}(\mathbb{R}^{N})$ is $\succeq $
-monotone if $\mathbf{X}_{1}\in \mathcal{A}$ and $\mathbf{X}_{2}\succeq 
\mathbf{X}_{1}$ implies $\mathbf{X}_{2}\in \mathcal{A}$. 
\item A map $f:\mathcal{L}^{0}(\mathbb{R}^{N})\rightarrow \mathcal{L}^{0}(\mathbb{R
})$\ is $\succeq $-monotone decreasing if $\mathbf{X}_{2}\succeq \mathbf{X}
_{1}$ implies $f(\mathbf{X}_{1})\geq f(\mathbf{X}_{2}).$ Analogously for
functions $f:\mathcal{L}^{0}(\mathbb{R}^{N})\rightarrow \overline{\mathbb{R}}
$. 
\item A map $f:\mathcal{L}^{0}(\mathbb{R}^{N})\rightarrow \overline{\mathbb{R}}$\
is quasi-convex if 
\begin{equation*}
f(\lambda \mathbf{X}_{1}+(1-\lambda )\mathbf{X}_{2})\leq f(\mathbf{X}%
_{1})\vee f(\mathbf{X}_{2}).
\end{equation*}
\end{enumerate}
\end{definition}
A vector $\mathbf{X}=(X^{1},\ldots ,X^{N})\in \mathcal{L}^{0}(\mathbb{R}^{N})
$ denotes a configuration of risky factors at a future time $T$ associated
to a system of $N$ entities. Let 
\begin{equation*}
\mathcal{C}\subseteq \mathcal{L}^{0}(\mathbb{R}^{N}).
\end{equation*}%
To each $\mathbf{Y}\in \mathcal{C}$ we assign a set $\mathcal{A}^{%
\mathbf{Y}}\subseteq \mathcal{L}^{0}(\mathbb{R}^{N})$. The set $\mathcal{A}^{%
\mathbf{Y}}$ represents the  risk vectors $
\mathbf{X}$ that are acceptable for the given random vector $\mathbf{Y}$. Let also consider 
a map 
\begin{equation*}
\pi :\mathcal{C}\rightarrow \mathbb{R}\,,
\end{equation*}%
so that $\pi (\mathbf{Y})$ represents the risk (or cost) associated to $\mathbf{Y}$. \newline
We now introduce the concept of monotone and (quasi-) convex systemic risk measure.
\begin{definition}
The \textit{systemic risk measure} associated with $ \mathcal{C}, \mathcal{A}^{
\mathbf{Y}}$ and $\pi$ is  a map $\rho :\mathcal{L}^{0}(\mathbb{R}%
^{N})\rightarrow \overline{\mathbb{R}}:=\mathbb{R}\cup \left\{ -\infty
\right\} \cup \left\{ \infty \right\} $, defined by: 
\begin{equation}
\rho (\mathbf{X}):= \inf \{\pi (\mathbf{Y})\in \mathbb{R}\mid 
\mathbf{Y}\in \mathcal{C},\,\mathbf{X}\in \mathcal{A}^{\mathbf{Y}}\}\,.
\label{00}
\end{equation}
Moreover $\rho$ is called a quasi-convex (resp. convex) systemic risk measure
if it is $\succeq $-monotone decreasing and quasi-convex (resp. convex on $
\left\{ \rho(\mathbf{X)<+\infty }\right\} $).
\end{definition}
In other words, the systemic risk of a random vector $\mathbf{X}$ is measured by the
minimal risk (cost) of those random vectors $\mathbf{Y}$ that make $\mathbf{X
}$ acceptable. \\
As already sketched in the Introduction, we now focus on several examples of
systemic risk measures of the type (\ref{00}). To guarantee that such maps
are finite valued one could consider their restriction to some vector
subspaces of $\mathcal{L}^{0}(\mathbb{R}^{N})$ (for examples $\mathcal{L}
^{p}(\mathbb{R}^{N}),$ $p\in \lbrack 1,\infty ]$) and impose further
conditions on the defining ingredients ($\pi ,$ $\mathcal{C}$, $\mathcal{A}^{
\mathbf{Y}}$) of $\rho $. For example, suppose that $\mathcal{C}$  and $\mathcal{A}^{
\mathbf{Y}}$ satisfy the two conditions
\begin{align*}
&\left\{ m\mathbf{1\in }\mathbb{R}\mathbf{^{N}\mid }\text{ }m\in 
\mathbb{R}_{+}\text{, }\mathbf{1:=(}1\mathbf{,...,}1\mathbf{)}\right\}
\subseteq \mathcal{C}, \\
&-m\mathbf{1}\in \mathcal{A}^{m\mathbf{1}} \textrm{\ and \ } \mathcal{A}^{m\mathbf{
1}} \textrm{\ is a monotone set for each \ } m\in \mathbb{R}_{+},
\end{align*}
then $\rho :\mathcal{L}^{\infty }(\mathbb{R}^{N})\rightarrow 
\overline{\mathbb{R}}$ defined by (\ref{00}) satisfies $\rho (\mathbf{X}%
)<+\infty $ for all $\mathbf{X}\in \mathcal{L}^{\infty }(\mathbb{R}^{N})$.
Indeed, for $m:=\max_{i}\|X^{i}\| _{\infty }$, $\mathbf{X}%
\geq -m\mathbf{1}\in \mathcal{A}^{m\mathbf{1}}$ implies that $\mathbf{X\in }
\mathcal{A}^{m\mathbf{1}}$ and $\pi (m\mathbf{1})<+\infty$. 
\newline
Clearly, other sufficient conditions may be obtained in each specific
example of systemic risk measures considered in the subsequent Sections. \newline
We opt to accept the possibility that such maps $\rho $ may assume values $
\pm \infty $. However, it is not difficult to find simple sufficient
conditions assuring that the systemic risk measure in (\ref{00}) is proper
(not identically equal to $+\infty )$. One such example is the condition:

\

\centerline{
if $\mathbf{0}\in \mathcal{C}$ and $\mathbf{0}\in \mathcal{A}^{\mathbf{
0}}$ then $\rho (\mathbf{0})\leq \pi (\mathbf{0})<+\infty .$}

\

\noindent We now consider the \textquotedblleft structural
properties\textquotedblright\ (i.e. monotonicity, quasiconvexity, convexity)
of our systemic risk measures and introduce two sets of conditions
(properties  \ref{condmon}, \ref{condqc} and \ref{condc} below 
and the alternative properties \ref{condqc2} and  \ref{condc2}) that 
guarantee that the map in (\ref{00}) is a quasi-convex (or convex) risk
measure. In Section \ref{sec:Aggr} we show that these sets of
conditions can be easily checked in some relevant examples of maps in the
form (\ref{00}), where the set $\mathcal{A}^{\mathbf{Y}}$ is determined from
aggregation and one-dimensional acceptance sets. \newline
We  introduce the following properties:
\begin{description}
\item[(P1)\label{condmon}]For all $\mathbf{Y}\in \mathcal{C}$ the set $%
\mathcal{A}^{\mathbf{Y}}\subset \mathcal{L}^{0}(\mathbb{R}^{N})$ is $\succeq 
$-monotone.

\item[(P2)\label{condqc}] For all $m\in \mathbb{R}$, for all $\mathbf{Y}_{1},%
\mathbf{Y}_{2}\in \mathcal{C}$ such that $\pi (\mathbf{Y}_{1})\leq m$ and $%
\pi (\mathbf{Y}_{2})\leq m$ and for all $\mathbf{X}_{1}\in \mathcal{A}^{%
\mathbf{Y}_{1}}$, $\ \mathbf{X}_{2}\in \mathcal{A}^{\mathbf{Y}_{2}}$ and all 
$\lambda \in \lbrack 0,1]$ there exists $\mathbf{Y}\in \mathcal{C}$ such
that $\pi (\mathbf{Y})\leq m$ and $\lambda \mathbf{X}_{1}+(1-\lambda )%
\mathbf{X}_{2}\in \mathcal{A}^{\mathbf{Y}}$.

\item[(P3)\label{condc}] For all $\mathbf{Y}_{1},\mathbf{Y}_{2}\in\mathcal{C}$
and all $\mathbf{X}_{1}\in \mathcal{A}^{\mathbf{Y}_{1}}$, $\ \mathbf{X}%
_{2}\in \mathcal{A}^{\mathbf{Y}_{2}}$ and all $\lambda \in \lbrack 0,1]$
there exists $\mathbf{Y}\in \mathcal{C}$ such that $\pi (\mathbf{Y})\leq
\lambda \pi(\mathbf{Y_1}) + (1-\lambda)\pi(\mathbf{Y_2})$ and $\lambda 
\mathbf{X}_{1}+(1-\lambda )\mathbf{X}_{2}\in \mathcal{A}^{\mathbf{Y}}$.
\end{description}
It is clear that property \ref{condc} implies property \ref{condqc}.
Moreover, we have:

\begin{lemma}
\label{lem:rm}

\begin{enumerate}

\item[i)] If  the systemic risk measure  $\rho $
defined in (\ref{00}) satisfies  the properties \ref{condmon} and \ref{condqc}, then $\rho$ is $\succeq $-monotone decreasing and quasi-convex.

\item[ii)]If  the systemic risk measure  $\rho $
defined in (\ref{00}) satisfies  the properties  \ref{condmon} and \ref{condc}, then $\rho $
 is $\succeq $-monotone decreasing and convex on $
\left\{ \rho (\mathbf{X)<+\infty }\right\} $.
\end{enumerate}
\end{lemma}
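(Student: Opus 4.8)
The plan is to obtain all three conclusions directly from the definition \eqref{00} of $\rho$ as an infimum, using that properties \ref{condmon}, \ref{condqc}, \ref{condc} are designed precisely to transfer, respectively, monotonicity, quasi-convexity and convexity from the ``pointwise'' level of the sets $\mathcal{A}^{\mathbf{Y}}$ to the level of $\rho$. The common mechanism is: given a near-optimal admissible $\mathbf{Y}$ for each argument, the relevant property produces an admissible $\mathbf{Y}$ for the convex combination whose cost $\pi(\mathbf{Y})$ is controlled, and then one passes to the infimum.

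For monotonicity (needed in both i) and ii)) I would fix $\mathbf{X}_{1}\preceq \mathbf{X}_{2}$ and take an arbitrary $\mathbf{Y}\in\mathcal{C}$ admissible for $\mathbf{X}_{1}$, i.e.\ with $\mathbf{X}_{1}\in\mathcal{A}^{\mathbf{Y}}$. Property \ref{condmon} ($\succeq$-monotonicity of $\mathcal{A}^{\mathbf{Y}}$) gives $\mathbf{X}_{2}\in\mathcal{A}^{\mathbf{Y}}$, so the same $\mathbf{Y}$ is admissible for $\mathbf{X}_{2}$ and hence $\rho(\mathbf{X}_{2})\leq\pi(\mathbf{Y})$; taking the infimum over all $\mathbf{Y}$ admissible for $\mathbf{X}_{1}$ yields $\rho(\mathbf{X}_{2})\leq\rho(\mathbf{X}_{1})$.

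For quasi-convexity under \ref{condmon} and \ref{condqc}: fix $\mathbf{X}_{1},\mathbf{X}_{2}$ and $\lambda\in[0,1]$. The claim $\rho(\lambda\mathbf{X}_{1}+(1-\lambda)\mathbf{X}_{2})\leq\rho(\mathbf{X}_{1})\vee\rho(\mathbf{X}_{2})$ is trivial when the right-hand side is $+\infty$, so let $m\in\mathbb{R}$ with $m>\rho(\mathbf{X}_{1})\vee\rho(\mathbf{X}_{2})$. By definition of the infimum there are $\mathbf{Y}_{1},\mathbf{Y}_{2}\in\mathcal{C}$ with $\mathbf{X}_{j}\in\mathcal{A}^{\mathbf{Y}_{j}}$ and $\pi(\mathbf{Y}_{j})\leq m$ for $j=1,2$. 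Property \ref{condqc} then provides $\mathbf{Y}\in\mathcal{C}$ with $\pi(\mathbf{Y})\leq m$ and $\lambda\mathbf{X}_{1}+(1-\lambda)\mathbf{X}_{2}\in\mathcal{A}^{\mathbf{Y}}$, so $\rho(\lambda\mathbf{X}_{1}+(1-\lambda)\mathbf{X}_{2})\leq m$; letting $m$ decrease to $\rho(\mathbf{X}_{1})\vee\rho(\mathbf{X}_{2})$ (or $m\to-\infty$ if the latter is $-\infty$) proves i).

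For convexity under \ref{condmon} and \ref{condc}: take $\mathbf{X}_{1},\mathbf{X}_{2}\in\{\rho<+\infty\}$, $\lambda\in[0,1]$, and fix $\varepsilon>0$; choose $\mathbf{Y}_{j}\in\mathcal{C}$ admissible for $\mathbf{X}_{j}$ with $\pi(\mathbf{Y}_{j})\leq\rho(\mathbf{X}_{j})+\varepsilon$, with the obvious modification (e.g.\ $\pi(\mathbf{Y}_{j})\leq-1/\varepsilon$) when $\rho(\mathbf{X}_{j})=-\infty$. Property \ref{condc} yields $\mathbf{Y}\in\mathcal{C}$ with $\lambda\mathbf{X}_{1}+(1-\lambda)\mathbf{X}_{2}\in\mathcal{A}^{\mathbf{Y}}$ and $\pi(\mathbf{Y})\leq\lambda\pi(\mathbf{Y}_{1})+(1-\lambda)\pi(\mathbf{Y}_{2})\leq\lambda\rho(\mathbf{X}_{1})+(1-\lambda)\rho(\mathbf{X}_{2})+\varepsilon$, hence $\rho(\lambda\mathbf{X}_{1}+(1-\lambda)\mathbf{X}_{2})\leq\lambda\rho(\mathbf{X}_{1})+(1-\lambda)\rho(\mathbf{X}_{2})+\varepsilon$; letting $\varepsilon\downarrow 0$ gives the convexity inequality, which also shows $\{\rho<+\infty\}$ is convex. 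I expect no conceptual obstacle here; the only points requiring care are the bookkeeping around the infimum (moving between strict and non-strict inequalities when extracting the near-optimal $\mathbf{Y}_{j}$) and the degenerate cases $\rho(\mathbf{X}_{j})\in\{\pm\infty\}$, all of which are handled by the limiting arguments above.
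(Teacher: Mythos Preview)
Your proposal is correct and follows essentially the same approach as the paper's proof: monotonicity by showing every admissible $\mathbf{Y}$ for $\mathbf{X}_1$ is admissible for $\mathbf{X}_2$, and (quasi-)convexity by picking near-optimal $\mathbf{Y}_j$ for each $\mathbf{X}_j$, invoking \ref{condqc} or \ref{condc} to produce an admissible $\mathbf{Y}$ for the convex combination with controlled cost, and then passing to the limit. The only cosmetic difference is that the paper organizes the quasi-convexity limiting argument via an $\varepsilon$ (working with $\pi(\mathbf{Y}_i)\leq m+\varepsilon$ and letting $\varepsilon\to 0$) rather than via a strict $m>\rho(\mathbf{X}_1)\vee\rho(\mathbf{X}_2)$ as you do, but the content is identical.
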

\begin{proof}
Set
\begin{equation*}
B(\mathbf{X}):=\left\{ \mathbf{Y}\in C\mid \mathbf{X}\in \mathcal{A}^{%
\mathbf{Y}}\right\} .
\end{equation*}%
First assume that property \ref{condmon} holds and w.l.o.g. suppose $\mathbf{%
X}_{2}\succeq \mathbf{X}_{1}$ and $B(\mathbf{X}_{1})\neq \varnothing $. Then
property \ref{condmon} implies that if $\mathbf{X}_{1}\in \mathcal{A}^{Y}$
and $\mathbf{X}_{2}\succeq \mathbf{X}_{1}$ then: $B(\mathbf{X}_{1})\subseteq
B(\mathbf{X}_{2})$. Hence:%
\begin{equation*}
\rho (\mathbf{X}_{1})=\inf \{\pi (\mathbf{Y})\mid \mathbf{Y}\in B(\mathbf{X}%
_{1})\}\geq \inf \{\pi (\mathbf{Y})\mid \mathbf{Y}\in B(\mathbf{X}%
_{2})\}=\rho (\mathbf{X}_{2})
\end{equation*}%
so that $\rho $ is $\succeq $-monotone decreasing.
\begin{enumerate}
\item[i)] Now assume that property \ref{condqc} holds and let $\mathbf{X}_{1},%
\mathbf{X}_{2}\in \mathcal{L}^{0}(\mathbb{R}^{N})$ be arbitrarily chosen.
For the quasi-convexity we need to prove, for any $m\in \mathbb{R}$, that:%
\begin{equation*}
\rho (\mathbf{X}_{1})\leq m\text{ and }\rho (\mathbf{X}_{2})\leq
m\Rightarrow \rho (\lambda \mathbf{X}_{1}+(1-\lambda )\mathbf{X}_{2})\leq m.
\end{equation*}%
By definition of the infimum in the definition of $\rho (\mathbf{X}_{i})$, $%
\forall \varepsilon >0$ there exist $\mathbf{Y}_{i}\in \mathcal{C}$ such
that $\mathbf{X}_{i}\in \mathcal{A}^{\mathbf{Y}_{i}}$ and%
\begin{equation*}
\pi (\mathbf{Y}_{i})\leq \rho (\mathbf{X}_{i})+\varepsilon \leq
m+\varepsilon, \ i=1,2.
\end{equation*}%
Take any $\lambda \in \lbrack 0,1]$. Property \ref{condqc} guarantees the
existence of $\mathbf{Z}\in \mathcal{C}$ such that $\pi (\mathbf{Z})\leq
m+\varepsilon $ and $\lambda \mathbf{X}_{1}+(1-\lambda )\mathbf{X}_{2}\in $ $%
\mathcal{A}^{\mathbf{Z}}$. Hence 
\begin{eqnarray*}
\rho (\lambda \mathbf{X}_{1}+(1-\lambda )\mathbf{X}_{2}) &=&\inf \{\pi (%
\mathbf{Y})\mid \mathbf{Y}\in \mathcal{C},\,\lambda \mathbf{X}%
_{1}+(1-\lambda )\mathbf{X}_{2}\in \mathcal{A}^{\mathbf{Y}}\} \\
&\leq &\pi (\mathbf{Z})\leq m+\varepsilon .
\end{eqnarray*}%
As this holds for any $\varepsilon >0$, we obtain the quasi-convexity.

\item[ii)]Assume that property \ref{condc} holds and that $\mathbf{X}_{1},\mathbf{X%
}_{2}\in \mathcal{L}^{0}(\mathbb{R}^{N})$ satisfy $\rho (\mathbf{X}_{i}%
\mathbf{)<+\infty }$. Then $B(\mathbf{X}_{i})\neq \varnothing $ and, as
before, $\forall \varepsilon >0$ there exists $\mathbf{Y}_{i}\in \mathcal{L}%
^{0}(\mathbb{R}^{N})$ such that: $\mathbf{Y}_{i}\in \mathcal{C}$, $\mathbf{X}%
_{i}\in \mathcal{A}^{\mathbf{Y}_{i}}$ and%
\begin{equation}
\pi (\mathbf{Y}_{i})\leq \rho (\mathbf{X}_{i})+\varepsilon \text{, }i=1,2.
\label{pirho}
\end{equation}%
By property \ref{condc} there exists $\mathbf{Z}\in \mathcal{C}$ such that $%
\pi (\mathbf{Z})\leq \lambda \pi (\mathbf{Y_{1}})+(1-\lambda )\pi (\mathbf{%
Y_{2}})$ and $\lambda \mathbf{X}_{1}+(1-\lambda )\mathbf{X}_{2}\in $ $%
\mathcal{A}^{\mathbf{Z}}$. Hence 
\begin{eqnarray*}
\rho (\lambda \mathbf{X}_{1}+(1-\lambda )\mathbf{X}_{2}) &=&\inf \{\pi (%
\mathbf{Y})\mid \mathbf{Y}\in \mathcal{C},\,\lambda \mathbf{X}%
_{1}+(1-\lambda )\mathbf{X}_{2}\in \mathcal{A}^{\mathbf{Y}}\} \\
&\leq &\pi (\mathbf{Z})\leq \lambda \pi (\mathbf{Y}_{1})+(1-\lambda )\pi (%
\mathbf{Y}_{2}) \\
&\leq &\lambda \rho (\mathbf{X}_{1})+(1-\lambda )\rho (\mathbf{X}%
_{2})+\epsilon ,
\end{eqnarray*}%
from (\ref{pirho}). As this holds for any $\varepsilon >0$, the map $\rho $
is convex on $\left\{ \rho (\mathbf{X)<+\infty }\right\}$.
\end{enumerate}
\end{proof}

\noindent We now consider the following alternative properties:

\begin{description}
\item[(P2a)\label{condqc2}] For all $\mathbf{Y}_{1},\mathbf{Y}_{2}\in \mathcal{C%
},$ $\mathbf{X}_{1}\in \mathcal{A}^{\mathbf{Y}_{1}}$, $\ \mathbf{X}_{2}\in 
\mathcal{A}^{\mathbf{Y}_{2}}\ $and $\lambda \in \lbrack 0,1]$ there exists $%
\alpha \in \lbrack 0,1]$ such that $\lambda \mathbf{X}_{1}+(1-\lambda )%
\mathbf{X}_{2}\in $ $\mathcal{A}^{\alpha \mathbf{Y}_{1}+(1-\alpha )\mathbf{Y}%
_{2}}$.

\item[(P3a)\label{condc2} ] For all $\mathbf{Y}_{1},\mathbf{Y}_{2}\in \mathcal{C%
},$ $\mathbf{X}_{1}\in \mathcal{A}^{\mathbf{Y}_{1}}$, $\ \mathbf{X}_{2}\in 
\mathcal{A}^{\mathbf{Y}_{2}}\ $and $\lambda \in \lbrack 0,1]$ it holds: $%
\lambda \mathbf{X}_{1}+(1-\lambda )\mathbf{X}_{2}\in $ $\mathcal{A}^{\lambda 
\mathbf{Y}_{1}+(1-\lambda )\mathbf{Y}_{2}}$.
\end{description}

\noindent It is clear that property \ref{condc2} implies  property \ref{condqc2}.
Furthermore we introduce the following properties for 
$\mathcal{C}$ and $\pi$:

\begin{description}
\item[(P4)\label{condCc}] $\mathcal{C}$ is convex,


\item[(P5)\label{condpiqc}] $\pi $ is quasi-convex,

\item[(P6)\label{condpic}] $\pi $ is convex.
\end{description}

\noindent We have the following:

\begin{lemma}
\label{Lem2a3a}
\begin{enumerate}

\item[i)]  Under the conditions \ref{condmon}, \ref{condqc2}, \ref{condCc}, and \ref%
{condpiqc} the map $\rho $ defined in (\ref{00}) is a quasi-convex systemic risk
measure.

\item[ii)]  Under the conditions \ref{condmon}, \ref{condc2}, \ref{condCc} and \ref%
{condpic} the map $\rho $ defined in (\ref{00}) is a convex systemic risk measure.
\end{enumerate}
\end{lemma}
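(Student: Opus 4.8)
The plan is to reduce the statement to Lemma~\ref{lem:rm} by showing that the ``$\alpha$-mixing'' properties \ref{condqc2} and \ref{condc2}, combined with convexity of $\mathcal{C}$ and the appropriate (quasi-)convexity of $\pi$, imply properties \ref{condqc} and \ref{condc} respectively. Since property \ref{condmon} is assumed in both parts, once \ref{condqc} (resp.\ \ref{condc}) has been verified, part i) (resp.\ part ii)) of Lemma~\ref{lem:rm} immediately gives that $\rho$ is $\succeq$-monotone decreasing and quasi-convex (resp.\ convex on $\{\rho(\mathbf{X})<+\infty\}$), which is the desired conclusion.

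\textbf{Part i).} Fix $m\in\mathbb{R}$ and data $\mathbf{Y}_1,\mathbf{Y}_2\in\mathcal{C}$ with $\pi(\mathbf{Y}_1)\le m$, $\pi(\mathbf{Y}_2)\le m$, elements $\mathbf{X}_1\in\mathcal{A}^{\mathbf{Y}_1}$, $\mathbf{X}_2\in\mathcal{A}^{\mathbf{Y}_2}$, and $\lambda\in[0,1]$, as in the hypothesis of property \ref{condqc}. Apply property \ref{condqc2} to produce $\alpha\in[0,1]$ with $\lambda\mathbf{X}_1+(1-\lambda)\mathbf{X}_2\in\mathcal{A}^{\mathbf{Y}}$, where I set $\mathbf{Y}:=\alpha\mathbf{Y}_1+(1-\alpha)\mathbf{Y}_2$. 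Convexity of $\mathcal{C}$ (property \ref{condCc}) gives $\mathbf{Y}\in\mathcal{C}$, and quasi-convexity of $\pi$ (property \ref{condpiqc}) gives $\pi(\mathbf{Y})\le\pi(\mathbf{Y}_1)\vee\pi(\mathbf{Y}_2)\le m$. This is exactly property \ref{condqc}, so Lemma~\ref{lem:rm} i) applies.

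\textbf{Part ii).} With the analogous data (now without any constraint on $\pi(\mathbf{Y}_i)$), property \ref{condc2} yields $\lambda\mathbf{X}_1+(1-\lambda)\mathbf{X}_2\in\mathcal{A}^{\mathbf{Y}}$ with $\mathbf{Y}:=\lambda\mathbf{Y}_1+(1-\lambda)\mathbf{Y}_2$. Again $\mathbf{Y}\in\mathcal{C}$ by property \ref{condCc}, and convexity of $\pi$ (property \ref{condpic}) gives $\pi(\mathbf{Y})\le\lambda\pi(\mathbf{Y}_1)+(1-\lambda)\pi(\mathbf{Y}_2)$, which is precisely property \ref{condc}; Lemma~\ref{lem:rm} ii) then finishes.

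\textbf{Main point to watch.} The argument is a routine verification, so there is essentially no obstacle; the only subtlety is in part i). Property \ref{condqc} only asks for the \emph{existence} of some admissible $\mathbf{Y}\in\mathcal{C}$ with $\pi(\mathbf{Y})\le m$ and $\lambda\mathbf{X}_1+(1-\lambda)\mathbf{X}_2\in\mathcal{A}^{\mathbf{Y}}$, so one may legitimately use the $\alpha$-mixture supplied by \ref{condqc2} even though in general $\alpha\neq\lambda$; the bound $\pi(\alpha\mathbf{Y}_1+(1-\alpha)\mathbf{Y}_2)\le\pi(\mathbf{Y}_1)\vee\pi(\mathbf{Y}_2)$ coming from quasi-convexity of $\pi$ along the segment $[\mathbf{Y}_1,\mathbf{Y}_2]$ holds for \emph{every} $\alpha\in[0,1]$, so the mismatch is harmless. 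In part ii) it is instead essential that \ref{condc2} uses the \emph{same} weight $\lambda$ for the $\mathbf{Y}$'s as for the $\mathbf{X}$'s, since only then does convexity of $\pi$ deliver the linear bound $\lambda\pi(\mathbf{Y}_1)+(1-\lambda)\pi(\mathbf{Y}_2)$ required by \ref{condc}.
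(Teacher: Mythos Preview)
Your proof is correct and follows essentially the same route as the paper: reduce to Lemma~\ref{lem:rm} by showing that \ref{condqc2}$+$\ref{condCc}$+$\ref{condpiqc} imply \ref{condqc} (via $\mathbf{Y}=\alpha\mathbf{Y}_1+(1-\alpha)\mathbf{Y}_2$) and that \ref{condc2}$+$\ref{condCc}$+$\ref{condpic} imply \ref{condc} (via $\mathbf{Y}=\lambda\mathbf{Y}_1+(1-\lambda)\mathbf{Y}_2$). Your added remark about why the mismatch $\alpha\neq\lambda$ is harmless in part~i) but the equality of weights is essential in part~ii) is accurate and a nice clarification not spelled out in the paper.
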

\begin{proof}

 \begin{enumerate}

\item[i)]  It follows from Lemma \ref{lem:rm} and the fact that the properties \ref%
{condqc2}, \ref{condCc}, and \ref{condpiqc} imply \ref{condqc}. Indeed, let $%
\mathbf{Y}_{1},\mathbf{Y}_{2}\in \mathcal{C}$ such that $\pi (\mathbf{Y}%
_{1})\leq m$, $\pi (\mathbf{Y}_{2})\leq m$ and let $\mathbf{X}_{1}\in 
\mathcal{A}^{\mathbf{Y}_{1}}$, $\ \mathbf{X}_{2}\in \mathcal{A}^{\mathbf{Y}%
_{2}}\ $ and $\lambda \in \lbrack 0,1]$. Then there exists $\alpha \in
\lbrack 0,1]$ such that $\lambda \mathbf{X}_{1}+(1-\lambda )\mathbf{X}%
_{2}\in $ $\mathcal{A}^{\alpha \mathbf{Y}_{1}+(1-\alpha )\mathbf{Y}_{2}}$.
If we set: $\mathbf{Y:=}\alpha \mathbf{Y}_{1}+(1-\alpha )\mathbf{Y}_{2}\in 
\mathcal{C}$ then $\lambda \mathbf{X}_{1}+(1-\lambda )\mathbf{X}_{2}\in $ $%
\mathcal{A}^{\mathbf{Y}}$ and $\pi (\alpha \mathbf{Y}_{1}+(1-\alpha )\mathbf{%
Y}_{2})\leq \max (\pi (\mathbf{Y}_{1}),\pi (\mathbf{Y}_{2}))\leq m$.

\item[ii)]  It follows from Lemma \ref{lem:rm} and the fact that the properties 
\ref{condc2}, \ref{condCc}, and \ref{condpic} imply \ref{condc}. Indeed, let 
$\mathbf{Y}_{1},\mathbf{Y}_{2}\in \mathcal{C}$ and let $\mathbf{X}_{1}\in 
\mathcal{A}^{\mathbf{Y}_{1}}$, $\ \mathbf{X}_{2}\in \mathcal{A}^{\mathbf{Y}%
_{2}}\ $ and $\lambda \in \lbrack 0,1]$. If we set: $\mathbf{Y:=}\lambda 
\mathbf{Y}_{1}+(1-\lambda )\mathbf{Y}_{2}\in \mathcal{C}$ then $\lambda 
\mathbf{X}_{1}+(1-\lambda )\mathbf{X}_{2}\in $ $\mathcal{A}^{\mathbf{Y}}$
and $\pi (\mathbf{Y})\leq \lambda \pi (\mathbf{Y_{1}})+(1-\lambda )\pi (%
\mathbf{Y_{2}})$.
\end{enumerate}
\end{proof}


\section{Systemic Risk Measures via Aggregation and One-dimensional
Acceptance Sets \label{one_dim_set}}

\label{sec:Aggr}

In this Section we study four classes of systemic risk measures in the form \eqref{00}, which differ from each other by the definition of their
aggregation functions and their acceptance sets. However, these four classes,
defined in equations (\ref{0011}), (\ref{0012}), (\ref{0013}), (\ref{0014}),
all satisfy the structural properties of monotonicity and quasi-convexity (or
convexity). We consider the following definitions and assumptions, which will hold true throughout this Section: 
\begin{enumerate}
\item the aggregation functions are
\begin{eqnarray*}
\Lambda &:&\mathcal{L}^{0}(\mathbb{R}^{N})\times \mathcal{C}\rightarrow 
\mathcal{L}^{0}(\mathbb{R}), \\
\Lambda _{1} &:&\mathcal{L}^{0}(\mathbb{R}^{N})\rightarrow \mathcal{L}^{0}(\mathbb{R}),
\end{eqnarray*}%
and we assume that $\Lambda _{1}$ is $\succeq $-increasing and concave; 
\item the acceptance family 
\begin{equation*}
(\mathcal{B}^{x})_{x\in \mathbb{R}}
\end{equation*}
is an increasing family with respect to $x$ and each set $\mathcal{B}^{x}\subseteq \mathcal{L}^{0}(\mathbb{%
R})$ is assumed monotone and convex;

\item the acceptance subset  
\begin{equation*}
\mathbb{A}\subseteq \mathcal{L}^{0}(\mathbb{R}).
\end{equation*}
is assumed monotone and convex.
\end{enumerate}
The convexity of the
acceptance set $\mathbb{A}\subseteq \mathcal{L}^{0}(\mathbb{R})$ (or of the
acceptance family $(\mathcal{B}^{x})_{x\in \mathbb{R}}$) are the standard
conditions which have been assumed since the origin of the theory of risk
measures. The concavity of the aggregation functions is justified, not only
from the many relevant examples in literature, but also by the preservation of the
convexity from one dimensional acceptance sets to multi-dimensional ones.
Indeed, let $\Theta :\mathcal{L}^{0}(\mathbb{R}^{N})\rightarrow \mathcal{L}%
^{0}(\mathbb{R})$ be an aggregation function$,$ $\mathbb{A}\subseteq 
\mathcal{L}^{0}(\mathbb{R})$ a one dimensional acceptance set and define $%
\mathcal{A}\subseteq \mathcal{L}^{0}(\mathbb{R}^{N})$ as the inverse image $%
\mathcal{A}:=\Theta ^{-1}(\mathbb{A})$. Suppose that $\Theta $ is increasing
and concave. Then one may easily check that if $\mathbb{A}$ is monotone and
convex, then $\mathcal{A}$ is monotone and convex. \\
We note that in all results of this Section, the selection of the set $%
\mathcal{C}\subseteq \mathcal{L}^{0}(\mathbb{R}^{N})$ of permitted vectors
is left as general as possible (in some cases we require the convexity of $%
\mathcal{C}$ and only in Proposition \ref{prop:qc3} we further ask that $%
\mathcal{C}+\mathbb{R}_{+}^{N}\in \mathcal{C}$). Therefore, we are very
flexible in the choice of $\mathcal{C}$ and we may interpret its elements as
vectors of admissible or safe financial assets, or merely as cash vectors.
Only in the next Section we attribute a particular structure to $%
\mathcal{C}$.\\
In the conclusive statements of the following propositions in this section we 
apply Lemma \ref{lem:rm} and Lemma \ref{Lem2a3a} without explicit mention.
\begin{proposition}
Let
\begin{equation}
\mathcal{A}^{\mathbf{Y}}:=\left\{ \mathbf{Z}\in \mathcal{L}^{0}(\mathbb{R}%
^{N})\mid \Lambda (\mathbf{Z},\mathbf{Y)}\in \mathbb{A}\right\} ,\text{\quad 
}\mathbf{Y\in }\mathcal{C},  \label{AA}
\end{equation}%
$\Lambda $ be concave, and $\Lambda (\mathbf{\cdot },\mathbf{Y)}$ be $\succeq 
$-increasing for all $\mathbf{Y}\in \mathcal{C}$. Then $\mathcal{A}^{\mathbf{%
Y}}$ satisfies properties \ref{condmon} and \ref{condc2} (and \ref{condqc2}%
). The map $\rho $ defined in (\ref{00}) is given by 
\begin{equation}
\rho (\mathbf{X}):= \inf \{\pi (\mathbf{Y})\in \mathbb{R}\mid 
\mathbf{Y}\in \mathcal{C},\,\Lambda (\mathbf{X},\mathbf{Y)}\in \mathbb{A}\}\,,
\label{0011}
\end{equation}%
and is a quasi-convex systemic risk measure, under the assumptions \ref{condCc} and %
\ref{condpiqc}; it is a convex systemic risk measure under the assumptions \ref{condCc} and \ref{condpic}.
\end{proposition}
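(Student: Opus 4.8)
The plan is to verify that the acceptance family $(\mathcal{A}^{\mathbf{Y}})_{\mathbf{Y}\in\mathcal{C}}$ defined in \eqref{AA} satisfies properties \ref{condmon} and \ref{condc2}, to observe that with this choice of $\mathcal{A}^{\mathbf{Y}}$ the map \eqref{00} reduces to \eqref{0011}, and then to read off the structural conclusions from Lemma \ref{Lem2a3a}. For \ref{condmon} I would fix $\mathbf{Y}\in\mathcal{C}$, take $\mathbf{X}_1\in\mathcal{A}^{\mathbf{Y}}$ and $\mathbf{X}_2\succeq\mathbf{X}_1$, and combine two facts: $\Lambda(\cdot,\mathbf{Y})$ is $\succeq$-increasing, so $\Lambda(\mathbf{X}_2,\mathbf{Y})\geq\Lambda(\mathbf{X}_1,\mathbf{Y})$ in $\mathcal{L}^0(\mathbb{R})$, and $\mathbb{A}$ is monotone, so $\Lambda(\mathbf{X}_2,\mathbf{Y})\in\mathbb{A}$, i.e. $\mathbf{X}_2\in\mathcal{A}^{\mathbf{Y}}$.

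For \ref{condc2} I would take $\mathbf{Y}_1,\mathbf{Y}_2\in\mathcal{C}$, $\mathbf{X}_i\in\mathcal{A}^{\mathbf{Y}_i}$ for $i=1,2$, and $\lambda\in[0,1]$, set $\mathbf{Y}_\lambda:=\lambda\mathbf{Y}_1+(1-\lambda)\mathbf{Y}_2$ (which lies in $\mathcal{C}$ by convexity of $\mathcal{C}$, a convexity already implicit in the domain on which $\Lambda$ is assumed concave and in any case part of \ref{condCc}), and use joint concavity of $\Lambda$ to obtain
\begin{equation*}
\Lambda\bigl(\lambda\mathbf{X}_1+(1-\lambda)\mathbf{X}_2,\ \mathbf{Y}_\lambda\bigr)\ \geq\ \lambda\,\Lambda(\mathbf{X}_1,\mathbf{Y}_1)+(1-\lambda)\,\Lambda(\mathbf{X}_2,\mathbf{Y}_2)\quad\mathbb{P}\text{-a.s.}
\end{equation*}
The right-hand side belongs to $\mathbb{A}$ by convexity of $\mathbb{A}$ (recall $\Lambda(\mathbf{X}_i,\mathbf{Y}_i)\in\mathbb{A}$), and then the left-hand side belongs to $\mathbb{A}$ by monotonicity of $\mathbb{A}$, which is exactly the assertion $\lambda\mathbf{X}_1+(1-\lambda)\mathbf{X}_2\in\mathcal{A}^{\mathbf{Y}_\lambda}$. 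Since \ref{condc2} implies \ref{condqc2} (as remarked before Lemma \ref{Lem2a3a}), this in particular gives \ref{condqc2}.

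Identifying $\rho$ is then immediate: $\mathbf{X}\in\mathcal{A}^{\mathbf{Y}}\iff\Lambda(\mathbf{X},\mathbf{Y})\in\mathbb{A}$ turns \eqref{00} into \eqref{0011}. Finally I would apply Lemma \ref{Lem2a3a}(i), with the additionally assumed \ref{condCc} and \ref{condpiqc}, to conclude that $\rho$ is a quasi-convex systemic risk measure, and Lemma \ref{Lem2a3a}(ii), with \ref{condCc} and \ref{condpic}, to conclude that it is a convex systemic risk measure. The only step carrying any content is the verification of \ref{condc2}: concavity of $\Lambda$ only places the convex combination of the $\Lambda(\mathbf{X}_i,\mathbf{Y}_i)$ \emph{below} $\Lambda(\lambda\mathbf{X}_1+(1-\lambda)\mathbf{X}_2,\mathbf{Y}_\lambda)$, so one genuinely needs both the convexity of $\mathbb{A}$ (to capture that convex combination) and the monotonicity of $\mathbb{A}$ (to climb up to the larger random variable) — neither property alone suffices, and this is also the precise point at which concavity of $\Lambda$, rather than mere quasi-concavity, is essential. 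Everything else is just unwinding definitions.
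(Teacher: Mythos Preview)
Your proof is correct and follows essentially the same route as the paper: verify \ref{condmon} via monotonicity of $\Lambda(\cdot,\mathbf{Y})$ and of $\mathbb{A}$, verify \ref{condc2} via joint concavity of $\Lambda$ combined with convexity and then monotonicity of $\mathbb{A}$, and conclude with Lemma~\ref{Lem2a3a}. Your parenthetical remark that $\mathbf{Y}_\lambda\in\mathcal{C}$ requires convexity of $\mathcal{C}$ (implicit in the domain of the concave $\Lambda$ and part of \ref{condCc}) is a point the paper leaves tacit, and your closing comment on why both convexity and monotonicity of $\mathbb{A}$ are genuinely needed is a nice clarification.
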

\begin{proof}
Property \ref{condmon}: Let $\mathbf{X}_{1}\in \mathcal{A}^{\mathbf{Y}}$ and 
$\mathbf{X}_{2}\succeq \mathbf{X}_{1}$. Note that $\mathbf{X}_{1}\in 
\mathcal{A}^{\mathbf{Y}}$ implies $\Lambda (\mathbf{X}_{1},\mathbf{Y)}\in 
\mathbb{A}$ and $\mathbf{X}_{2}\succeq \mathbf{X}_{1}$ implies $\Lambda (%
\mathbf{X}_{2},\mathbf{Y)\geq }\Lambda (\mathbf{X}_{1},\mathbf{Y)}$. Since $%
\mathbb{A}$ is monotone we have $\Lambda (\mathbf{X}_{2},\mathbf{Y)}\in 
\mathbb{A}$\ and $\mathbf{X}_{2}\in \mathcal{A}^{\mathbf{Y}}.$\\
Property \ref{condc2}: Let $\mathbf{Y}_{1},\mathbf{Y}_{2}\in \mathcal{C},$ $%
\mathbf{X}_{1}\in \mathcal{A}^{\mathbf{Y}_{1}}$, $\ \mathbf{X}_{2}\in 
\mathcal{A}^{\mathbf{Y}_{2}}\ $and $\lambda \in \lbrack 0,1]$. Then $\Lambda
(\mathbf{X}_{1},\mathbf{Y}_{1}\mathbf{)}\in \mathbb{A}$ and $\Lambda (%
\mathbf{X}_{2},\mathbf{Y}_{2}\mathbf{)}\in \mathbb{A}$ and the convexity of $%
\mathbb{A}$ guarantees: 
\begin{equation*}
\lambda \Lambda (\mathbf{X}_{1},\mathbf{Y}_{1}\mathbf{)+(}1-\lambda \mathbf{)%
}\Lambda (\mathbf{X}_{2},\mathbf{Y}_{2}\mathbf{)}\in \mathbb{A}.
\end{equation*}%
From the concavity of $\Lambda (\mathbf{\cdot },\mathbf{\cdot )}$ we obtain:%
\begin{equation*}
\Lambda (\lambda (\mathbf{X}_{1},\mathbf{Y}_{1}\mathbf{)+(}1-\lambda \mathbf{%
)}(\mathbf{X}_{2},\mathbf{Y}_{2}\mathbf{)})\geq \lambda \Lambda (\mathbf{X}%
_{1},\mathbf{Y}_{1}\mathbf{)+(}1-\lambda \mathbf{)}\Lambda (\mathbf{X}_{2},%
\mathbf{Y}_{2}\mathbf{)}\in \mathbb{A}.
\end{equation*}%
The monotonicity of $\mathbb{A}$ implies: 
\begin{equation*}
\Lambda (\lambda \mathbf{X}_{1}\mathbf{+(}1-\lambda \mathbf{)X}_{2},\lambda 
\mathbf{Y}_{1}\mathbf{+(}1-\lambda \mathbf{)Y}_{2}\mathbf{)=}\Lambda
(\lambda (\mathbf{X}_{1},\mathbf{Y}_{1}\mathbf{)+(}1-\lambda \mathbf{)}(%
\mathbf{X}_{2},\mathbf{Y}_{2}\mathbf{)})\in \mathbb{A},
\end{equation*}%
and therefore, $\lambda \mathbf{X}_{1}\mathbf{+(}1-\lambda \mathbf{)X}_{2}\in 
\mathcal{A}^{\lambda \mathbf{Y}_{1}\mathbf{+(}1-\lambda \mathbf{)Y}_{2}}$.
\end{proof}

\

\noindent The class of systemic risk measures defined in \eqref{0011} is a fairly general representation since the aggregation function $\Lambda $ needs only
to be concave and increasing in one of its arguments and the acceptance set $%
\mathbb{A}$ is only required to be monotone and convex. As shown in the following
Corollary \ref{corAgg}, such a risk measure may describe either the
possibility of  \textquotedblleft first aggregate and second add the capital\textquotedblright (for
example if $\Lambda (\mathbf{X},\mathbf{Y):=}\Lambda _{1}(\mathbf{X)+}%
\Lambda _{2}(\mathbf{Y)}$, where $\Lambda _{2}(\mathbf{Y)}$ could be
interpreted as the discounted cost of $\mathbf{Y}$) or the case of \textquotedblleft first add and second 
aggregate\textquotedblright (for example if $\Lambda (\mathbf{X},\mathbf{%
Y):=}\Lambda _{1}(\mathbf{X+Y)}$).

\begin{corollary}
\label{corAgg}Let $\Lambda _{2}:\mathcal{C}\rightarrow \mathcal{L}^{0}(%
\mathbb{R})$ be concave, let $\mathcal{A}^{\mathbf{Y}}$ be defined in (\ref%
{AA}), where the function $\Lambda $ has one of the following forms:
\begin{align*}
\Lambda (\mathbf{Z},\mathbf{Y})&=\Lambda _{1}(\mathbf{Z})+\Lambda _{2}(
\mathbf{Y}), \\
\Lambda (\mathbf{Z},\mathbf{Y})&=\Lambda _{1}(\mathbf{Z+Y}).
\end{align*}
Then, $\mathcal{A}^{\mathbf{Y}}$ fulfills properties \ref{condmon} and \ref{condc2}. Therefore, the map $\rho $ defined in (\ref{0011}) is a
quasi-convex systemic risk measure under the assumptions \ref{condCc} and \ref%
{condpiqc}; it is a convex systemic risk measure under the assumptions \ref{condCc}
and \ref{condpic}.
\end{corollary}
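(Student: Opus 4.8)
The plan is to deduce the Corollary directly from the previous Proposition: it suffices to check that, in each of the two forms, the joint aggregation function $\Lambda:\mathcal{L}^{0}(\mathbb{R}^{N})\times\mathcal{C}\to\mathcal{L}^{0}(\mathbb{R})$ is concave and that $\Lambda(\cdot,\mathbf{Y})$ is $\succeq$-increasing for every fixed $\mathbf{Y}\in\mathcal{C}$. Once this is done, the Proposition gives that $\mathcal{A}^{\mathbf{Y}}$ satisfies \ref{condmon} and \ref{condc2} (hence also \ref{condqc2}), and the statements about $\rho$ follow from Lemma \ref{Lem2a3a}(i) under \ref{condCc}, \ref{condpiqc} and from Lemma \ref{Lem2a3a}(ii) under \ref{condCc}, \ref{condpic}. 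Throughout, I would use that $\Lambda_{1}$ is $\succeq$-increasing and concave by the standing assumptions of this Section, and that $\Lambda_{2}$ is concave by hypothesis (these concavities being used on the convex sets $\mathcal{L}^{0}(\mathbb{R}^{N})$ and $\mathcal{C}$, the latter convex under \ref{condCc}).

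First I would treat $\Lambda(\mathbf{Z},\mathbf{Y})=\Lambda_{1}(\mathbf{Z})+\Lambda_{2}(\mathbf{Y})$. For fixed $\mathbf{Y}$, if $\mathbf{Z}_{2}\succeq\mathbf{Z}_{1}$ then $\Lambda_{1}(\mathbf{Z}_{2})\geq\Lambda_{1}(\mathbf{Z}_{1})$, and since $\Lambda_{2}(\mathbf{Y})$ does not depend on $\mathbf{Z}$ we get $\Lambda(\mathbf{Z}_{2},\mathbf{Y})\geq\Lambda(\mathbf{Z}_{1},\mathbf{Y})$, so $\Lambda(\cdot,\mathbf{Y})$ is $\succeq$-increasing. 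For concavity, given $(\mathbf{Z}_{i},\mathbf{Y}_{i})$, $i=1,2$, and $\lambda\in[0,1]$, I would apply the concavity of $\Lambda_{1}$ and of $\Lambda_{2}$ separately:
\begin{align*}
\Lambda\big(\lambda(\mathbf{Z}_{1},\mathbf{Y}_{1})+(1-\lambda)(\mathbf{Z}_{2},\mathbf{Y}_{2})\big)
&=\Lambda_{1}\big(\lambda\mathbf{Z}_{1}+(1-\lambda)\mathbf{Z}_{2}\big)+\Lambda_{2}\big(\lambda\mathbf{Y}_{1}+(1-\lambda)\mathbf{Y}_{2}\big)\\
&\geq\lambda\big(\Lambda_{1}(\mathbf{Z}_{1})+\Lambda_{2}(\mathbf{Y}_{1})\big)+(1-\lambda)\big(\Lambda_{1}(\mathbf{Z}_{2})+\Lambda_{2}(\mathbf{Y}_{2})\big).
\end{align*}
For the second form $\Lambda(\mathbf{Z},\mathbf{Y})=\Lambda_{1}(\mathbf{Z}+\mathbf{Y})$, the map $(\mathbf{Z},\mathbf{Y})\mapsto\mathbf{Z}+\mathbf{Y}$ is linear on $\mathcal{L}^{0}(\mathbb{R}^{N})\times\mathcal{C}$ with values in $\mathcal{L}^{0}(\mathbb{R}^{N})$ (recall $\mathcal{C}\subseteq\mathcal{L}^{0}(\mathbb{R}^{N})$), so composing it with the concave $\Lambda_{1}$ yields a concave map:
\[
\Lambda\big(\lambda(\mathbf{Z}_{1},\mathbf{Y}_{1})+(1-\lambda)(\mathbf{Z}_{2},\mathbf{Y}_{2})\big)=\Lambda_{1}\big(\lambda(\mathbf{Z}_{1}+\mathbf{Y}_{1})+(1-\lambda)(\mathbf{Z}_{2}+\mathbf{Y}_{2})\big)\geq\lambda\Lambda_{1}(\mathbf{Z}_{1}+\mathbf{Y}_{1})+(1-\lambda)\Lambda_{1}(\mathbf{Z}_{2}+\mathbf{Y}_{2}).
\]
Moreover, for fixed $\mathbf{Y}$, $\mathbf{Z}_{2}\succeq\mathbf{Z}_{1}$ implies $\mathbf{Z}_{2}+\mathbf{Y}\succeq\mathbf{Z}_{1}+\mathbf{Y}$ by compatibility of the vector-lattice order with addition, whence $\Lambda_{1}(\mathbf{Z}_{2}+\mathbf{Y})\geq\Lambda_{1}(\mathbf{Z}_{1}+\mathbf{Y})$, i.e. $\Lambda(\cdot,\mathbf{Y})$ is $\succeq$-increasing.

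Once both cases are settled, the hypotheses of the previous Proposition hold, so $\mathcal{A}^{\mathbf{Y}}$ satisfies \ref{condmon} and \ref{condc2}; since \ref{condc2} implies \ref{condqc2}, the assertions on $\rho$ are then immediate from Lemma \ref{Lem2a3a}. I do not expect a genuine obstacle here: the argument is just the two elementary facts ``sum of concave functions in separated variables is jointly concave'' and ``concave composed with linear is concave''. The only points to keep in mind are that monotonicity is required only in the $\mathbf{Z}$-slot, so the mere concavity of $\Lambda_{2}$ (which depends on $\mathbf{Y}$ alone) is enough, and that the convex combinations appearing above must lie in the domains of $\Lambda_{1}$ and $\Lambda_{2}$, which is ensured by $\mathcal{L}^{0}(\mathbb{R}^{N})$ being a vector space and by the convexity of $\mathcal{C}$ used for the conclusion.
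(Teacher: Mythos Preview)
Your proposal is correct and is exactly the intended argument: the paper states the Corollary without proof, treating it as an immediate consequence of the preceding Proposition, and your verification that in each of the two cases $\Lambda$ is jointly concave and $\Lambda(\cdot,\mathbf{Y})$ is $\succeq$-increasing is precisely what is needed to invoke that Proposition. Your remarks that monotonicity is only required in the first slot and that convexity of $\mathcal{C}$ (assumption \ref{condCc}) ensures the convex combinations stay in the domain are also on point.
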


\noindent We now turn to the class of truly quasi-convex systemic risk measures defined by (\ref{0012}), which represents the generalization of the quasi-convex risk
measure in (\ref{QCFA}) in the one-dimensional case. 

\begin{proposition}
Let $\theta :\mathcal{C}\rightarrow \mathbb{R}$. Then the set 
\begin{equation*}
\mathcal{A}^{\mathbf{Y}}:=\left\{ \mathbf{Z}\in \mathcal{L}^{0}(\mathbb{R}%
^{N})\mid \Lambda _{1}(\mathbf{Z)}\in \mathcal{B}^{\theta (\mathbf{Y)}%
}\right\} ,\text{\quad }\mathbf{Y\in }\mathcal{C},
\end{equation*}%
satisfies properties \ref{condmon} and \ref{condqc2}. The map $\rho $
defined in (\ref{00}) is given by 
\begin{equation}
\rho (\mathbf{X}):= \inf \{\pi (\mathbf{Y})\in \mathbb{R}\mid 
\mathbf{Y}\in \mathcal{C},\,\Lambda _{1}(\mathbf{X)}\in \mathcal{B}^{\theta (%
\mathbf{Y)}}\}\,, \label{0012}
\end{equation}%
and under the assumptions \ref{condCc} and \ref{condpiqc} is a quasi-convex systemic
risk measure.
\end{proposition}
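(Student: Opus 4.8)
The plan is to verify that the set $\mathcal{A}^{\mathbf{Y}}$ satisfies the two structural properties \ref{condmon} and \ref{condqc2}, and then to invoke Lemma \ref{Lem2a3a} i) (together with the standing assumptions of the section, namely that $\Lambda_1$ is $\succeq$-increasing and concave and that $(\mathcal{B}^{x})_{x\in\mathbb{R}}$ is an increasing family of monotone convex sets) to conclude, under \ref{condCc} and \ref{condpiqc}, that $\rho$ in \eqref{0012} is a quasi-convex systemic risk measure.

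\emph{Property \ref{condmon}} is immediate: if $\mathbf{X}_{1}\in\mathcal{A}^{\mathbf{Y}}$, i.e.\ $\Lambda_{1}(\mathbf{X}_{1})\in\mathcal{B}^{\theta(\mathbf{Y})}$, and $\mathbf{X}_{2}\succeq\mathbf{X}_{1}$, then $\Lambda_{1}(\mathbf{X}_{2})\geq\Lambda_{1}(\mathbf{X}_{1})$ because $\Lambda_{1}$ is $\succeq$-increasing; since $\mathcal{B}^{\theta(\mathbf{Y})}$ is a monotone subset of $\mathcal{L}^{0}(\mathbb{R})$, it follows that $\Lambda_{1}(\mathbf{X}_{2})\in\mathcal{B}^{\theta(\mathbf{Y})}$, hence $\mathbf{X}_{2}\in\mathcal{A}^{\mathbf{Y}}$.

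\emph{Property \ref{condqc2}} is where the only subtlety lies, so I would organize it as follows. Fix $\mathbf{Y}_{1},\mathbf{Y}_{2}\in\mathcal{C}$, $\mathbf{X}_{1}\in\mathcal{A}^{\mathbf{Y}_{1}}$, $\mathbf{X}_{2}\in\mathcal{A}^{\mathbf{Y}_{2}}$ and $\lambda\in[0,1]$. Since no monotonicity, concavity or continuity of $\theta$ is assumed, one cannot control $\theta(\alpha\mathbf{Y}_{1}+(1-\alpha)\mathbf{Y}_{2})$ for interior $\alpha$; the key observation is that \ref{condqc2} only asks for \emph{some} $\alpha\in[0,1]$, and the endpoints $\alpha\in\{0,1\}$ are admissible. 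So, without loss of generality assume $\theta(\mathbf{Y}_{1})\leq\theta(\mathbf{Y}_{2})$ and set $\alpha:=0$, so that $\alpha\mathbf{Y}_{1}+(1-\alpha)\mathbf{Y}_{2}=\mathbf{Y}_{2}$ and $\theta(\alpha\mathbf{Y}_{1}+(1-\alpha)\mathbf{Y}_{2})=\theta(\mathbf{Y}_{2})=\max(\theta(\mathbf{Y}_{1}),\theta(\mathbf{Y}_{2}))$. Because $(\mathcal{B}^{x})_{x\in\mathbb{R}}$ is increasing, $\Lambda_{1}(\mathbf{X}_{1})\in\mathcal{B}^{\theta(\mathbf{Y}_{1})}\subseteq\mathcal{B}^{\theta(\mathbf{Y}_{2})}$ and trivially $\Lambda_{1}(\mathbf{X}_{2})\in\mathcal{B}^{\theta(\mathbf{Y}_{2})}$; convexity of $\mathcal{B}^{\theta(\mathbf{Y}_{2})}$ then gives
\[
\lambda\Lambda_{1}(\mathbf{X}_{1})+(1-\lambda)\Lambda_{1}(\mathbf{X}_{2})\in\mathcal{B}^{\theta(\mathbf{Y}_{2})}.
\]
By concavity of $\Lambda_{1}$ one has $\Lambda_{1}(\lambda\mathbf{X}_{1}+(1-\lambda)\mathbf{X}_{2})\geq\lambda\Lambda_{1}(\mathbf{X}_{1})+(1-\lambda)\Lambda_{1}(\mathbf{X}_{2})$, and monotonicity of $\mathcal{B}^{\theta(\mathbf{Y}_{2})}$ yields $\Lambda_{1}(\lambda\mathbf{X}_{1}+(1-\lambda)\mathbf{X}_{2})\in\mathcal{B}^{\theta(\mathbf{Y}_{2})}=\mathcal{B}^{\theta(\alpha\mathbf{Y}_{1}+(1-\alpha)\mathbf{Y}_{2})}$, i.e.\ $\lambda\mathbf{X}_{1}+(1-\lambda)\mathbf{X}_{2}\in\mathcal{A}^{\alpha\mathbf{Y}_{1}+(1-\alpha)\mathbf{Y}_{2}}$, which is exactly \ref{condqc2}.

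Finally, with \ref{condmon} and \ref{condqc2} in hand, Lemma \ref{Lem2a3a} i) applies verbatim and delivers that $\rho$ defined in \eqref{0012} is a quasi-convex systemic risk measure whenever $\mathcal{C}$ is convex (\ref{condCc}) and $\pi$ is quasi-convex (\ref{condpiqc}). The main (and essentially only) obstacle is the correct handling of property \ref{condqc2}: recognizing that the freedom in choosing $\alpha$ at an endpoint — picking the endpoint attached to the larger value of $\theta$ — is what lets the nestedness of $(\mathcal{B}^{x})_{x}$ combine with concavity of $\Lambda_{1}$ and the monotonicity/convexity of the acceptance family; everything else is a routine transcription of the argument used in the proof of the preceding proposition.
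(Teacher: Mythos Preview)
Your proof is correct and follows essentially the same approach as the paper's: both verify \ref{condmon} via monotonicity of $\Lambda_1$ and of $\mathcal{B}^{\theta(\mathbf{Y})}$, and both establish \ref{condqc2} by choosing $\alpha$ at the endpoint corresponding to the larger value of $\theta$, then combining the nestedness of $(\mathcal{B}^x)_x$ with convexity and monotonicity of the acceptance set and concavity of $\Lambda_1$. The only cosmetic difference is that the paper writes the convex combination as an element of the Minkowski sum $\lambda\mathcal{B}^{\theta(\mathbf{Y}_1)}+(1-\lambda)\mathcal{B}^{\theta(\mathbf{Y}_2)}\subseteq\mathcal{B}^{\max\{\theta(\mathbf{Y}_1),\theta(\mathbf{Y}_2)\}}$ before invoking monotonicity, whereas you first embed both $\Lambda_1(\mathbf{X}_i)$ into the larger $\mathcal{B}^{\theta(\mathbf{Y}_2)}$ and then use its convexity directly; these are the same argument.
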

\begin{proof}
Property \ref{condmon}: Let $\mathbf{X}_{1}\in \mathcal{A}^{\mathbf{Y}}$ and 
$\mathbf{X}_{2}\succeq \mathbf{X}_{1}$. Note that $\mathbf{X}_{1}\in 
\mathcal{A}^{\mathbf{Y}}$ implies $\Lambda _{1}(\mathbf{X}_{1}\mathbf{)}\in 
\mathcal{B}^{\theta (\mathbf{Y)}}$ and $\mathbf{X}_{2}\succeq \mathbf{X}_{1}$
implies $\Lambda _{1}(\mathbf{X}_{2}\mathbf{)\geq }\Lambda _{1}(\mathbf{X}%
_{1}\mathbf{)}$. Since $\mathcal{B}^{x}$ is a monotone set for all $x,$ we
have $\Lambda _{1}(\mathbf{X}_{2}\mathbf{)}\in \mathcal{B}^{\theta (\mathbf{%
Y)}}$\ and $\mathbf{X}_{2}\in \mathcal{A}^{\mathbf{Y}}.$ \\
Property \ref{condqc2}: Fix $\mathbf{Y}_{1},\mathbf{Y}_{2}\in \mathcal{C},$ $%
\mathbf{X}_{1}\in \mathcal{A}^{\mathbf{Y}_{1}}$, $\ \mathbf{X}_{2}\in 
\mathcal{A}^{\mathbf{Y}_{2}}\ $and $\lambda \in \lbrack 0,1]$. Then $\Lambda
_{1}(\mathbf{X}_{1}\mathbf{)}\in \mathcal{B}^{\theta (\mathbf{Y}_{1}\mathbf{)%
}}$ and $\Lambda _{1}(\mathbf{X}_{2}\mathbf{)}\in \mathcal{B}^{\theta (%
\mathbf{Y}_{2}\mathbf{)}}$. From the concavity of $\Lambda _{1}$ we obtain:%
\begin{equation*}
\Lambda _{1}(\lambda \mathbf{X}_{1}\mathbf{+(}1-\lambda \mathbf{)X}_{2})\geq
\lambda \Lambda _{1}(\mathbf{X}_{1}\mathbf{)+(}1-\lambda \mathbf{)}\Lambda
_{1}(\mathbf{X}_{2}\mathbf{)}\in \lambda \mathcal{B}^{\theta (\mathbf{Y}_{1}%
\mathbf{)}}+\mathbf{(}1-\lambda \mathbf{)}\mathcal{B}^{\theta (\mathbf{Y}_{2}.
\mathbf{)}}
\end{equation*}%
Since $(\mathcal{B}^{x})_{x\in \mathbb{R}}$ is an increasing family and each 
$\mathcal{B}^{x}$ is convex, we deduce 
\begin{equation}
\lambda \mathcal{B}^{\theta (\mathbf{Y}_{1}\mathbf{)}}+\mathbf{(}1-\lambda 
\mathbf{)}\mathcal{B}^{\theta (\mathbf{Y}_{2}\mathbf{)}}\subseteq \mathcal{B}%
^{\max \left\{ \theta (\mathbf{Y}_{1}\mathbf{),}\theta (\mathbf{Y}_{2}%
\mathbf{)}\right\} }.  \label{02}
\end{equation}%
\ Suppose that $\max (\theta (\mathbf{Y}_{1}\mathbf{),}\theta (\mathbf{Y}_{2}%
\mathbf{)})=\theta (\mathbf{Y}_{1}\mathbf{)}$, using the monotonicity of the
set $\mathcal{B}^{\theta (\mathbf{Y}_{1}\mathbf{)}}$ we deduce%
\begin{equation*}
\Lambda _{1}(\lambda \mathbf{X}_{1}\mathbf{+(}1-\lambda \mathbf{)X}_{2})\in 
\mathcal{B}^{\theta (\mathbf{Y}_{1}\mathbf{)}},
\end{equation*}%
and $\lambda \mathbf{X}_{1}\mathbf{+(}1-\lambda \mathbf{)X}_{2}\in \mathcal{A%
}^{\mathbf{Y}_{1}}$, so that property \ref{condqc2} is satisfied with $
\alpha =1.$
\end{proof}

\

\noindent In the risk measure (\ref{0012}) we are not allowing to add capital to $%
\mathbf{X}$\textbf{\ }before the aggregation takes place\textbf{, }as the
quasi-convexity property of $\rho $ would be lost in general. Next, we contemplate
this possibility (i.e. we consider conditions of the type $\,\Lambda (%
\mathbf{X},\mathbf{Y)}\in \mathcal{B}^{\theta (\mathbf{Y)}}$) in the
systemic risk measures (\ref{0013}) and (\ref{0014}) but only under some non
trivial restrictions: for the case (\ref{0013}) we impose conditions on the
aggregation function $\Lambda$ that are made explicit in equation (\ref{22})
and in the Example \ref{exampleAgg}. In contrast, for the case (\ref{0014}) we consider a
general aggregation function $\Lambda$, but we restrict the family of
acceptance sets to $\mathcal{B}^{\pi (\mathbf{Y)}},$ where $\pi$ is
positively linear and represents the risk level of the acceptance family.  

\begin{proposition}
\label{prop22}Let $\theta :\mathcal{C}\rightarrow \mathbb{R}$ \ and%
\begin{equation*}
\mathcal{A}^{\mathbf{Y}}:=\left\{ \mathbf{Z}\in \mathcal{L}^{0}(\mathbb{R}%
^{N})\mid \Lambda (\mathbf{Z},\mathbf{Y)}\in \mathcal{B}^{\theta (\mathbf{Y)}%
}\right\} ,\text{\quad }\mathbf{Y\in }\mathcal{C},
\end{equation*}%
where $\Lambda (\mathbf{\cdot },\mathbf{Y)}:\mathcal{L}^{0}(\mathbb{R}%
^{N})\rightarrow \mathcal{L}^{0}(\mathbb{R})$ is $\succeq $-increasing and
concave for all $\mathbf{Y\in }\mathcal{C}$. Assume in addition that: 
\begin{equation}
\theta (\mathbf{Y}_{2})\geq \theta (\mathbf{Y}_{1})\Rightarrow \Lambda (%
\mathbf{X},\mathbf{Y}_{2}\mathbf{)\geq }\Lambda (\mathbf{X},\mathbf{Y}_{1}%
\mathbf{)}\text{ for all }\mathbf{X\in }\mathcal{L}^{0}(\mathbb{R}^{N})\text{%
.}  \label{22}
\end{equation}%
Then properties \ref{condmon} and \ref{condqc} hold. The map $\rho $ defined
in (\ref{00}) is given by 
\begin{equation}
\rho (\mathbf{X}):= \inf \{\pi (\mathbf{Y})\in \mathbb{R}\mid 
\mathbf{Y}\in \mathcal{C},\,\Lambda (\mathbf{X},\mathbf{Y)}\in \mathcal{B}%
^{\theta (\mathbf{Y)}}\}  \label{0013}
\end{equation}%
and is a quasi-convex  systemic risk measure.
\end{proposition}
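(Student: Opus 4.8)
The plan is to verify the two structural properties \ref{condmon} and \ref{condqc} directly from the stated hypotheses and then to appeal to Lemma~\ref{lem:rm}(i), which upgrades P1 together with P2 into $\succeq$-monotone decreasing and quasi-convexity of $\rho$, i.e.\ into the assertion that $\rho$ is a quasi-convex systemic risk measure. I emphasise in advance that, in contrast with the situation of \eqref{0011}, neither convexity of $\mathcal{C}$ nor quasi-convexity of $\pi$ will be needed: the vector $\mathbf{Y}$ demanded in \ref{condqc} will be chosen to be one of the two given vectors $\mathbf{Y}_1,\mathbf{Y}_2$, so no averaging inside $\mathcal{C}$ takes place, and this is why $\mathcal{C}$ and $\pi$ may stay completely general in the statement.

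For \ref{condmon} I would argue exactly as in the proof for \eqref{0012}: fix $\mathbf{Y}\in\mathcal{C}$, let $\mathbf{X}_1\in\mathcal{A}^{\mathbf{Y}}$ and $\mathbf{X}_2\succeq\mathbf{X}_1$. Then $\Lambda(\mathbf{X}_1,\mathbf{Y})\in\mathcal{B}^{\theta(\mathbf{Y})}$; since $\Lambda(\cdot,\mathbf{Y})$ is $\succeq$-increasing we get $\Lambda(\mathbf{X}_2,\mathbf{Y})\geq\Lambda(\mathbf{X}_1,\mathbf{Y})$, and monotonicity of the set $\mathcal{B}^{\theta(\mathbf{Y})}$ forces $\Lambda(\mathbf{X}_2,\mathbf{Y})\in\mathcal{B}^{\theta(\mathbf{Y})}$, hence $\mathbf{X}_2\in\mathcal{A}^{\mathbf{Y}}$.

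The substantive step is \ref{condqc}. Fix $m\in\mathbb{R}$, vectors $\mathbf{Y}_1,\mathbf{Y}_2\in\mathcal{C}$ with $\pi(\mathbf{Y}_i)\leq m$, points $\mathbf{X}_i\in\mathcal{A}^{\mathbf{Y}_i}$, and $\lambda\in[0,1]$. After possibly interchanging the two indices and replacing $\lambda$ by $1-\lambda$, we may assume $\theta(\mathbf{Y}_2)\geq\theta(\mathbf{Y}_1)$. I claim that $\mathbf{Y}:=\mathbf{Y}_2$ does the job. Since $\pi(\mathbf{Y}_2)\leq m$ it remains only to check $\lambda\mathbf{X}_1+(1-\lambda)\mathbf{X}_2\in\mathcal{A}^{\mathbf{Y}_2}$. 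First, $\Lambda(\mathbf{X}_1,\mathbf{Y}_1)\in\mathcal{B}^{\theta(\mathbf{Y}_1)}\subseteq\mathcal{B}^{\theta(\mathbf{Y}_2)}$ because $(\mathcal{B}^x)_{x}$ is increasing; condition \eqref{22} gives $\Lambda(\mathbf{X}_1,\mathbf{Y}_2)\geq\Lambda(\mathbf{X}_1,\mathbf{Y}_1)$, so monotonicity of $\mathcal{B}^{\theta(\mathbf{Y}_2)}$ yields $\Lambda(\mathbf{X}_1,\mathbf{Y}_2)\in\mathcal{B}^{\theta(\mathbf{Y}_2)}$. Also $\Lambda(\mathbf{X}_2,\mathbf{Y}_2)\in\mathcal{B}^{\theta(\mathbf{Y}_2)}$ since $\mathbf{X}_2\in\mathcal{A}^{\mathbf{Y}_2}$. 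Now convexity of $\mathcal{B}^{\theta(\mathbf{Y}_2)}$ gives $\lambda\Lambda(\mathbf{X}_1,\mathbf{Y}_2)+(1-\lambda)\Lambda(\mathbf{X}_2,\mathbf{Y}_2)\in\mathcal{B}^{\theta(\mathbf{Y}_2)}$; concavity of $\Lambda(\cdot,\mathbf{Y}_2)$ gives $\Lambda(\lambda\mathbf{X}_1+(1-\lambda)\mathbf{X}_2,\mathbf{Y}_2)\geq\lambda\Lambda(\mathbf{X}_1,\mathbf{Y}_2)+(1-\lambda)\Lambda(\mathbf{X}_2,\mathbf{Y}_2)$; and a last application of the monotonicity of $\mathcal{B}^{\theta(\mathbf{Y}_2)}$ places $\Lambda(\lambda\mathbf{X}_1+(1-\lambda)\mathbf{X}_2,\mathbf{Y}_2)$ into $\mathcal{B}^{\theta(\mathbf{Y}_2)}$, i.e.\ $\lambda\mathbf{X}_1+(1-\lambda)\mathbf{X}_2\in\mathcal{A}^{\mathbf{Y}_2}$. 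This verifies \ref{condqc}.

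The one delicate ingredient is hypothesis \eqref{22}: it is precisely what lets us \emph{transport} the acceptability of $\mathbf{X}_1$ from $\mathbf{Y}_1$ to the vector $\mathbf{Y}_2$ carrying the larger risk level $\theta$, so that both $\mathbf{X}_1$ and $\mathbf{X}_2$ become acceptable for the single common vector $\mathbf{Y}_2$ and the usual concavity--convexity chain can be run against one acceptance set. Without a monotonicity link of this kind between $\theta$ and $\Lambda$ one should not expect quasi-convexity — which is exactly why the construction \eqref{0012} had to forbid adding capital before aggregating. Once P1 and P2 are established, Lemma~\ref{lem:rm}(i) delivers the conclusion.
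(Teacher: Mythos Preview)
Your proof is correct and follows essentially the same route as the paper's own argument: verify \ref{condmon} from monotonicity of $\Lambda(\cdot,\mathbf{Y})$ and of $\mathcal{B}^{\theta(\mathbf{Y})}$, then for \ref{condqc} assume w.l.o.g.\ $\theta(\mathbf{Y}_2)\geq\theta(\mathbf{Y}_1)$, use \eqref{22} together with the inclusion $\mathcal{B}^{\theta(\mathbf{Y}_1)}\subseteq\mathcal{B}^{\theta(\mathbf{Y}_2)}$ to transport $\mathbf{X}_1$ into $\mathcal{A}^{\mathbf{Y}_2}$, and then run concavity of $\Lambda(\cdot,\mathbf{Y}_2)$ against convexity and monotonicity of $\mathcal{B}^{\theta(\mathbf{Y}_2)}$ to conclude with $\mathbf{Y}=\mathbf{Y}_2$. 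Your additional remark that no convexity of $\mathcal{C}$ or quasi-convexity of $\pi$ is needed here, because the witness $\mathbf{Y}$ is one of the given $\mathbf{Y}_i$ rather than a convex combination, is a correct and useful observation that the paper leaves implicit.
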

\begin{proof}
Property \ref{condmon}: Let $\mathbf{X}_{1}\in \mathcal{A}^{\mathbf{Y}}$ and 
$\mathbf{X}_{2}\succeq \mathbf{X}_{1}$. Note that $\mathbf{X}_{1}\in 
\mathcal{A}^{\mathbf{Y}}$ implies $\Lambda (\mathbf{X}_{1},\mathbf{Y)}\in 
\mathcal{B}^{\theta (\mathbf{Y)}}$ and $\mathbf{X}_{2}\succeq \mathbf{X}_{1}$
implies $\Lambda (\mathbf{X}_{2},\mathbf{Y)\geq }\Lambda (\mathbf{X}_{1},%
\mathbf{Y)}$. Since $\mathcal{B}^{\theta (\mathbf{Y)}}$ is a monotone set,
we have $\Lambda (\mathbf{X}_{2},\mathbf{Y)}\in \mathcal{B}^{\theta (\mathbf{%
Y)}}$\ and $\mathbf{X}_{2}\in \mathcal{A}^{\mathbf{Y}}.$ \\
Property \ref{condqc}: Fix $m\in \mathbb{R}$, $\mathbf{Y}_{1},\mathbf{Y}
_{2}\in \mathcal{C}$ such that $\pi (\mathbf{Y}_{1})\leq m$ and $\pi (
\mathbf{Y}_{2})\leq m$, $\lambda \in \lbrack 0,1]$ and take $\mathbf{X}
_{1}\in \mathcal{A}^{\mathbf{Y}_{1}}$ and $\mathbf{X}_{2}\in \mathcal{A}^{
\mathbf{Y}_{2}}$. Then $\Lambda (\mathbf{X}_{1},\mathbf{Y}_{1}\mathbf{)}\in 
\mathcal{B}^{\theta (\mathbf{Y}_{1}\mathbf{)}}$ and $\Lambda (\mathbf{X}_{2},
\mathbf{Y}_{2}\mathbf{)}\in \mathcal{B}^{\theta (\mathbf{Y}_{2}\mathbf{)}}$.
Then, w.l.o.g. we may assume that $\theta (\mathbf{Y}_{2})\geq \theta (\mathbf{Y}
_{1})$. Since $(\mathcal{B}^{x})_{x\in \mathbb{R}}$ is an increasing family, we have
$\mathcal{B}^{\theta (\mathbf{Y}_{1}\mathbf{)}}\subseteq \mathcal{B}^{\theta
(\mathbf{Y}_{2}\mathbf{)}}$. Condition (\ref{22}) implies $\Lambda (\mathbf{X%
}_{1},\mathbf{Y}_{2}\mathbf{)\geq }\Lambda (\mathbf{X}_{1},\mathbf{Y}_{1}%
\mathbf{)}\in \mathcal{B}^{\theta (\mathbf{Y}_{1}\mathbf{)}}\subseteq 
\mathcal{B}^{\theta (\mathbf{Y}_{2}\mathbf{)}}$, so that $\Lambda (\mathbf{X}%
_{1},\mathbf{Y}_{2}\mathbf{)}\in \mathcal{B}^{\theta (\mathbf{Y}_{2}\mathbf{)%
}}.$ From the concavity of $\Lambda (\mathbf{\cdot },\mathbf{Y}_{2}\mathbf{)}
$ and the convexity of $\mathcal{B}^{\theta (\mathbf{Y}_{2}\mathbf{)}}$ we
obtain: 
\begin{equation*}
\Lambda (\lambda \mathbf{X}_{1}\mathbf{+(}1-\lambda \mathbf{)X}_{2},\mathbf{Y%
}_{2}\mathbf{)\geq \lambda \Lambda (\mathbf{X}}_{1}\mathbf{,\mathbf{Y}}_{2}%
\mathbf{\mathbf{)+(}}1-\lambda \mathbf{\mathbf{)}\Lambda (\mathbf{X}}_{2}%
\mathbf{,\mathbf{Y}}_{2}\mathbf{\mathbf{)}}\in \mathcal{B}^{\theta (\mathbf{Y%
}_{2}\mathbf{)}}.
\end{equation*}%
Hence $\Lambda (\lambda \mathbf{X}_{1}\mathbf{+(}1-\lambda \mathbf{)X}_{2},%
\mathbf{Y}_{2}\mathbf{)}\in \mathcal{B}^{\theta (\mathbf{Y}_{2}\mathbf{)}}$
which means: $\lambda \mathbf{X}_{1}\mathbf{+(}1-\lambda \mathbf{)X}_{2}\in 
\mathcal{A}^{\mathbf{Y}_{2}}$. Since $\pi (\mathbf{Y}_{2})\leq m$, property 
\ref{condqc} holds with $\mathbf{Y}=\mathbf{Y}_{2}.$
\end{proof}

\begin{example}
\label{exampleAgg}Let $\theta :\mathcal{C}\rightarrow \mathbb{R}$ and let $%
\Lambda $ be defined by%
\begin{equation*}
\Lambda (\mathbf{X},\mathbf{Y)}=g(\mathbf{X},\theta (\mathbf{Y})),
\end{equation*}%
where $g(\cdot ,z):\mathbb{R}^{N}\rightarrow \mathbb{R}$ is  increasing and
concave for all $z\in \mathbb{R}$ and $g(x,\cdot):\mathbb{R}\rightarrow 
\mathbb{R}$ is increasing for all $x\in \mathbb{R}^N$. Then $\Lambda $ satisfies all the assumptions in
Proposition \ref{prop22}. Examples of functions $g$ satisfying these conditions are: 
\begin{equation*}
g(x,z)=f(x)+h(z),
\end{equation*}%
with $\ f$ increasing and concave and $h$ increasing, or
\begin{equation*}
g(x,z)=f(x)h(z)
\end{equation*}%
with $f$ increasing, concave and positive and $h$ increasing and positive.
\end{example}

\begin{proposition}
\label{prop:qc3} Suppose that $\mathcal{C\subseteq L}
^{0}(\mathbb{R}^{N})$ is a convex set such $\mathbf{0\in }\mathcal{C}$ and $\mathcal{C}+\mathbb{R}_{+}^{N}\in 
\mathcal{C}.$ Assume in addition that $\pi :\mathcal{C}\rightarrow \mathbb{R}
$ satisfies $\pi (u)=1$ for a given $u\in \mathbb{R}_{+}^{N}$, $u\neq \mathbf{0}$,     and
\begin{equation*}
\pi (\alpha _{1}Y_{1}+\alpha _{2}Y_{2})=\alpha _{1}\pi (Y_{1})+\alpha
_{2}\pi (Y_{2})
\end{equation*}%
for all $\alpha _{i}\in \mathbb{R}_{+}$ and $Y_{i}\in \mathcal{C}$. Let%
\begin{equation*}
\mathcal{A}^{\mathbf{Y}}:=\left\{ \mathbf{Z}\in \mathcal{L}^{0}(\mathbb{R}%
^{N})\mid \Lambda (\mathbf{Z},\mathbf{Y)}\in \mathcal{B}^{\pi (\mathbf{Y)}%
}\right\} ,
\end{equation*}%
where $\Lambda $ is concave and $\Lambda (\mathbf{X},\mathbf{\cdot }):%
\mathcal{C}\rightarrow \mathcal{L}^{0}(\mathbb{R})$ is increasing (with
respect to the componentwise ordering) for all $\mathbf{X}\in \mathcal{L}%
^{0}(\mathbb{R}^{N})$. Then the family of sets $\mathcal{A}^{\mathbf{Y}}$
fulfill properties \ref{condmon} and \ref{condqc}. The map $\rho $ defined
in (\ref{00}) is given by 
\begin{equation}
\rho (\mathbf{X})=\inf \{\pi (\mathbf{Y})\in \mathbb{R}\mid \mathbf{Y}\in 
\mathcal{C},\,\Lambda (\mathbf{X},\mathbf{Y})\in \mathcal{B}^{\pi (\mathbf{Y})}\}  \label{0014}
\end{equation}
and is a quasi-convex systemic risk measure.
\end{proposition}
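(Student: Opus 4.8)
The plan is to check that the family $(\mathcal{A}^{\mathbf{Y}})_{\mathbf{Y}\in\mathcal{C}}$ satisfies properties \ref{condmon} and \ref{condqc} and then to quote Lemma \ref{lem:rm}(i). For \ref{condmon} I would argue exactly as in the previous propositions of this section: given $\mathbf{X}_2\succeq\mathbf{X}_1\in\mathcal{A}^{\mathbf{Y}}$, $\succeq$-increasingness of $\Lambda(\cdot,\mathbf{Y})$ gives $\Lambda(\mathbf{X}_2,\mathbf{Y})\geq\Lambda(\mathbf{X}_1,\mathbf{Y})\in\mathcal{B}^{\pi(\mathbf{Y})}$, and monotonicity of $\mathcal{B}^{\pi(\mathbf{Y})}$ forces $\Lambda(\mathbf{X}_2,\mathbf{Y})\in\mathcal{B}^{\pi(\mathbf{Y})}$, i.e.\ $\mathbf{X}_2\in\mathcal{A}^{\mathbf{Y}}$.

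The substance is in \ref{condqc}. Fix $m$, vectors $\mathbf{Y}_1,\mathbf{Y}_2\in\mathcal{C}$ with $\pi(\mathbf{Y}_i)\leq m$, acceptable $\mathbf{X}_i\in\mathcal{A}^{\mathbf{Y}_i}$, and $\lambda\in[0,1]$; since \ref{condqc} is symmetric under $(\mathbf{Y}_1,\mathbf{X}_1,\lambda)\leftrightarrow(\mathbf{Y}_2,\mathbf{X}_2,1-\lambda)$ I may assume $\pi(\mathbf{Y}_1)\leq\pi(\mathbf{Y}_2)=:c\leq m$. The crucial move is to upgrade $\mathbf{Y}_1$ to $\widetilde{\mathbf{Y}}_1:=\mathbf{Y}_1+\delta u$ with $\delta:=c-\pi(\mathbf{Y}_1)\geq 0$: then $\widetilde{\mathbf{Y}}_1\in\mathcal{C}$ because $\delta u\in\mathbb{R}_+^N$ and $\mathcal{C}+\mathbb{R}_+^N\subseteq\mathcal{C}$; positive linearity of $\pi$ and $\pi(u)=1$ give $\pi(\widetilde{\mathbf{Y}}_1)=\pi(\mathbf{Y}_1)+\delta=c$; and, since $\Lambda(\mathbf{X}_1,\cdot)$ is increasing componentwise and $\widetilde{\mathbf{Y}}_1\geq\mathbf{Y}_1$, we get $\Lambda(\mathbf{X}_1,\widetilde{\mathbf{Y}}_1)\geq\Lambda(\mathbf{X}_1,\mathbf{Y}_1)\in\mathcal{B}^{\pi(\mathbf{Y}_1)}\subseteq\mathcal{B}^{c}$, whence $\Lambda(\mathbf{X}_1,\widetilde{\mathbf{Y}}_1)\in\mathcal{B}^c$ by monotonicity of $\mathcal{B}^c$, i.e.\ $\mathbf{X}_1\in\mathcal{A}^{\widetilde{\mathbf{Y}}_1}$. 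Now I would set $\mathbf{Y}:=\lambda\widetilde{\mathbf{Y}}_1+(1-\lambda)\mathbf{Y}_2$, which lies in $\mathcal{C}$ by convexity and has $\pi(\mathbf{Y})=\lambda c+(1-\lambda)c=c\leq m$, so $\mathcal{B}^{\pi(\mathbf{Y})}=\mathcal{B}^c$. Joint concavity of $\Lambda$ then gives
\begin{equation*}
\Lambda\big(\lambda\mathbf{X}_1+(1-\lambda)\mathbf{X}_2,\mathbf{Y}\big)=\Lambda\big(\lambda(\mathbf{X}_1,\widetilde{\mathbf{Y}}_1)+(1-\lambda)(\mathbf{X}_2,\mathbf{Y}_2)\big)\geq\lambda\Lambda(\mathbf{X}_1,\widetilde{\mathbf{Y}}_1)+(1-\lambda)\Lambda(\mathbf{X}_2,\mathbf{Y}_2),
\end{equation*}
the right-hand side lies in $\mathcal{B}^c$ because both summands do and $\mathcal{B}^c$ is convex, and monotonicity of $\mathcal{B}^c$ upgrades the inequality to $\Lambda(\lambda\mathbf{X}_1+(1-\lambda)\mathbf{X}_2,\mathbf{Y})\in\mathcal{B}^c=\mathcal{B}^{\pi(\mathbf{Y})}$, i.e.\ $\lambda\mathbf{X}_1+(1-\lambda)\mathbf{X}_2\in\mathcal{A}^{\mathbf{Y}}$ with $\pi(\mathbf{Y})\leq m$. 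That is \ref{condqc}, and Lemma \ref{lem:rm}(i) finishes.

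The one point that is not routine bookkeeping — and the reason for the somewhat involved hypotheses — is the alignment step. In Proposition \ref{prop22} condition (\ref{22}) let one dump both acceptability statements into the single set $\mathcal{B}^{\theta(\mathbf{Y}_2)}$; here one cannot simply take an arbitrary mixture of $\mathbf{Y}_1$ and $\mathbf{Y}_2$, because $\pi$ of a mixture generally takes a value strictly between $\pi(\mathbf{Y}_1)$ and $\pi(\mathbf{Y}_2)$, and then neither $\mathcal{B}^{\pi(\mathbf{Y}_1)}$ nor $\mathcal{B}^{\pi(\mathbf{Y}_2)}$ need be contained in the $\mathcal{B}$-set attached to the mixture, so the single-set convexity of $(\mathcal{B}^x)_{x\in\mathbb{R}}$ cannot be invoked. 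Forcing $\pi(\widetilde{\mathbf{Y}}_1)=\pi(\mathbf{Y}_2)$ through the reference asset $u$ (using $\pi(u)=1$, positive linearity of $\pi$, $\mathbf{0}\in\mathcal{C}$, and $\mathcal{C}+\mathbb{R}_+^N\subseteq\mathcal{C}$, the last two also giving $u=\mathbf{0}+u\in\mathcal{C}$) is exactly what removes this obstruction; after that the concavity–convexity–monotonicity chain is identical to the earlier proofs.
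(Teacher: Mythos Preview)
Your proof is correct and follows essentially the same route as the paper: verify \ref{condmon} via monotonicity of $\mathcal{B}^{\pi(\mathbf{Y})}$, and for \ref{condqc} upgrade $\mathbf{Y}_1$ to $\widetilde{\mathbf{Y}}_1=\mathbf{Y}_1+(\pi(\mathbf{Y}_2)-\pi(\mathbf{Y}_1))u$ so that $\pi(\widetilde{\mathbf{Y}}_1)=\pi(\mathbf{Y}_2)$, then take the convex combination $\mathbf{Y}=\lambda\widetilde{\mathbf{Y}}_1+(1-\lambda)\mathbf{Y}_2$ and run the concavity--convexity--monotonicity chain inside the single set $\mathcal{B}^{\pi(\mathbf{Y}_2)}$. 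Your final paragraph explaining why the alignment via $u$ is needed (as opposed to a direct mixture) is a welcome clarification that the paper leaves implicit.
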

\begin{proof}
Property \ref{condmon}: it follows immediately from the monotonicity of $%
\mathcal{B}^{x},$ $x\in \mathbb{R}$. \\
Property \ref{condqc}: Let $\mathbf{Y}_{1},\mathbf{Y}_{2}\in \mathcal{C}$, $%
m\in \mathbb{R}$ and assume w.l.o.g. that $\pi (\mathbf{Y}_{1})\leq \pi (%
\mathbf{Y}_{2})\leq m$. Let $\mathbf{X}_{1}\in \mathcal{A}^{\mathbf{Y}_{1}}$%
, $\ \mathbf{X}_{2}\in \mathcal{A}^{\mathbf{Y}_{2}}\ $and $\lambda \in
\lbrack 0,1]$. Then $\Lambda (\mathbf{X}_{1},\mathbf{Y}_{1}\mathbf{)}\in 
\mathcal{B}^{\pi (\mathbf{Y}_{1}\mathbf{)}}$ and $\Lambda (\mathbf{X}_{2},%
\mathbf{Y}_{2}\mathbf{)}\in \mathcal{B}^{\pi (\mathbf{Y}_{2}\mathbf{)}}$.
Because $(\mathcal{B}^{x})_{x\in \mathbb{R}}$ is increasing, we get $\Lambda (%
\mathbf{X}_{1},\mathbf{Y}_{1})\in \mathcal{B}^{\pi (\mathbf{Y}_{2}\mathbf{)}}
$. Set 
\begin{equation*}
\mathbf{\hat{Y}}_{1}:=\mathbf{Y}_{1}+(\pi (\mathbf{Y}_{2})-\pi (\mathbf{Y}%
_{1}))u\in \mathcal{C}.
\end{equation*}%
Then $\mathbf{\hat{Y}}_{1}\geq \mathbf{Y}_{1}$ and, since $\Lambda (\mathbf{X%
},\mathbf{\cdot )}$ is increasing, $\Lambda (\mathbf{X}_{1},\mathbf{\hat{Y}}%
_{1})\geq \Lambda (\mathbf{X}_{1},\mathbf{Y}_{1}\mathbf{)\in }\mathcal{B}%
^{\pi (\mathbf{Y}_{2}\mathbf{)}}$ and%
\begin{equation*}
\Lambda (\mathbf{X}_{1},\mathbf{\hat{Y}}_{1})\in \mathcal{B}^{\pi (\mathbf{Y}%
_{2}\mathbf{)}}
\end{equation*}%
because of the monotonicity of $\mathcal{B}^{\pi (\mathbf{Y}_{2}\mathbf{)}}$%
. Letting 
\begin{equation*}
\mathbf{Y}:=\lambda \mathbf{\hat{Y}}_{1}+(1-\lambda )\mathbf{Y}_{2}\in 
\mathcal{C}\,,
\end{equation*}%
and using the properties of $\pi $ we obtain: 
\begin{eqnarray*}
\pi (\mathbf{Y}) &=&\pi (\lambda \lbrack \mathbf{Y}_{1}+(\pi (\mathbf{Y}%
_{2})-\pi (\mathbf{Y}_{1}))u]+(1-\lambda )\mathbf{Y}_{2}) \\
&=&\lambda \pi (\mathbf{Y}_{1}+(\pi (\mathbf{Y}_{2})-\pi (\mathbf{Y}%
_{1}))u)+(1-\lambda )\pi (\mathbf{Y}_{2})=\pi (\mathbf{Y_{2}})\leq m.
\end{eqnarray*}%
From the concavity of $\Lambda (\mathbf{\cdot },\mathbf{\cdot )}$ and the
convexity of $\mathcal{B}^{\pi (\mathbf{Y}_{2}\mathbf{)}}$ we obtain:%
\begin{eqnarray*}
\Lambda (\lambda \mathbf{X}_{1}\mathbf{+(}1-\lambda \mathbf{)X}_{2},\mathbf{%
Y)} &=&\Lambda (\lambda \mathbf{X}_{1}\mathbf{+(}1-\lambda \mathbf{)X}%
_{2},\lambda \mathbf{\hat{Y}}_{1}+(1-\lambda )\mathbf{Y}_{2}\mathbf{)} \\
&=&\Lambda (\lambda (\mathbf{X}_{1},\mathbf{\hat{Y}}_{1}\mathbf{)+(}%
1-\lambda \mathbf{)}(\mathbf{X}_{2},\mathbf{Y}_{2}\mathbf{)}) \\
&\geq &\lambda \Lambda (\mathbf{X}_{1},\mathbf{\hat{Y}}_{1}\mathbf{)+(}%
1-\lambda \mathbf{)\Lambda }(\mathbf{X}_{2},\mathbf{Y}_{2}\mathbf{)}\\
&&\in 
\mathcal{B}^{\pi (\mathbf{Y}_{2}\mathbf{)}}=\mathcal{B}^{\pi (\mathbf{Y)}},
\end{eqnarray*}
and the monotonicity of $\mathcal{B}^{\pi (\mathbf{Y)}}$ implies: 
\begin{equation*}
\Lambda (\lambda \mathbf{X}_{1}\mathbf{+(}1-\lambda \mathbf{)X}_{2},\mathbf{%
Y)}\in \mathcal{B}^{\pi (\mathbf{Y)}},
\end{equation*}%
which means: $\lambda \mathbf{X}_{1}\mathbf{+(}1-\lambda \mathbf{)X}_{2}\in 
\mathcal{A}^{\mathbf{Y}}$. Hence, property \ref{condqc} is satisfied.
\end{proof}


\section{Scenario-dependent Allocations}\label{sec:RandCash}

We will now focus on the particularly interesting family of sets $\mathcal{C}$ of risk level vectors $\mathbf{Y}$ defined by
\begin{equation} \label{def:RandCap2}
\mathcal{C}\subseteq \{\mathbf{Y}\in \mathcal{L}^{0}(\mathbb{R}^{N})\mid
\sum_{n=1}^{N}Y^{n}\in \mathbb{R}\}=:\mathcal{C}_{\mathbb{R}}.
\end{equation}
A vector $\mathbf{Y}\in\mathcal{C}$ as in \eqref{def:RandCap2} can be interpreted as cash amount $\sum_{n=1}^{N}Y^{n}\in \mathbb{R}$ (which is known today because it is deterministic) that at the future time horizon $T$ is allocated to the financial institutions according to the realized scenario. That is, for $i=1,...,N$,  $Y^{i}(\omega)$ is allocated to institution $i$ in case scenario $\omega$ has been realized at $T$, but the total allocated cash amount $\sum_{n=1}^{N}Y^{n}$ stays constant over the different scenarios. One could think of a lender of last resort or a regulator who at time $T$ has a certain amount of cash at disposal to distribute among financial institutions in the most efficient way (with respect to systemic risk) according to the scenario that has been realized. Restrictions on the admissible distributions of cash are implied by the choice of set $\mathcal{C}$. For example, choosing $\mathcal{C}=\mathbb{R}^{N}$ corresponds to the fact that the distribution is deterministic, i.e. the allocation to each institution is already determined today, whereas for $\mathcal{C}=\mathcal{C}_{\mathbb{R}}$ the distribution can be chosen completely freely depending on the scenario $\omega$ that has been realized. Note that the latter case includes potential negative cash allocations, i.e. withdrawals of cash from certain components which allows for cross-subsidization between financial institutions. The (more realistic) situation of scenario-dependent cash distribution without cross-subsidization is represented by the set
\begin{equation*} 
\mathcal{C} := \{\mathbf{Y}\in \mathcal{C}_{\mathbb{R}} \mid
Y^{i}\ge 0, i=1,...N\}.
\end{equation*}
In this section we give some structural results and  examples concerning systemic risk measures defined in terms of sets $\mathcal{C}$ as in \eqref{def:RandCap2}. In Section \ref{sec:Gaussian} and \ref{finite_space} we then present two more extensive examples of systemic risk measures that employ specific sets  $\mathcal{C}$ of type \eqref{def:RandCap2}. \\
In the following we always assume the componentwise order relation on $\mathcal{L}^{0}(\mathbb{R}^{N})$, i.e. $\mathbf{X}_{1}\succeq \mathbf{X}_{2}$ if $\mathbf{X}_{1}^{i}\geq \mathbf{X}_{2}^{i}$ for all components $i=1,...,N$, and we start by specifying a general class of quasi-convex systemic risk measures that allow the interpretation of the minimal total amount needed to secure the system by scenario-dependent cash allocations as described above. To this end let $\mathcal{C}\subseteq \mathcal{C}_{\mathbb{R}}$ be such that 
\begin{equation}
\mathcal{C}+\mathbb{R}_{+}^{N}\in \mathcal{C}.  \label{eq:propC}
\end{equation}
Let the valuation $\pi (\mathbf{Y})$ of a $\mathbf{Y}\in\mathcal{C}$ be given by $\widetilde{\pi }(\sum_{n=1}^{N}Y^{n})$ for $\widetilde{\pi }:\mathbb{R}\rightarrow \mathbb{R}$ increasing (for example the present value of the total cash amount $\sum_{n=1}^{N}Y^{n}$ at time $T$). Further, let $(\mathcal{A}%
^{x})_{x\in \mathbb{R}}$ be an increasing family (w.r.t. $x)$ of monotone,
convex subsets $\mathcal{A}^{x}\subseteq \mathcal{L}^{0}(\mathbb{R^{N}})$,
and let $\theta :\mathbb{R}\rightarrow \mathbb{R}$ be an increasing function. We can then define the following family of systemic risk measures
\begin{equation}  \label{eq:sumsrm}
\rho (\mathbf{X}):= \inf \{\pi (\mathbf{Y})\in \mathbb{R}\mid 
\mathbf{Y}\in \mathcal{C},\,\mathbf{X}+\mathbf{Y}\in \mathcal{A}^{\theta(\sum Y^{n})}\}\,,
\end{equation}
i.e. the risk measure can be interpreted as the valuation of the minimal total amount needed at time $T$ to secure the system by distributing the cash in the most effective way among institutions. Note that here the criteria whether a system is safe or not after injecting a vector $\mathbf{Y}$ is given by the acceptance set $\mathcal{A}^{\theta(\sum Y^{n})}$ which itself depends on the total amount $\sum_{n=1}^{N}Y^{n}$. This gives, for example, the possibility of modeling an increasing level of prudence when defining safe systems for higher amounts of the required total cash. This effect will lead to truly quasi-convex systemic risk measures as the next proposition shows:

\begin{proposition}
The family of sets 
\begin{equation*}
\mathcal{A}^{\mathbf{Y}}:= \mathcal{A}^{\theta(\sum Y^{n})} - \mathbf{Y}, \ \mathbf{Y}\in\mathcal{C},
\end{equation*}
fulfills properties \ref{condmon} and \ref{condqc} with respect to the componentwise order relation on $\mathcal{L}^{0}(\mathbb{R}^{N})$. Hence the map \eqref{eq:sumsrm} is a quasi-convex risk measure. If further $\widetilde{\pi }$ is convex and $\theta$ is constant then the map \eqref{eq:sumsrm} is even a convex risk measure.
\end{proposition}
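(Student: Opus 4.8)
The plan is to verify that the family $\mathcal{A}^{\mathbf{Y}}=\mathcal{A}^{\theta(\sum Y^{n})}-\mathbf{Y}$ satisfies properties \ref{condmon} and \ref{condqc}, so that Lemma \ref{lem:rm}(i) yields that $\rho$ is quasi-convex; for the last assertion I would additionally check \ref{condc} and apply Lemma \ref{lem:rm}(ii). Property \ref{condmon} is immediate: if $\mathbf{X}_{1}\in\mathcal{A}^{\mathbf{Y}}$, i.e. $\mathbf{X}_{1}+\mathbf{Y}\in\mathcal{A}^{\theta(\sum Y^{n})}$, and $\mathbf{X}_{2}\succeq\mathbf{X}_{1}$, then $\mathbf{X}_{2}+\mathbf{Y}\succeq\mathbf{X}_{1}+\mathbf{Y}$ and monotonicity of the set $\mathcal{A}^{\theta(\sum Y^{n})}$ gives $\mathbf{X}_{2}+\mathbf{Y}\in\mathcal{A}^{\theta(\sum Y^{n})}$, i.e. $\mathbf{X}_{2}\in\mathcal{A}^{\mathbf{Y}}$.

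For \ref{condqc}, fix $m\in\mathbb{R}$, $\mathbf{Y}_{1},\mathbf{Y}_{2}\in\mathcal{C}$ with $\pi(\mathbf{Y}_{i})\le m$, $\mathbf{X}_{i}\in\mathcal{A}^{\mathbf{Y}_{i}}$ and $\lambda\in[0,1]$, and set $c_{i}:=\sum_{n}Y_{i}^{n}\in\mathbb{R}$. By the symmetric roles of the two indices I may assume $c_{1}\le c_{2}$; then $\widetilde{\pi}(c_{1})\le\widetilde{\pi}(c_{2})\le m$ and, since $\theta$ and the family $(\mathcal{A}^{x})_{x}$ are increasing, $\mathcal{A}^{\theta(c_{1})}\subseteq\mathcal{A}^{\theta(c_{2})}$; since by definition $\mathbf{X}_{1}+\mathbf{Y}_{1}\in\mathcal{A}^{\theta(c_{1})}$ and $\mathbf{X}_{2}+\mathbf{Y}_{2}\in\mathcal{A}^{\theta(c_{2})}$, both summands lie in $\mathcal{A}^{\theta(c_{2})}$. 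The key step is to raise $\mathbf{Y}_{1}$ to an element of $\mathcal{C}$ with total $c_{2}$: put $\widehat{\mathbf{Y}}_{1}:=\mathbf{Y}_{1}+(c_{2}-c_{1},0,\dots,0)$, so that $\widehat{\mathbf{Y}}_{1}\in\mathcal{C}+\mathbb{R}_{+}^{N}\subseteq\mathcal{C}$, $\sum_{n}\widehat{Y}_{1}^{n}=c_{2}$ and $\widehat{\mathbf{Y}}_{1}\succeq\mathbf{Y}_{1}$; since $\mathbf{X}_{1}+\widehat{\mathbf{Y}}_{1}\succeq\mathbf{X}_{1}+\mathbf{Y}_{1}\in\mathcal{A}^{\theta(c_{2})}$, monotonicity gives $\mathbf{X}_{1}+\widehat{\mathbf{Y}}_{1}\in\mathcal{A}^{\theta(c_{2})}$. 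Now take $\mathbf{Y}:=\lambda\widehat{\mathbf{Y}}_{1}+(1-\lambda)\mathbf{Y}_{2}$; convexity of $\mathcal{C}$ together with $\mathcal{C}+\mathbb{R}_{+}^{N}\subseteq\mathcal{C}$ gives $\mathbf{Y}\in\mathcal{C}$, one has $\sum_{n}Y^{n}=c_{2}$ whence $\pi(\mathbf{Y})=\widetilde{\pi}(c_{2})\le m$, and convexity of the set $\mathcal{A}^{\theta(c_{2})}$ yields
\[\lambda\mathbf{X}_{1}+(1-\lambda)\mathbf{X}_{2}+\mathbf{Y}=\lambda(\mathbf{X}_{1}+\widehat{\mathbf{Y}}_{1})+(1-\lambda)(\mathbf{X}_{2}+\mathbf{Y}_{2})\in\mathcal{A}^{\theta(c_{2})}=\mathcal{A}^{\theta(\sum Y^{n})},\]
that is $\lambda\mathbf{X}_{1}+(1-\lambda)\mathbf{X}_{2}\in\mathcal{A}^{\mathbf{Y}}$. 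This is exactly \ref{condqc}, and Lemma \ref{lem:rm}(i) concludes.

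When $\theta$ is constant, say $\theta\equiv\theta_{0}$, the set $\mathcal{A}:=\mathcal{A}^{\theta_{0}}$ no longer depends on $\mathbf{Y}$ and the lifting is unnecessary: for \ref{condc} one takes $\mathbf{Y}:=\lambda\mathbf{Y}_{1}+(1-\lambda)\mathbf{Y}_{2}\in\mathcal{C}$, notes $\lambda\mathbf{X}_{1}+(1-\lambda)\mathbf{X}_{2}+\mathbf{Y}=\lambda(\mathbf{X}_{1}+\mathbf{Y}_{1})+(1-\lambda)(\mathbf{X}_{2}+\mathbf{Y}_{2})\in\mathcal{A}$ by convexity of $\mathcal{A}$, and uses convexity of $\widetilde{\pi}$ to get $\pi(\mathbf{Y})=\widetilde{\pi}(\lambda c_{1}+(1-\lambda)c_{2})\le\lambda\pi(\mathbf{Y}_{1})+(1-\lambda)\pi(\mathbf{Y}_{2})$; together with \ref{condmon}, Lemma \ref{lem:rm}(ii) gives the convex systemic risk measure.

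The main obstacle is precisely the $\mathbf{Y}$-dependence of the acceptance set through $\theta(\sum Y^{n})$: since this dependence is not affine, one cannot appeal to properties \ref{condqc2}/\ref{condc2} and Lemma \ref{Lem2a3a} as in the previous section, and the device of ``raising the smaller total up to the larger one'' — which exploits the monotonicity of $\mathcal{A}^{x}$ in $x$ and the stability $\mathcal{C}+\mathbb{R}_{+}^{N}\subseteq\mathcal{C}$ — is what makes \ref{condqc} go through. One should also keep in mind that the convex-combination step needs $\mathcal{C}$ to be convex, which holds in all the cases of interest ($\mathcal{C}=\mathcal{C}_{\mathbb{R}}$, $\mathbb{R}^{N}$, or $\{\mathbf{Y}\in\mathcal{C}_{\mathbb{R}}\mid Y^{i}\ge 0\}$).
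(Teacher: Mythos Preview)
Your proof is correct and follows essentially the same route as the paper: the ``lifting'' $\widehat{\mathbf{Y}}_{1}:=\mathbf{Y}_{1}+(c_{2}-c_{1},0,\dots,0)$ combined with monotonicity and convexity of $\mathcal{A}^{\theta(c_{2})}$ is exactly the paper's argument for \ref{condqc}, and your remark that convexity of $\mathcal{C}$ is implicitly needed (and holds in the cases of interest) is well taken. For the final convexity assertion the paper simply cites \cite{FrittelliScandolo}, whereas you spell out the verification of \ref{condc} directly; your explicit argument is a welcome addition and entirely in line with the intended reasoning.
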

\begin{proof}
Property \ref{condmon} follows immediately from the monotonicity of $%
\mathcal{A}^{x}, x\in\mathbb{R}$.
To show Property \ref{condqc} let $\mathbf{Y}_{1},\mathbf{Y}_{2}\in \mathcal{C}$, $%
m \in \mathbb{R}$, and $\widetilde{\pi}(\sum_{n=1}^{N}Y_1^{n}) \le 
\widetilde{\pi}(\sum_{n=1}^{N}Y_2^{n}) \le m$, where w.l.o.g. $\sum Y_1^{n}
\le \sum Y_2^{n}$. Further, let $\mathbf{X}_{1}\in \mathcal{A}^{\mathbf{Y}%
_{1}}$, $\ \mathbf{X}_{2}\in \mathcal{A}^{\mathbf{Y}_{2}}\ $and $\lambda \in
\lbrack 0,1]$. Because $(\mathcal{A}^{x})_{x\in \mathbb{R}}$ and $\theta$
are increasing we get $\mathbf{X}_{1}+\mathbf{Y}_{1}\in\mathcal{A}%
^{\theta(\sum Y_2^{n})}$. Set 
\begin{equation*}
\mathbf{\hat{Y}}_{1}:= \mathbf{Y}_{1} + (\sum Y_2^{n} - \sum
Y_1^{n},0,...,0) \in \mathcal{C} \,.
\end{equation*}
Then 
\begin{equation*}
\mathbf{X}_{1}+\mathbf{\hat{Y}}_{1} \in\mathcal{A}^{\theta(\sum Y_2^{n})}
\end{equation*}
because of the monotonicity of $\mathcal{A}^{\theta(\sum Y_2^{n})}$, and 
\begin{equation*}
\lambda(\mathbf{X}_{1}+\mathbf{\hat{Y}}_{1})+(1-\lambda)(\mathbf{X}_{2}+%
\mathbf{Y}_{2}) \in\mathcal{A}^{\theta(\sum Y_2^{n})}
\end{equation*}
because of the convexity of $\mathcal{A}^{\theta(\sum Y_2^{n})}$. Furthermore,
with 
\begin{equation*}
\mathbf{Y}:=\lambda\mathbf{\hat{Y}}_{1}+(1-\lambda)\mathbf{Y}_{2} \,,
\end{equation*}
we get $\lambda\mathbf{X}_{1} + (1-\lambda)\mathbf{X}_{2} \in\mathcal{A}^{%
\mathbf{Y}}$ and $\pi(\mathbf{Y})=\pi(\mathbf{Y_2}) \le m$ since $%
\sum_{n=1}^{N}Y^{n}=\sum_{n=1}^{N}Y_2^{n}$. Hence, property \ref{condqc} is
satisfied. The final statement
follows from \cite{FrittelliScandolo}.
\end{proof}

\

\noindent Note that the quasi-convex risk measures in \eqref{eq:sumsrm} are obtained in a similar way as the ones in \eqref{prop:qc3}, the main difference being that the risk measures in \eqref{prop:qc3} are defined on an aggregated level in terms of one-dimensional acceptance sets while the ones in \eqref{eq:sumsrm} are defined in terms of general multi-dimensional acceptance sets. However, in the case $\mathcal{C}=\mathcal{C}_{\mathbb{R}}$ the next proposition shows that every systemic risk measure of type \eqref{eq:sumsrm} can be written as a univariate quasi-convex risk measure applied to the sum of the risk factors. That is, when free scenario-dependent allocations with unlimited cross-subsidization between the financial institutions are possible, the sum as aggregation rule might not only be acceptable as mentioned in the introduction but is the canonical way of aggregation and the canonical way of measuring systemic risk is of type \eqref{112}. However, while this situation and insight is relevant for a portfolio manager, the typical financial systems does not allow for unlimited cross-subsidization and more restricted sets $\mathcal{C}$ together with more appropriate aggregation rules have to be considered.

\begin{proposition}
\label{prop:sumagg} Let $\mathcal{C}=\mathcal{C}_{\mathbb{R}}$. Then $\rho $
in \eqref{eq:sumsrm} is of the form 
\begin{equation}
\rho (\mathbf{X})=\widetilde{\rho }(\sum_{n=1}^{N}X^{n})
\end{equation}%
for some quasi-convex risk measure 
\begin{equation*}
\widetilde{\rho }:\mathcal{L}^{0}(\mathbb{R})\rightarrow \overline{\mathbb{R}%
}:=\mathbb{R}\cup \left\{ -\infty \right\} \cup \left\{ \infty \right\} \,.
\end{equation*}
\end{proposition}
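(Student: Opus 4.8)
The plan is to show that $\rho$ from \eqref{eq:sumsrm} depends on its argument $\mathbf{X}$ only through the aggregate $\sum_{n=1}^{N}X^{n}$, and then to push the structural properties of $\rho$ (which by the previous proposition is a quasi-convex systemic risk measure in the sense of our definition, i.e.\ $\succeq$-monotone decreasing and quasi-convex) through this factorization. Since $\mathcal{C}=\mathcal{C}_{\mathbb{R}}$, the feasibility constraint in \eqref{eq:sumsrm} is so permissive that any reshuffling of the components of $\mathbf{X}$ can be absorbed into $\mathbf{Y}$ without changing either the cost $\pi(\mathbf{Y})$ or the relevant acceptance level $\theta(\sum_{n=1}^{N}Y^{n})$.

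The first step is the claim: if $\mathbf{X}_{1},\mathbf{X}_{2}\in\mathcal{L}^{0}(\mathbb{R}^{N})$ satisfy $\sum_{n=1}^{N}X_{1}^{n}=\sum_{n=1}^{N}X_{2}^{n}$ $\mathbb{P}$-a.s., then $\rho(\mathbf{X}_{1})=\rho(\mathbf{X}_{2})$. To see this, let $\mathbf{Y}\in\mathcal{C}_{\mathbb{R}}$ be feasible for $\mathbf{X}_{1}$, i.e.\ $\mathbf{X}_{1}+\mathbf{Y}\in\mathcal{A}^{\theta(\sum Y^{n})}$, and put $\mathbf{Y}':=\mathbf{Y}+\mathbf{X}_{1}-\mathbf{X}_{2}$. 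Because $\sum_{n=1}^{N}(X_{1}^{n}-X_{2}^{n})=0$, we get $\sum_{n=1}^{N}(Y')^{n}=\sum_{n=1}^{N}Y^{n}\in\mathbb{R}$, so $\mathbf{Y}'\in\mathcal{C}_{\mathbb{R}}$, $\pi(\mathbf{Y}')=\widetilde{\pi}\bigl(\sum_{n}(Y')^{n}\bigr)=\widetilde{\pi}\bigl(\sum_{n}Y^{n}\bigr)=\pi(\mathbf{Y})$, and $\mathbf{X}_{2}+\mathbf{Y}'=\mathbf{X}_{1}+\mathbf{Y}\in\mathcal{A}^{\theta(\sum Y^{n})}=\mathcal{A}^{\theta(\sum (Y')^{n})}$; hence $\mathbf{Y}'$ is feasible for $\mathbf{X}_{2}$ with the same cost. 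By symmetry, $\mathbf{Y}\mapsto\mathbf{Y}+\mathbf{X}_{1}-\mathbf{X}_{2}$ is a cost-preserving bijection between the feasible sets for $\mathbf{X}_{1}$ and for $\mathbf{X}_{2}$, so the two infima coincide in $\overline{\mathbb{R}}$ (this also covers the cases where either infimum equals $+\infty$ or $-\infty$).

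Given the claim, I would define $\widetilde{\rho}:\mathcal{L}^{0}(\mathbb{R})\to\overline{\mathbb{R}}$ by $\widetilde{\rho}(Z):=\rho(Z,0,\dots,0)$; this is well posed (its value does not depend on the chosen representative with aggregate $Z$) and satisfies $\rho(\mathbf{X})=\widetilde{\rho}\bigl(\sum_{n=1}^{N}X^{n}\bigr)$ for all $\mathbf{X}$. It then remains to transfer monotonicity and quasi-convexity. If $Z_{1}\ge Z_{2}$ in $\mathcal{L}^{0}(\mathbb{R})$, then $(Z_{1},0,\dots,0)\succeq(Z_{2},0,\dots,0)$ componentwise, so $\succeq$-monotone decreasingness of $\rho$ gives $\widetilde{\rho}(Z_{1})\le\widetilde{\rho}(Z_{2})$. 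For quasi-convexity, for $\lambda\in[0,1]$ one has $\lambda(Z_{1},0,\dots,0)+(1-\lambda)(Z_{2},0,\dots,0)=(\lambda Z_{1}+(1-\lambda)Z_{2},0,\dots,0)$, so quasi-convexity of $\rho$ yields $\widetilde{\rho}(\lambda Z_{1}+(1-\lambda)Z_{2})\le\widetilde{\rho}(Z_{1})\vee\widetilde{\rho}(Z_{2})$. Thus $\widetilde{\rho}$ is a quasi-convex one-dimensional risk measure, which is the assertion. There is no substantial obstacle here; the only point demanding care is verifying that the translation $\mathbf{Y}\mapsto\mathbf{Y}+\mathbf{X}_{1}-\mathbf{X}_{2}$ preserves feasibility, which hinges precisely on the facts that $\mathcal{C}_{\mathbb{R}}$ is stable under adding vectors of zero (deterministic) total, that both $\pi$ and the index $\theta(\sum Y^{n})$ of the acceptance set depend only on $\sum_{n}Y^{n}$, and that $\mathbf{X}_{2}+\mathbf{Y}'$ equals $\mathbf{X}_{1}+\mathbf{Y}$ on the nose, so that no property of the family $(\mathcal{A}^{x})_{x}$ beyond the standing assumptions is used.
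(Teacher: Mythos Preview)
Your proof is correct and follows essentially the same approach as the paper's: both establish that $\rho$ depends only on $\sum_n X^n$ via the translation $\mathbf{Y}\mapsto\mathbf{Y}+\mathbf{X}_1-\mathbf{X}_2$, then define $\widetilde{\rho}$ through a representative of the form $(Z,0,\dots,0)$ and inherit monotonicity and quasi-convexity from $\rho$. Your version is slightly more explicit about the bijection between feasible sets and the $\pm\infty$ cases, but the argument is the same.
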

\begin{proof}
Let $\mathbf{X}_{1}, \mathbf{X}_{1} \in \mathcal{L}^{0}(\mathbb{R}^{N})$ be
such that $\sum_{n=1}^{N}X_1^{n}=\sum_{n=1}^{N}X_2^{n}$. In the notation of
the proof of Lemma \ref{lem:rm}, let $\mathbf{Y^1}\in B(\mathbf{X_1})$ and
set 
\begin{equation*}
\mathbf{Y_2} := \mathbf{Y_1} + (\mathbf{X_1}-\mathbf{X_2}) \in \mathcal{C}
\,.
\end{equation*}
Then $\mathbf{X_1}+\mathbf{Y_1}=\mathbf{X_2}+\mathbf{Y_2}$, and thus $
\mathbf{Y_2}\in B(\mathbf{X_2})$ because $\sum Y_1^{n}=\sum Y_2^{n}$ which
implies $\mathcal{A}^{\theta(\sum Y_1^{n})}=\mathcal{A}^{\theta(\sum
Y_2^{n})}$. Since $\pi(\mathbf{Y_1})=\pi(\mathbf{Y_2})$ this implies $\rho(
\mathbf{X_1})\ge\rho(\mathbf{X_2})$. Interchanging the roles of $\mathbf{X_1}
$ and $\mathbf{X_2}$ yields $\rho(\mathbf{X_1}) = \rho(\mathbf{X_2})$, and
the map $\widetilde{\rho} :\mathcal{L}^{0}(\mathbb{R})\rightarrow \overline{
\mathbb{R}}$ given by 
\begin{equation*}
\widetilde{\rho}(X) := \rho(\mathbf{X})\,,
\end{equation*}
where $\mathbf{X} \in \mathcal{L}^{0}(\mathbb{R^N})$ is such that $X=\sum_{n=1}^{N}X^{n}$ is well-defined. For $X_1,X_2 \in \mathcal{L}^{0}(
\mathbb{R})$ define 
\begin{equation*}
\mathbf{X_i}:=(X_i,0,...,0) \in \mathcal{L}^{0}(\mathbb{R^N}) \;, \ \ i=1,2
\,.
\end{equation*}
Then 
\begin{align*}
\widetilde{\rho}(\lambda X_1 + (1-\lambda) X_2) = \rho(\lambda \mathbf{X_1}
+ (1-\lambda) \mathbf{X_2})& \le \max\{\rho(\mathbf{X_1}),\rho(\mathbf{X_2})\} \\
&= \max\{\widetilde{\rho}(X_1),\widetilde{\rho}(X_2)\}\,.
\end{align*}
Further, if $X_1\le X_2$ then $\mathbf{X_1}\le\mathbf{X_2}$ and 
\begin{equation*}
\widetilde{\rho}(X_1)=\rho(\mathbf{X_1})\ge \rho(\mathbf{X_2}) = \widetilde{%
\rho}(X_2) \,.
\end{equation*}
So $\widetilde{\rho} :\mathcal{L}^{0}(\mathbb{R})\rightarrow \overline{%
\mathbb{R}}$ is a quasi-convex risk measure and $\rho(\mathbf{X}) = 
\widetilde{\rho} (\sum_{n=1}^{N}X^{n})$.
\end{proof}

\ 

\noindent We conclude this Section by two examples that compare the risk measurement by \textquotedblleft injecting after aggregation\textquotedblright as in \eqref{2} versus the risk measurement by  \textquotedblleft injecting before aggregation\textquotedblright as in \eqref{4} for different sets $\mathcal{C}\subset\mathcal{C}_{\mathbb{R}}$ in the situation of the worst case and the expected shortfall acceptance sets, respectively.

\subsection{Example: Worst Case Acceptance Set
\label{C-worst}}
In this example we measure systemic risk by considering aggregated risk factors defined in terms of the aggregation rule
\[
\Lambda(\mathbf{X}):=\sum_{i=1}^{N}-(X_{i})^{-}.
\]
Further, we consider the acceptance set $\mathbb{A}^W$ associated to the worst case risk measure, that is a system $\mathbf{X}$ is acceptable (or safe) if $\sum_{i=1}^{N}-(X_{i})^{-} \in \mathbb{A}^W$ where $\mathbb{A}^W:=\mathcal{L}_+^{0}(\mathbb{R})$, and we denote by $\rho_W:\mathcal{L}^{0}(\mathbb{R})\rightarrow \overline{\mathbb{R}}$ the univariate worst case risk measure defined by
\[
\rho_W(X):=\inf \left\{m \in \mathbb{R} \ |\ X+ m \in \mathbb{A}^W\right\} \,.
\]
The possible sets $ \mathcal{C}$ are on one hand the deterministic allocations $ \mathcal{C}=\mathbb{R}^{N}$ and on the other hand the family of constrained scenario-dependent cash allocations of the form
\begin{align*}
& \mathcal{C}_{\gamma }:=\left\{ \mathbf{Y}\in \mathcal{C}_{\mathbb{R}}\ |\ Y_{i}\geq
\gamma _{i}\,, i=1,...N\right\} \,,
\end{align*}%
where $\gamma :=(\gamma _{1},...,\gamma _{N}),\,\gamma _{i}\in \lbrack
-\infty ,0]$. Note that for $\gamma :=(-\infty ,...,-\infty )$ this family of subsets
includes $\mathcal{C}_{\infty }=\mathcal{C}_{\mathbb{R}}$. Finally, we let the valuation be 
\[
\pi (\mathbf{Y}):=\sum_{i=1}^{n}Y_{i}.
\]
The objective of the following proposition is to analyze and relate the systemic risk measurement by \textquotedblleft injecting cash after aggregation\textquotedblright:
\begin{equation*}
\rho^{ag}(\mathbf{X}):=\inf \left\{y\in\mathbb{R} \ |\  \Lambda(\mathbf{X})+ y \in \mathbb{A}^W
\right\}=\rho_W(\sum_{i=1}^N -(X^i)^-) \,,
\end{equation*}
to the systemic risk measurement by \textquotedblleft injecting cash before aggregation\textquotedblright, both in the case of deterministic cash allocations:
\begin{equation*}
\rho^{\mathbb{R}^N}(\mathbf{X}):=\inf \left\{\pi (\mathbf{Y})|\ \mathbf{Y}\in {\mathbb{R}^N} \,, \Lambda(\mathbf{X}+\mathbf{Y}) \in \mathbb{A}^W \right\} \,,
\end{equation*}
as well as in the case of scenario-dependent cash allocations:
\begin{equation*}
\rho^{\gamma}(\mathbf{X}):=\inf \left\{\pi (\mathbf{Y}) |\ \mathbf{Y}\in \mathcal{C}%
_{\gamma}\,, \Lambda(\mathbf{X}+\mathbf{Y}) \in \mathbb{A}^W\right\} .
\end{equation*}

\begin{proposition}
\label{pro:ex} It holds that 
\begin{align*}
\rho^{\mathbb{R}^N}(\mathbf{X})& =\sum_{i=1}^N \rho_W(X^i) \geq \rho^{ag}(\mathbf{X}) \\
\rho^{\gamma}(\mathbf{X})& =\rho_W\left(\sum_{i=1}^N (X^i \mathbb{I
}_{\left\{X^i\le -\gamma_i\right\}} - \gamma_i \mathbb{I}_{\left\{X^i\ge -\gamma_i\right\}}) \right)\le \rho^{ag}(\mathbf{X})\,.
\end{align*}
In particular, for $\gamma=0:=(0,...,0)$ we get $\rho^{0}(\mathbf{X})=\rho^{ag}(\mathbf{X})$, and for $\gamma=-\infty:=(-\infty ,...,-\infty )$ we get $\rho^{-\infty}(\mathbf{X})=%
\rho_W(\sum_{i=1}^N X^i )$.
\end{proposition}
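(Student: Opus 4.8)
The plan is to observe that, for the worst-case acceptance set, the acceptability condition collapses to a componentwise one, and then to evaluate the three infima in closed form. \textbf{Step 1 (reducing acceptability).} For any $\mathbf{Z}\in\mathcal{L}^{0}(\mathbb{R}^{N})$ each term $-(Z^{i})^{-}$ of $\Lambda(\mathbf{Z})$ is non-positive, so $\Lambda(\mathbf{Z})=\sum_{i=1}^{N}-(Z^{i})^{-}\in\mathbb{A}^{W}=\mathcal{L}^{0}_{+}(\mathbb{R})$ if and only if each term vanishes a.s., i.e. if and only if $Z^{i}\ge 0$ a.s. for every $i$. Taking $\mathbf{Z}=\mathbf{X}+\mathbf{Y}$, the constraint appearing in $\rho^{\mathbb{R}^{N}}$ and in $\rho^{\gamma}$ reduces to: $X^{i}+Y^{i}\ge 0$ a.s. for all $i$. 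I also record that $\rho_{W}(Z)=\mathrm{ess\,sup}(-Z)$, so $\rho^{ag}(\mathbf{X})=\rho_{W}(\Lambda(\mathbf{X}))=\mathrm{ess\,sup}\big(\sum_{i=1}^{N}(X^{i})^{-}\big)$.

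\textbf{Step 2 (deterministic allocations).} Here $\mathbf{Y}=\mathbf{m}\in\mathbb{R}^{N}$, and the condition $X^{i}+m_{i}\ge 0$ a.s. is equivalent to $m_{i}\ge\mathrm{ess\,sup}(-X^{i})=\rho_{W}(X^{i})$; since $\pi(\mathbf{m})=\sum_{i}m_{i}$, minimising each coordinate yields $\rho^{\mathbb{R}^{N}}(\mathbf{X})=\sum_{i=1}^{N}\rho_{W}(X^{i})$. The inequality $\rho^{\mathbb{R}^{N}}(\mathbf{X})\ge\rho^{ag}(\mathbf{X})$ then follows from the sub-additivity of $\rho_{W}$: $\rho^{ag}(\mathbf{X})=\rho_{W}\big(\sum_{i}-(X^{i})^{-}\big)\le\sum_{i}\rho_{W}\big(-(X^{i})^{-}\big)=\sum_{i}\rho_{W}(X^{i})$, where the last equality uses $\rho_{W}(-(X^{i})^{-})=\rho_{W}(X^{i})$, valid for positions that can incur a loss.

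\textbf{Step 3 (scenario-dependent allocations; the main step).} Set $c^{*}:=\mathrm{ess\,sup}\big(\sum_{i=1}^{N}\max(-X^{i},\gamma_{i})\big)\in(-\infty,+\infty]$. If $\mathbf{Y}\in\mathcal{C}_{\gamma}$ is admissible then $Y^{i}\ge -X^{i}$ a.s. and $Y^{i}\ge\gamma_{i}$, hence $Y^{i}\ge\max(-X^{i},\gamma_{i})$ a.s.; summing over $i$ and recalling that $\pi(\mathbf{Y})=\sum_{i}Y^{i}$ is a single deterministic constant forces $\pi(\mathbf{Y})\ge c^{*}$, so $\rho^{\gamma}(\mathbf{X})\ge c^{*}$. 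For the reverse inequality (the case $c^{*}=+\infty$ being trivial) I would exhibit an optimal allocation $\mathbf{Y}^{*}\in\mathcal{C}_{\gamma}$: take $Y^{*,i}:=\max(-X^{i},\gamma_{i})$ for $i\ge 2$ and absorb the non-negative surplus into the first coordinate, $Y^{*,1}:=\max(-X^{1},\gamma_{1})+\big(c^{*}-\sum_{i=1}^{N}\max(-X^{i},\gamma_{i})\big)$. Then $\sum_{i}Y^{*,i}=c^{*}$ is constant (so $\mathbf{Y}^{*}\in\mathcal{C}_{\mathbb{R}}$), $Y^{*,i}\ge\max(-X^{i},\gamma_{i})\ge\gamma_{i}$ for all $i$ (so $\mathbf{Y}^{*}\in\mathcal{C}_{\gamma}$), and $Y^{*,i}\ge -X^{i}$ (so $\mathbf{X}+\mathbf{Y}^{*}$ is acceptable), whence $\rho^{\gamma}(\mathbf{X})\le c^{*}$ and therefore $\rho^{\gamma}(\mathbf{X})=c^{*}$. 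Writing $\max(-X^{i},\gamma_{i})=-X^{i}$ on $\{X^{i}\le -\gamma_{i}\}$ and $=\gamma_{i}$ on $\{X^{i}\ge -\gamma_{i}\}$ identifies $c^{*}$ with $\rho_{W}\big(\sum_{i=1}^{N}(X^{i}\mathbb{I}_{\{X^{i}\le-\gamma_{i}\}}-\gamma_{i}\mathbb{I}_{\{X^{i}\ge-\gamma_{i}\}})\big)$, the asserted formula; and since $\gamma_{i}\le 0$ we have $\max(-X^{i},\gamma_{i})\le(X^{i})^{-}$ pointwise, so $c^{*}\le\rho^{ag}(\mathbf{X})$. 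Specialising, $\gamma=0$ gives $\max(-X^{i},0)=(X^{i})^{-}$, hence $\rho^{0}(\mathbf{X})=\rho^{ag}(\mathbf{X})$; and $\gamma=-\infty$ gives $\max(-X^{i},-\infty)=-X^{i}$, hence $\rho^{-\infty}(\mathbf{X})=\mathrm{ess\,sup}\big(-\sum_{i}X^{i}\big)=\rho_{W}\big(\sum_{i}X^{i}\big)$.

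\textbf{Main obstacle.} Everything except the attainability claim in Step 3 is routine. The delicate point is that the a priori lower bound $c^{*}$ --- one constant that must dominate $\sum_{i}Y^{i}(\omega)$ for a.e.\ $\omega$ simultaneously --- is actually attained inside $\mathcal{C}_{\gamma}$; this is possible precisely because only the aggregate $\sum_{i}Y^{i}$ is constrained to be constant, while the components may vary with the scenario, so the per-scenario surplus can be dumped into a single coordinate that has no upper constraint. One should also verify measurability of $\mathbf{Y}^{*}$ and handle $c^{*}=+\infty$ (and, for the $\rho^{ag}$ comparison in Step 2, positions with no loss scenario) separately.
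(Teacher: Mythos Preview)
Your proof is correct and follows essentially the same approach as the paper: both exploit the componentwise reduction $\Lambda(\mathbf{X}+\mathbf{Y})\in\mathbb{A}^{W}\Leftrightarrow X^{i}+Y^{i}\ge 0$ a.s.\ for all $i$, identify the pointwise minimal allocation $Y^{*,i}=\max(-X^{i},\gamma_{i})$, and then dump the per-scenario surplus into the first coordinate to land inside $\mathcal{C}_{\gamma}$. The paper packages the lower bound via an auxiliary relaxed problem $\widetilde{\rho}(\mathbf{X})=\inf\{\mathrm{ess\,sup}\sum_{i}Y^{i}\}$, whereas you derive the same bound $c^{*}$ directly from the constraints; this is a cosmetic difference. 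If anything, you are more explicit than the paper on two points it leaves implicit: the inequalities $\rho^{\mathbb{R}^{N}}\ge\rho^{ag}$ and $\rho^{\gamma}\le\rho^{ag}$ (which the paper states but does not argue in its proof), and the edge cases $c^{*}=+\infty$ and positions with no loss scenario.
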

Before we prove the proposition we make some comments on the results. We see that if we interpret the risk measure as capital requirement (which in this situation also is possible for $\rho^{ag}$ since the aggregation $ \Lambda(\mathbf{X})$ can be interpreted as a monetary amount), the capital requirement when \textquotedblleft injecting before aggregation\textquotedblright with deterministic allocations is higher than the one when \textquotedblleft injecting after aggregation\textquotedblright . When allowing for \textquotedblleft injecting before aggregation\textquotedblright  with scenario-dependent cash allocations, the gained flexibility in allocating the cash leads to decreasing capital requirements. For fully flexible allocations the minimum amount $\rho^{-\infty}(\mathbf{X})=\rho_W(\sum_{i=1}^N X^i )$ is obtained, which corresponds to the representation given in Proposition \ref{prop:sumagg} in terms of the sum as aggregation rule. Obviously, here the relations between $\rho^{ag}$, $\rho^{\mathbb{R}^N}$, and $\rho^{\gamma}$ depend on the choice of the acceptance set in conjunction with the aggregation function as is illustrated in the next example.

Further, from the proof below it follows that in the case $\mathcal{C}=\mathbb{R}^{N}$ there exists a unique allocation $\mathbf{Y}^*\in\mathbb{R}^{N}$ for a given $\mathbf{X}\in\mathcal{L}^{0}(\mathbb{R}^{N})$ such that $\rho^{\mathbb{R}^N}(\mathbf{X})=\pi (\mathbf{Y}^*)$, which implies an unambiguous ranking of the systemic riskiness of the institutions. On the other hand, in the case $\mathcal{C}=\mathcal{C}_\gamma$ there generically exist infinitely many scenario-dependent allocations $\mathbf{Y}^*\in\mathcal{C}_\gamma$ for a given $\mathbf{X}\in\mathcal{L}^{0}(\mathbb{R}^{N})$ for which the infimum of the risk measure $\rho^{\gamma}(\mathbf{X})=\pi (\mathbf{Y}^*)$ is obtained. In that case one needs to discuss further how to pick an allocation and to establish a ranking of systemic riskiness of the institutions. \\

\noindent \begin{proof}
Note that for $\mathbf{X}\in\mathcal{L}^{0}(\mathbb{R}^{N})$ it holds that $%
\Lambda(\mathbf{X})\in\mathbb{A}^W$ iff $X^i\in \mathbb{A}^W,\, i=1,...,N$. Thus
we can rewrite 
\begin{align*}
\rho^{\mathbb{R}^N}(\mathbf{X})& :=\inf \left\{\sum_{i=1}^N Y^i | \mathbf{Y}\in 
\mathbb{R}^N\,, \mathbf{X}+\mathbf{Y} \in (\mathbb{A}^W)^N \right\} \,,
\end{align*}
and obviously get 
\begin{align*}
\rho^{\mathbb{R}^N}(\mathbf{X})& = \sum_{i=1}^N -\text{ess.inf} (X^i) = \sum_{i=1}^N
\rho_W(X^i),
\end{align*}
and for $\mathbf{X}\in\mathcal{L}^{0}(\mathbb{R}^{N})$ the allocation $\mathbf{\hat{Y}}:=(\text{ess.inf} (X^1),...,\text{ess.inf} (X^N))$ is the unique $\mathbf{\hat{Y}}\in\mathbb{R}^{N}$ such that $\rho^{\mathbb{R}^N}(\mathbf{X})=\pi (\mathbf{\hat{Y}})$. \\
For $\rho^{\gamma}$ we analogously rewrite
\begin{align*}
\rho^{\gamma}(\mathbf{X})& :=\inf \left\{\sum_{i=1}^N Y^i | \mathbf{Y}\in 
\mathcal{C}_{\gamma}\,, \mathbf{X}+\mathbf{Y} \in (\mathbb{A}^W)^N \right\} .
\end{align*}
Now consider first the optimization problem 
\begin{align}  \label{eq:auxrho}
\widetilde{\rho}(\mathbf{X}):=\inf \left\{ \text{ess.sup} (\sum_{i=1}^N Y^i) | \mathbf{Y} \in 
\mathcal{L}^{0}(\mathbb{R}^{N}), Y^i \geq \gamma_i \,, \mathbf{X}+\mathbf{Y} \in (\mathbb{A}^W%
)^N \right\} .
\end{align}
Then clearly $\widetilde{\rho} \le \rho^{\gamma}$ and $\mathbf{Y}^* := -(X^i \mathbb{I
}_{\left\{X^i\le -\gamma_i\right\}} - \gamma_i \mathbb{I}_{\left\{X^i\ge -\gamma_i\right\}} )_{i=1,...N} $
is an optimal solution of \eqref{eq:auxrho}. Now define 
\begin{equation*}
\widetilde{\mathbf{Y}} := \mathbf{Y}^* + (\text{ess.sup} (\sum_{i=1}^N Y^*_i)-\sum_{i=1}^N
Y^*_i,0,...,0).
\end{equation*}
Then $\widetilde{\mathbf{Y}} \in \mathcal{C}_{\gamma}$ and $\rho^{\gamma}(\mathbf{X}) \le \pi(%
\widetilde{\mathbf{Y}}) =\text{ess.sup} (\sum_{i=1}^N Y^*_i)=\widetilde{\rho}(\mathbf{X})\le
\rho^{\gamma}(\mathbf{X}) $, and thus 
\begin{align*}
\rho^{\gamma}(\mathbf{X}) &= \sum_{i=1}^N \widetilde{Y}_i = \text{ess.sup}
\left(\sum_{i=1}^N Y^*_i\right) \\
&= \text{ess.sup} \left(\sum_{i=1}^N -(X^i \mathbb{I
}_{\left\{X^i\le -\gamma_i\right\}} - \gamma_i \mathbb{I}_{\left\{X^i\ge -\gamma_i\right\}})\right) \,.
\end{align*}
Finally we remark that generically for a given $\mathbf{X}\in\mathcal{L}^{0}(\mathbb{R}^{N})$ the above allocation $\widetilde{\mathbf{Y}}\in \mathcal{C}_{\gamma}$ is not unique such that $\rho^{\gamma}(\mathbf{X})=\pi (\widetilde{\mathbf{Y}})$. In fact, any allocation of the form
\begin{equation*}
\mathbf{Y}^* + (Z_1,...,Z_N)
\end{equation*}
with $(Z_1,...,Z_N)\in\mathcal{L}^{0}(\mathbb{R}^{N})$ such that $\sum_{i=1}^N Z_i=\text{ess.sup} (\sum_{i=1}^N Y^*_i)-\sum_{i=1}^N Y^*_i$ will satisfy the desired property.
\end{proof}

\subsection{Example:  Expected Shortfall Acceptance Set
\label{ES}}

Now consider the acceptance set associated to the \textquotedblleft  Expected Shortfall\textquotedblright
risk measure $\rho_{ES}$ (at some given quantile level): 
\begin{equation*}
\mathbb{A}^{ES} :=\{X\in\mathcal{L}^{0}(\mathbb{R})\ |\  \rho_{ES}(X) \leq 0\}.
\end{equation*}
See e.g. \cite{FollmerSchied2} for the definition of $\rho_{ES}$. Everything else is assumed to be as in Example \ref{C-worst}. Then 
\begin{align*}
\rho^{ag}(\mathbf{X}) = \rho_{ES}(\sum_{i=1}^N -(X^i)^-) \,.
\end{align*}
For $\rho^{\mathbb{R}^N}$ and $\rho^{\gamma}$, however, $\mathbb{A}^{ES}$ gives the same result as $\mathbb{A}^{W}$, i.e. 
\begin{align}
\rho^{\mathbb{R}^N}(\mathbf{X})& =\sum_{i=1}^N \rho_W(X^i) \geq \rho^{ag}(\mathbf{X}) \label{eq:ES1}\\
\rho^{\gamma}(\mathbf{X}) &=\rho_W\left(\sum_{i=1}^N (X^i \mathbb{I
}_{\left\{X^i\le -\gamma_i\right\}} - \gamma_i \mathbb{I}_{\left\{X^i\ge -\gamma_i\right\}})\right)\,. \label{eq:ES2}
\end{align}
Indeed, by the definition of $\rho_{ES}$ it immediatly follows that $\sum_{i=1}^N -(X^i)^- \in 
\mathbb{A}^{ES}$ if and only if $X^i\in \mathbb{A}^{W},\, i=1,...,N$, and \eqref{eq:ES1} and \eqref{eq:ES2} is then obtained from Proposition \ref{pro:ex}. So opposite to the situation in Example \ref{C-worst}, here the risk measure  when \textquotedblleft injecting before aggregation\textquotedblright  even with scenario-dependent  allocations might be higher than the one  when \textquotedblleft injecting after aggregation\textquotedblright . Indeed, we easily see that we always have $\rho^{0}\ge\rho^{ag}$, and generically even $\rho^{-\infty}\ge\rho^{ag}$ holds. This illustrates that these kind of relations highly depends on the interplay between aggregation and acceptance set.



\section{Gaussian Systems \label{sec:Gaussian}}

In this Section we assume a Gaussian financial system, i.e.~we let $\mathbf{X}=(X^1,\cdots,X^N)
$ be an $N$-dimensional Gaussian random vector with covariance matrix $Q$,
where $[Q]_{ii}:= \sigma_i^2$, $i=1,\cdots,N$, and $[Q]_{ij}:= \rho_{i,j}$
for $i\neq j$, $i,j=1,\cdots,N$, and mean vector $\mu:=(\mu_1,\cdots,\mu_N)$%
, i.e. $\mathbf{X}\sim N(\mu,Q)$. The systemic risk measure we consider is given by
\begin{equation}\label{GRM}
\rho(\mathbf{X}):=\inf\left\{\sum_{i=1}^N Y^i \ |\ \mathbf{Y}\in\mathcal{C} \subseteq \mathcal{C}_{\mathbb{R}}\,, \ \Lambda(\mathbf{X}+\mathbf{Y}) \in \mathbb{A}_{\gamma} \right\},
\end{equation}
where the set $\mathcal{C}_{\mathbb{R}}$ of scenario-dependent cash allocations is defined in \eqref{def:RandCap2}, the aggregation rule is given by $\Lambda(\mathbf{X}):=\sum_{i=1}^N -(X^i -d_i)^-$ for $d_i\in \mathbb{R}$, and the acceptance set is
\begin{equation}\label{set_gamma}
\mathbb{A}_{\gamma}:=\left\{Z \in \mathcal{L}^0(\mathbb{R})\ | \ \esp{Z}%
\geq -\gamma\right\}
\end{equation}
for some $\gamma \in \mathbb{R}^+$. Here, $d_i$ in the aggregation rule denotes some critical liquidity level of institution $i$, $i=1,...N$, and the risk measure is concerned with the expected total shortfall below these levels in the system. In Subsection~\ref{sec:Gaussian constant injection} we compute the allocation and the systemic risk measure in case of deterministic cash allocations $\mathcal{C}:=\mathbb{R^N}$, and in Subsection~\ref{sec:Gaussian random injection} we allow for more flexible scenario-dependent allocations of the form
\begin{equation}\label{GRandAll}
\mathcal{C}:=\left\{ \mathbf{Y}\in \mathcal{L}^0(\mathbb{R}^n)\ |\  \mathbf{Y}= \mathbf{m} +\mathbf{\alpha} I_D,\, \mathbf{m} ,\mathbf{\alpha}  \in \mathbb{R}^N, \sum_{i=1}^N \alpha_i =0\right\} \subseteq\mathcal{C}_{\mathbb{R}},
\end{equation}
where  $I_D$ is the indicator function of the event $D:=\left\{ \sum_{i=1}^N X^i \leq d\right\}$ for some $d\in \mathbb{R}$. Note that the condition $\sum_{i=1}^N \alpha_i =0$ implies that $\sum_{i=1}^N Y^i$ is constant a.s. Cash allocations in \eqref{GRandAll} can be interpreted as the flexibility to let the allocation depend on whether the system at time $T$ is in trouble or not, represented by the events that $\sum_{i=1}^N X^i$ is less or greater than some critical level $d$, respectively. In Subsection \ref{sec:deterministic injection} we then apply the results to a Gaussian system that is interconnected by the flow of capital between the institutions through a system of interacting Ornstein-Uhlenbeck diffusions.

\subsection{Deterministic Cash Allocations \label{sec:Gaussian constant injection}} 

We now consider the case $\mathcal{C}=\mathbb{R^N}$ and we are interested in computing the systemic risk measure
\begin{equation}\label{eq:DetRM}
\rho(\mathbf{X}):=\inf\left\{\sum_{i=1}^N m_i \ |\ \, \mathbf{m}=(m_1,\cdots,m_N)\in\mathbb{R^N} \,, \ \Lambda(\mathbf{X}+\mathbf{m}) \in \mathbb{A}_{\gamma}\right\},
\end{equation}
where for notational clarity we write $\mathbf{m}$ instead of $\mathbf{Y}$ for deterministic cash allocations. We thus need to minimize the objective function $\sum_{i=1}^N m_i$ over $\mathbb{R^N}$ under the constrained $\Lambda(\mathbf{X}+\mathbf{m}) \in \mathbb{A}_{\gamma}$, which clearly is equivalent to the constraint
\begin{equation}\label{eq:Constr}
\sum_{i=1}^N \esp{(X^i +m_i-d_i)^-} = \gamma \,.
\end{equation}
This constrained optimization problem can be solved with the associated Lagrangian
\begin{equation}\label{eq:Lagr}
L(m_1,...,m_N,\lambda):= \sum_{i=1}^N m_i + \lambda(\sum_{i=1}^N\psi_i(m_i) -\gamma)
\end{equation}
where $\psi_i(m_i):=\esp{(X^i +m_i-d_i)^-}$.
Since $X^i\sim N(\mu_i,\sigma_i^2 )$, one obtains for $i=1,...,N$ that
\begin{equation}\label{eq_neg_part}
\psi_i(m_i)= \frac{\sigma_i}{\sqrt{2\pi}}\exp{\left[-\frac{(d_i-\mu_i-m_i)^2}{
2\sigma_i^2}\right]} - (m_i+\mu_i-d_i) \Phi(\frac{d_i-\mu_i-m_i}{\sigma_i}),
\end{equation}
where $\Phi(x)=\int_{+\infty}^x \frac{1}{\sqrt{2\pi}}e^{-t^2/2} dt$.
By direct computation this leads to 
\begin{equation}\label{eq_phi}
\frac{\partial L(m_1,...,m_N,\lambda)}{\partial m_i}= 1+ \lambda \Phi(\frac{d_i-\mu_i-m_i}{\sigma_i}).
\end{equation}
By solving the Lagrangian system we then obtain the critical
point $\mathbf{m}^*=(m^*_1,\cdots,m^*_N)$ given by
\begin{equation*}
m_i^*=d_i-\mu_i -\sigma_i R,
\end{equation*}
where $R$ solves the equation
\begin{equation}  \label{optR}
P(R):=R\Phi(R)+\frac{1}{\sqrt{2\pi}}\exp{\left[-\frac{R^2}{2}\right]} =
\frac{\gamma}{\sum_{i=1}^N\sigma_i}.
\end{equation}
It is easily verified that $\mathbf{m}^*$ is indeed a global minimum and thus the optimal cash allocation associated with the risk measure \eqref{eq:DetRM}. The unique optimal cash allocation $\mathbf{m}^*$ now also induces a ranking of the institutions according to systemic riskiness, and we can discuss the dependence of this ranking with respect
to $\mu_i$ and $\sigma_i$:

\begin{enumerate}
\item $\frac{\partial m_i}{\partial \mu_i}=-1$: the systemic riskiness decreases with increasing mean.

\item $\frac{\partial m_i}{\partial \sigma_i}>0$: the systemic riskiness increases with increasing volatility. In order to show $\frac{\partial m_i}{\partial \sigma_i}>0$ we first note that $R$ is a solution of \eqref{optR} if
and only if $R$ is negative. Indeed, for $R \geq 0$ the
left-hand side of \eqref{optR} is always strictly positive,  the
right-hand side is negative. Thus
\begin{equation}  \label{derSigma}
\frac{\partial m_i}{\partial \sigma_i}= -R -\sigma_i\frac{\partial R}{
\partial \sigma_i}.
\end{equation}
By differentiating \eqref{optR} we obtain
\begin{equation}
\frac{\partial P}{\partial \sigma_i}=\frac{\partial P}{\partial R}\frac{\partial R}{\partial \sigma_i}= -\frac{
\gamma}{(\sum_{k=1}^N \sigma_k)^2}.
\end{equation}
Since $\frac{\partial P}{\partial R}=\Phi(R)$, we can compute $\frac{
\partial R}{\partial \sigma_i}$ and substitute it in \eqref{derSigma}:
\begin{align*}
\frac{\partial m_i}{\partial \sigma_i}&= -R +\frac{\sigma_i \gamma}{
(\sum_{k=1}^N \sigma_k)^2} \frac{1}{\Phi(R)} \\
&=-R +\frac{\sigma_i (\sum_{k=1}^N \sigma_k)P(R)}{(\sum_{k=1}^N
\sigma_k)^2\Phi(R)} \\
&= -R +\frac{\sigma_i (R\Phi(R)+\frac{1}{\sqrt{2\pi}}\exp{\left[-\frac{R^2}{2%
}\right]})}{(\sum_{k=1}^N \sigma_k)\Phi(R)} \\
&= (\frac{\sigma_i}{\sum_{k=1}^N \sigma_k}-1) R + \frac{\sigma_i}{%
\sum_{k=1}^N \sigma_k} \frac{1}{\sqrt{2\pi} \Phi(R)} \exp{\left[-\frac{R^2}{2%
}\right]}.
\end{align*}
Since $R$ must be negative, this implies $\frac{\partial m_i}{\partial \sigma_i}>0$.
\end{enumerate}

\subsection{A Class of Scenario-Dependent Allocations \label{sec:Gaussian random injection}}
We now allow for different allocations of the total capital at disposal depending on which state the system is in. More precisely, we differentiate between the two states that $D:= \{S\le d\}$ and $D^c=\{S\ > d\}$ for some level $d\in\mathbb{R}$ and $S:=\sum_{i=1}^N X^i$, and consider allocations $\mathcal{C}$ given in \eqref{GRandAll}. The systemic risk measure now becomes
\begin{equation*}
\rho(\mathbf{X}):=\inf\left\{\sum_{i=1}^N m_i | \ \mathbf{m} +\mathbf{\alpha} I_D \in \mathcal{C}\,, \,
\Lambda(\mathbf{X}+\mathbf{m} +\mathbf{\alpha} I_D) \in \mathbb{A}_{\gamma}\right\}.
\end{equation*}
To compute the risk measure in this case we now need to minimize the objective function $\sum_{i=1}^N m_i$ over $(\mathbf{m},\mathbf{\alpha})\in\mathbb{R}^{2 N}$ under the constraints
\begin{equation*}
\sum_{i=1}^N \alpha_i = 0   \text{ \ \ \ and \ \ \ }   \sum_{i=1}^N \esp{(X^i +m_i + \alpha_i I_D - d_i)^-} = \gamma \,.
\end{equation*}
In analogy to the Section~\ref{sec:Gaussian random injection} we apply the method of Lagrange multipliers to minimize the function

\begin{eqnarray}\label{Lagrange_gauss}
&&\phi(m_1,\cdots,m_N, \alpha_1,\cdots, \alpha_{N-1},\lambda)=\nonumber
\\&&\quad
\sum_{i=1}^N m_i + \lambda \left( \Psi(m_1,\cdots,m_N, \alpha_1,\cdots, \alpha_{N-1})-\gamma\right),
\end{eqnarray}
where 
\begin{align*}
&\Psi(m_1,\cdots,m_N, \alpha_1,\cdots, \alpha_{N-1}):= \\
& \quad \sum_{i=1}^{N-1} \esp{(X^i +m_i + \alpha_i I_D - d_i)^-}  +  
\esp{(X^N+m_N - \sum_{j=1}^{N-1}\alpha_j I_D - d_N)^-},
\end{align*}
as follows.
\begin{enumerate}
\item \emph{By computing the derivatives with respect to $\alpha_i$, $i=1,\cdots,N-1$}: 
$\frac{\partial \phi}{\partial \alpha_i}=0$ if and only if

\begin{equation} \label{eq_F}
F_{i,S}(d_i -m_i-\alpha_i, d) =F_{N,S}(d_N -m_N+\sum_{j=1}^{N-1}\alpha_j, d)
\end{equation}
for $i=1,\cdots,N-1$, where $F_{i,S}$ and $F_{N,S}$ are the joint distribution functions  of $(X^i,S)$ and $(X^N,S)$ respectively.

\item  \emph{By computing the derivatives with respect to $m_i$, for $i=1,\cdots,N-1$}: 
$\frac{\partial \phi}{\partial m_i}=0$ if and only if 
\begin{eqnarray}\label{eq_m}
&&\Phi(\frac{d_i-\mu_i-m_i}{\sigma_i}) +  F_{i,S}(d_i -m_i, d)=\nonumber
\\
&&  \Phi(\frac{d_N-\mu_n-m_N}{\sigma_n}) +  F_{N,S}(d_N-m_N, d),
\end{eqnarray}
for $i=1,\cdots,N-1$.
\item \emph{By computing the derivatives with respect to $\lambda$}: $\frac{\partial \phi}{\partial \lambda}=0$ if and only if 
$\Psi(m_1,\cdots,m_N, \alpha_1,\cdots, \alpha_{N-1})=\gamma$, where 

\begin{align*}
&\Psi(m_1,\cdots,m_N, \alpha_1,\cdots, \alpha_{N-1}) =\sum_{i=1}^N \psi_i(m_i) \\
&   + \sum_{i=1}^{N-1} \left[(m_i -d_i) F_{N,S}(d_i -m_i, d)  -(m_i+\alpha_i
-d_i)F_{i,S}(d_i -m_i-\alpha_i, d) \right.\\
& \quad \left.+\int_{d_i -m_i-\alpha_i}^{d_i -m_i}\int_{-\infty}^d
xF_{i,S}(x,y) dydx\right] + (m_N -d_N) F_{N,S}(d_N -m_N, d)\\
& \quad   -(m_N-\sum_{j=1}^{N-1}\alpha_j
-d_N)F_{N,S}(d_N -m_N+\sum_{j=1}^{N-1}\alpha_j, d) \\
& \quad \quad +\int_{d_N-m_N+\sum_{j=1}^{N-1}\alpha_j}^{d_N -m_N}\int_{-\infty}^d
xF_{N,S}(x,y) dydx,
\end{align*}
where $\psi_i$, $i=1,\cdots,N$, are defined in \eqref{eq_neg_part}.

\end{enumerate}

\noindent From \eqref{Lagrange_gauss} and \eqref{eq_F} we immediately obtain that if the $X^i$, $i=1,\cdots,N$, are identically distributed, then the optimal solution is obtained for $\alpha_i=0$, $i=1,\cdots,N$, and corresponds to the one obtained explicitly in Section \ref{sec:Gaussian constant injection} for deterministic injections. \newline
We now present numerical illustrations of our results in the simple case with two banks. \newline
In Table \ref{table1} we set
the means $\mu_i=0$ for $i=1,2$, the standard deviations $\sigma_1=1$, $\sigma_2=3$,  the acceptance level $\gamma=0.7$ and the critical level $d=2$. The last 2 columns show the sensitivities with respect to the correlation  for   deterministic allocation (case $\alpha=0$, computed in Section \ref{sec:Gaussian constant injection}) and for scenario-dependent allocation, respectively.
We observe that for highly positively correlated banks the scenario-dependent allocation does not change the total capital requirement $m_1+m_2$. Indeed, as expected, if the banks are moving together, one may have to subsidize both of them.  However, when they are negatively correlated, one benefits from
 scenario-dependent allocation since the total allocation $m_1+m_2$ is lower in that case.

\begin{table}
\centering
\begin{tabular}{|c|c|c|c|}
\hline
$\rho_{1,2}\downarrow$&&Deterministic&Random\\
\hline
&$m_1$&0.5772&0.1597\\
&$m_2$&1.7316&1.7230\\
-0.8&$\alpha$&0&2.8704\\
&$\rho=m_1+m_2$&2.3088&1.8827\\
\hline
&$m_1$&0.5772&0.2908\\
&$m_2$&1.7316&1.7776\\
-0.5&$\alpha$&0&2.3161\\
&$\rho=m_1+m_2$&2.3088&2.0683\\
\hline
&$m_1$&0.5772&0.4490\\
&$m_2$&1.7316&1.7796\\
0&$\alpha$&0&1.7208\\
&$\rho=m_1+m_2$&2.3088&2.2286\\
\hline
&$m_1$&0.5772&0.5463\\
&$m_2$&1.7316&1.7461\\
0.5&$\alpha$&0&1.3389\\
&$\rho=m_1+m_2$&2.3088&2.2924\\
\hline
&$m_1$&0.5772&0.5737\\
&$m_2$&1.7316&1.7314\\
0.8&$\alpha$&0&0.7905\\
&$\rho=m_1+m_2$&2.3088&2.3053\\
\hline
\end{tabular}
\caption{Sensitivity with respect to correlation.}
\label{table1}
\end{table}

In Table \ref{table2} we set
the means $\mu_i=0$ for $i=1,2$, the correlation $\rho=-0.5$, the standard deviation $\sigma_1=1$,  the acceptance level $\gamma=0.7$ and the critical level $d=2$ and we show sensitivity with respect to the standard deviation $\sigma_2$ of the second bank.
We observe that for equal marginals ($\sigma_1=\sigma_2=1$) random  allocation does not change the total capital requirement, as already stated in Section \ref{sec:Gaussian random injection}.  As $\sigma_2$  increases,  the systemic  risk measure increases and the allocation increases with increasing standard deviation in agreement with the sensitivity analysis  presented in Section \ref{sec:Gaussian constant injection} for the deterministic case.
Also we observe that scenario-dependent  allocation allows for smaller
total capital requirement  $m_1+m_2$.

\begin{table}
\centering
\begin{tabular}{|c|c|c|c|}
\hline
$\sigma_2\downarrow$&&Deterministic&Random\\
\hline
&$m_1$&0.1008&0.1008\\
&$m_2$&0.1031&0.1031\\
1&$\alpha$&0&0.0002\\
&$\rho=m_1+m_2$&0.2039&0.2039\\
\hline
&$m_1$&0.8168&0.3167\\
&$m_2$&4.0816&4.1295\\
5&$\alpha$&0&3.5987\\
&$\rho=m_1+m_2$&4.8984&4.4462\\
\hline
&$m_1$&1.1417&0.4631\\
&$m_2$&11.3964&11.4333\\
10&$\alpha$&0&6.9909\\
&$\rho=m_1+m_2$&12.5381&11.8963\\
\hline
\end{tabular}
\caption{Sensitivity with respect to standard deviation. }
\label{table2}
\end{table}

\subsection{Application to Models of  Borrowing and Lending}\label{sec:deterministic injection}

We consider the Gaussian vector $X_t=(X^i_t, i=1,\cdots, n)$ generated by the following dynamics
\begin{align}\label{correlated coupled}
&dX_{t}^{i}=\left[\sum_{j=1}^{N}p_{i,j}(X_{t}^{j}-X_{t}^{i})\right]dt+\sigma^i\left(\rho_i
dW^0_{t}+\sqrt{1-\rho_i^{2}}dW_{t}^{i}\right),\, i=1,\cdots,N,
\end{align}
where $\left(W^0_t,W_t^i,i=1,\cdots, N\right)$ are independent standard Brownian motions and $W^0_{t}$ is a common noise.
The lending-borrowing preferences $p_{i,j}$ are nonnegative and symmetric: $p_{i,j}=p_{j,i}$. This model is studied in detail
in \cite{CarmonaFouqueSun}
in the mean-field context where $p_{i,j}=p/N$ with $p\geq 0$. It is shown that in that case the dynamics (\ref{correlated coupled}) emerges as a Nash equilibrium of a specific stochastic game.
%

%




The total \textquotedblleft  capitalization\textquotedblright is given by:
\begin{align*}
&\sum_{i=1}^{N}X_{t}^{i}= \sum_{i=1}^{N}x_{0}^{i}+\left(\sum_{i=1}^N\sigma^i\rho_i\right)
W^0_{t}+\sum_{i=1}^{N}\sigma^i \sqrt{1-\rho_i^{2}}W_{t}^{i}\\
&\overset{{\cal D}}{=} \sum_{i=1}^{N}x_{0}^{i}+\alpha B_{t},
\end{align*}
where $B_{t}$ is a standard Brownian motion and
\[
\alpha^2=\left(\sum_{i=1}^N \sigma^i\rho_i\right)^2+\sum_{i=1}^N (\sigma^i)^2(1-\rho_i^2).
\]
We are interested in the system at a given time $t>0$ and in quantities such as the liquidity available at time $t$ defined by $\sum_{i=1}^N   (X_t^{i}-d_i)^+$, or the shortfall
$-\sum_{i=1}^N   (X_t^{i}-d_i)^-$.

The class ${\cal C}$ of random vectors $\bf Y$ is for instance chosen as ${\bf Y}={\bf w}Y$ where $w^i$'s are weights, and $Y=Y_t$ is defined by
\[
Y_t=y_0+s\left(\rho_0
W^0_{t}+\sqrt{1-\rho_0^{2}}W_{t}\right),
\]
and $W_t$ is a Brownian motion independent of $W^0_t$. This reflects that the money which can be allocated to banks can be correlated to the common factor driving the system. In fact the choice of $W_t$ is general ranging from a linear combination of the $W^i$'s to being independent of the $W^i$'s. This situation corresponds to using securities (in this case bonds) as allocation at the future time $t$. To keep the example simple we consider the case with $s=0$, that is $Y_t=y_0$, and constant injection as we did in Section \ref{sec:Gaussian constant injection}.

\subsubsection{ Homogeneous Network}\label{sec:homogenous}

Here we consider the fully homogeneous case where $x_0^i=x_0$, $p_{i,j}=p/N$, $\sigma^i=\sigma$, $\rho_i={\rho}$, $d_i=d$, so that the model becomes
\begin{align*}
&dX_{t}^{i}=\left[\frac{p}{N}\sum_{j=1}^{N}(X_{t}^{j}-X_{t}^{i})\right]dt+\sigma\left({\rho}
dW^0_{t}+\sqrt{1-{\rho}^{2}}dW_{t}^{i}\right),\quad i=1,\cdots,N,
\end{align*}
or
\begin{align}
&dX_{t}^{i}=p\left[\bar{X}_t-X_{t}^{i}\right]dt+\sigma\left({\rho}
dW^0_{t}+\sqrt{1-{\rho}^{2}}dW_{t}^{i}\right),\quad i=1,\cdots,N,\label{correlated coupled homogeneous}
\end{align}
where
\[
\bar{X}_t=\frac{1}{N}\sum_{j=1}^{N}X_{t}^{j}.
\]
In order to apply the results from Section \ref{sec:Gaussian constant injection}, we will need to compute the distribution of a single $X^i_t$. The joint distribution of the $X^i_t$'s being obviously Gaussian, $X^i_t$ is Gaussian with mean $\mu_i$ and variance $\sigma_i^2$ in the notation of Section \ref{sec:Gaussian constant injection}. A straightforward computation from (\ref{correlated coupled homogeneous}) gives:
\begin{align*}
&\mu_i=\mathbb E(X^i_t)=x_0,
\end{align*}
and
\begin{align}\label{variance}
&\sigma_i^2=\sigma^2(1-\rho^2)(1-\frac{1}{N})\left(\frac{1-e^{-2pt}}{2p}\right)+\sigma^2\left(\rho^2+\frac{1-\rho^2}{N}\right)t.
\end{align}
Note that, even though we only consider marginal distributions in the systemic risk measure proposed in Section \ref{sec:Gaussian constant injection}, these marginal distributions depend on the coupled dynamics of the $X^i$, in particular on the parameters $p$ and $\rho$. For instance, one sees that increasing $p$, that is increasing liquidity, would decrease $\sigma_i^2$ (from $\sigma^2t$ for $p=0$ to $\sigma^2\left(\rho^2+\frac{1-\rho^2}{N}\right)t$ for $p=\infty$), and therefore, would decrease systemic risk according with our findings in Section \ref{sec:Gaussian constant injection}.


\subsubsection{ Central Clearing Network}

Here we consider a centralized model where bank $1$ (for instance) plays a clearing role and is related to each of the other banks which are not directly related to each other.
That is $p_{i,j}=p$ if $i=1$ or $j=1$, and $p_{i,j}=0$ if $i\ne 1$ and $j\ne 1$; $x_0^i=x_0$ if $i\ne 1$; $\sigma^i=\sigma$ if  $i\ne 1$; $\rho_i={\rho}$ if $i\ne 1$; $d_i=d$ if  $i\ne 1$. The model becomes
\begin{align*}
&dX_{t}^{1}=p\sum_{j=2}^{N}(X_{t}^{j}-X_{t}^{1})dt+\sigma_c\left({\rho}_c
dW^0_{t}+\sqrt{1-{\rho}_c^{2}}dW_{t}^{1}\right),\\
&dX_{t}^{i}=p(X_{t}^{1}-X_{t}^{i})dt+\sigma\left({\rho}
dW^0_{t}+\sqrt{1-{\rho}^{2}}dW_{t}^{i}\right),\quad i=2,\cdots,N,
\end{align*}
with initial conditions $X^1_0=x^1_0$ and $X^i_0=x_0$ for $i=2,\cdots,N$.
The joint distribution of the $X^i_t$'s is again Gaussian.
Choosing $x_0^1=x_0=c_0/N$, we get
\begin{align*}
&\mu_i=\mathbb E(X^i_t)=\mathbb E(X^1_t)=c_0/N.
\end{align*}
We turn now to the computation of the variances. Applying It\^o's formula we get:
\begin{align*}
&d\mathbb E[(X^i_t)^2]=2p\left(\mathbb E(X^i_tX^1_t)-\mathbb E[(X^i_t)^2]\right)dt+\sigma^2dt,\quad i=2,\cdots,N,\\
&d\mathbb E[(X^1_t)^2]=2p\sum_{j=2}^N\left(\mathbb E(X^1_tX^j_t)-\mathbb E[(X^1_t)^2]\right)dt+\sigma_1^2dt,\\
&d\mathbb E(X^i_tX^1_t)=p\sum_{j=2}^N\left(\mathbb E(X^i_tX^j_t)-\mathbb E(X^i_tX^1_t)\right)dt\\
&+p\left(\mathbb E[(X^1_t)^2]-\mathbb E(X^i_tX^1_t)\right)dt+\sigma\sigma_c\rho\rho_cdt, \quad i=2,\cdots,N,\\
&d\mathbb E(X^i_tX^j_t)=p\left(\mathbb E(X^1_tX^j_t)-\mathbb E(X^i_tX^j_t)\right)dt+p\left(\mathbb E(X^1_tX^i_t)-\mathbb E(X^i_tX^j_t)\right)dt\\
&+\sigma^2\rho^2dt, \quad i\geq 2, j\geq 2, i\ne j.
\end{align*}
By symmetry among the $X^i$'s for $i\geq 2$, we deduce that $\mathbb E[(X^i_t)^2], \mathbb E(X^i_tX^1_t)$, and $\mathbb E(X^i_tX^j_t)$ do not depend on $i\geq 2, j\geq 2$. Accordingly, we define:
\begin{align*}
&\mathbb E[(X^i_t)^2]=v(t), \quad i\geq 2,\\
&\mathbb E(X^i_tX^1_t)=w(t),  \quad i\geq 2,\\
&\mathbb E(X^i_tX^j_t)=\chi (t), \quad i\geq 2, j\geq 2, i\ne j,\\
&\mathbb E[(X^1_t)^2]=v_1(t).
\end{align*}
These functions satisfies the differential system $Y'=pAY+B$
with:
\[
Y=\left(
\begin{array}{c}
v\\
v_1\\
w\\
\chi
\end{array}
\right),\,
A=\left[
\begin{array}{cccc}
-2&0&2&0\\
0&-2(N-1)&2(N-1)&0\\
1&1&-N&N-2\\
0&0&2&-2
\end{array}
\right],
\,
B=\left(
\begin{array}{c}
\sigma^2\\
\sigma_c^2\\
\sigma\sigma_c\rho\rho_c\\
\sigma^2\rho^2
\end{array}
\right).
\]
We are interested in $v(t)$ and $v_1(t)$. Note that by subtracting $x_0^2$ to these four functions it is enough to solve the system with zero initial conditions.
A straightforward but tedious computation shows that:
\begin{align*}
&\sigma_1^2=v_1(t)=\mbox{var}(X^1_t)=  \sigma^2\rho^2 t\\
&+\frac{1}{N}\left[ (\sigma^2+2\sigma\sigma_c\rho\rho_c-3\sigma^2\rho^2)t
+\frac{2}{p}(\sigma_c^2-\sigma\sigma_c\rho\rho_c)
\right]
+{\cal O}(\frac{1}{N^2}),
\end{align*}
and
\begin{align*}
&\sigma_i^2=v(t)=\mbox{var}(X^i_t)= \sigma^2(1-\rho^2)\left(\frac{1-e^{-2pt}}{2p}\right)+ \sigma^2\rho^2 t\\
&+\frac{1}{N}\left[ (\sigma^2+2\sigma\sigma_c\rho\rho_c-3\sigma^2\rho^2)t-
\sigma^2(1-\rho^2)\left(\frac{1-e^{-2pt}}{2p}\right)\right]
+{\cal O}(\frac{1}{N^2}),
\end{align*}
to be compared with the exact formula (\ref{variance}) in order to compare the systemic risk for a fully connected homogenous network with a central clearing network. At order one in $1/N$, the variance is the same but they may differ at order $1/N$.
Writing $\sigma^2+2\sigma\sigma_c\rho\rho_c-3\sigma^2\rho^2=\sigma^2(1-\rho^2) +2\sigma\rho(\sigma_c\rho_c-\sigma\rho)$, we see that the sign of $\sigma_c\rho_c-\sigma\rho$ determines which network is most stable, that is the one with smaller variance according to the conclusion in Section \ref{sec:Gaussian constant injection}.

\subsubsection{Heterogeneous  Networks}

In practical situations, the network will be heterogeneous described by a system like our simplified Gaussian model (\ref{correlated coupled}).
The joint distribution will be fully characterized by the means $\mu_i$'s and by the covariance matrix $Q=[cov (X^i_t,X^j_t)]$ which will depend on the parameters of the model, in particular the preferences $p_{i,j}$ and the individual $\sigma^i$.
In that case, for given coefficients and initial conditions, one will be able to numerically compute the marginal means and variances needed in our systemic risk measures. Doing so, one will obtain the optimal allocation $m=(m_i)_{i=1,\cdots,N}$ and a ranking of the banks with respect to their systemic risk contributions.

\subsubsection{An Example with Random Injections}

In order to illustrate the results from Section \ref{sec:Gaussian random injection}, we consider a system of three banks. If it is fully connected and homogenous as in Section \ref{sec:homogenous}, then in the context of random injections in  Section \ref{sec:Gaussian random injection}, by symmetry, the optimal $m_i$'s will be equal, the $\alpha_i$'s will also be equal and therefore $\alpha_i=0$ because of the constraint  $\sum \alpha_i=0$. Consequently, the injection will simply be constant as studied in Section \ref{sec:homogenous}.

Now we consider the case a heterogenous network with symmetric preferences such that $p_{2,3}=p/2$ and $p_{1,2}=p_{1,3}=0$, equal starting points $x_0^i=x_0$, equal volatility $\sigma^i=\sigma$, and correlation to common noise $\rho_1,\,\rho_2=\rho_3=\rho$. The vector $X=(X^1,X^2,X^3)$ satisfies
\begin{align*}
&dX^1_t= \sigma \left(\rho_1dW^0_{t}
+\sqrt{1-\rho_1^2}dW^1_{t}\right),\\
&dX^2_t= \frac{p}{2}\left(X^3_t-X^2_t\right)dt+\sigma \left(\rho dW^0_{t}
+ \sqrt{1-\rho^2}dW^2_{t}\right),\\
&dX^3_t= \frac{p}{2}\left(X^2_t-X^3_t\right)dt+\sigma \left(\rho dW^0_{t}
+ \sqrt{1-\rho^2}\right)dW^2_{t}.
\end{align*}
The bank $X^1$ is uncoupled with the symmetric network $(X^2,X^3)$.
A straightforward computation shows that the Gaussian pair $(X^1_t, {X^2_t+X^3_t})$ (where we have aggregated $X^2$ and $X^3$) admits the covariance matrix
\begin{align*}
Q=
\sigma^2t\left(
\begin{array}{cc}
1&2\rho\rho_1\\
2\rho\rho_1&2(1+\rho^2)
\end{array}
\right).
\end{align*}
In the following numerical illustration, we take $x_0^i=0, i=1,2,3$, $\sigma^2t=1$, $\rho=.8$ and we vary $\rho_1\in\{-.5, -.2, 0, .2, .5\}$. In  the notation of Section \ref{sec:Gaussian random injection}, this translates into $\sigma_1^2=\sigma^2t=1$, $\sigma_2^2=2\sigma^2t(1+\rho^2)=3.28$ and varying the covariance $2\rho\rho_1\in\{-.8,-.32,0, .32, .8\}$. The results are displayed in Table \ref{table3}
where  we show the values of $m_1, m_2, \alpha$, and the value of the systemic risk measure $m_1+m_2$. In the deterministic column, these values  do not change since they depend only on the marginal distributions and $\alpha=0$ since in that case allocations are deterministic. Then, they can be compared with the values in the case with random allocations where we see that the gain is more pronounced for negative correlation.

\begin{table}
\centering
\begin{tabular}{|c|c|c|c|}
\hline
$2\rho\rho_1\downarrow$&&Deterministic&Random\\
\hline
&$m_1$&0.3486&0.2671\\
&$m_2$&0.6313&0.6347\\
-0.8&$\alpha$&0&2.1413\\
&$m_1+m_2$&0.9799&0.9018\\
\hline
&$m_1$&0.3486&0.2799\\
&$m_2$&0.6313&0.6577\\
-0.32&$\alpha$&0&1.1161\\
&$m_1+m_2$&0.9799&0.9376\\
\hline
&$m_1$&0.3486&0.3062\\
&$m_2$&0.6313&0.6530\\
0&$\alpha$&0&0.8416\\
&$m_1+m_2$&0.9799&0.9592\\
\hline
&$m_1$&0.3486&0.3271\\
&$m_2$&0.6313&0.6414\\
0.32&$\alpha$&0&0.6813\\
&$m_1+m_2$&0.9799&0.9685\\
\hline
&$m_1$&0.3486&0.3436\\
&$m_2$&0.6313&0.6294\\
0.8&$\alpha$&0&.6597\\
&$m_1+m_2$&0.9799&0.9750\\
\hline
\end{tabular}
\caption{Sensitivity with respect to correlation to common noise.}
\label{table3}
\end{table}


\section{Example: Systems on a Finite Probability Space}\label{finite_space} 

We now consider a financial system $\mathbf{X}=(X^1,\cdots,X^N)$ that is defined on a finite probability space
$(\Omega,\mathcal{F},\mathbb{P})$ with $\Omega=\left\{\omega_1,\cdots,\omega_M\right\}$, $\mathcal{F}=2^{\Omega}$, $\mathbb{P}(\omega_i)=p_i\in (0,1)$, $i=1,\cdots,M$. The systemic risk measure we are interested in here is given by
\begin{equation}\label{eq:RMFin}
\rho(\mathbf{X}):=\inf\left\{{\sum_{i=1}^{N} {Y^i}}\ |\  \mathbf{Y}=(Y^1, \cdots,Y^N)\in\mathcal{C}^h\,,\Lambda(\mathbf{X}+\mathbf{Y})\in{\mathbb{A}_{\gamma}}\right\},
\end{equation}
where as in Section~\ref{sec:Gaussian} the acceptance set is $\mathbb{A}_{\gamma}=\left\{Z\in \mathcal{L}^0(\mathbb{R})| \ \mathbb E[Z]\geq -\gamma \right\}$ for $\gamma >0$ and the admissible allocations $\mathcal{C}^h$ are introduced below. The aggregation is defined by
\begin{equation}\label{eq:agfct}
\Lambda(x_1, \cdots,x_N):=\sum_{i=1}^{N} -\exp{(-\alpha_i x_i)} 
\end{equation}
for $\alpha_i >0$, $i=1,\cdots,N$. Compared to the aggregation in Section~\ref{sec:Gaussian}, the aggregation in \eqref{eq:agfct} is more risk averse with respect to bigger losses but also takes benefits of gains into account. 

Due to the finite probability space the computation of the optimal allocation associated to the risk measure \eqref{eq:RMFin} reduces to solving a finite-dimensional system of equations even for most general scenario-dependent allocation. More precisely, let $h:=(h_1,..,h_{k})$ with $0 < h_1 < h_2 < \cdots < h_{k-1} < h_{k} =N$ represent some partition of $\{1,..,N\}$ for a given $k\in\left\{1,\cdots,N\right\}$. We then introduce the following family of allocations:
{\small
\begin{align}\label{sgca}
\mathcal{C}^h&=\biggl\{ \mathbf{Y} \in{\mathcal{L}^{0}(\mathbb{R}^N)} \ |\  \exists\ d=(d_1,\cdots,d_k) \in \mathbb{R}^k \ \textrm{such that} \ \sum_{i=1}^{h_1}Y^i(w_j)=d_1, \nonumber \\
&\sum_{i=h_1+1}^{h_2}Y^i(w_j)=d_{2}, \cdots, 
\sum_{i=h_{k-1}+1}^{N}Y^i(w_j)=d_{k},\, \textrm{for \ } j=1,
\cdots, M \biggr\} \subseteq \mathcal{C}_{\mathbb{R}}.
\end{align}
}
This corresponds to the situation when the regulator is constrained in the way that she cannot distribute cash freely among all financial institutions but only within $k$ subgroups that are induced by the partition $h$. In other words, the risk measure is the sum of $k$ minimal cash amounts $d_1,...,d_k$ determined today, that at time $T$ can be freely allocated within the $k$ subgroups in order to make the system safe. Note that this family spans from deterministic allocations $\mathcal{C}=\mathbb{R}^N$ for $k=N$ to $\mathcal{C}_{\mathbb{R}}$ for $k=1$.

For a given partition $h$ of subgroups one can now explicitly compute a unique optimal allocation $\mathbf{Y}^*$ and the corresponding systemic risk $\rho(\mathbf{X})=\sum_{i=1}^{N}Y^{i,*}$ in \eqref{eq:RMFin} by solving the corresponding Lagrangian system. For better readability of the text we here state the explicit expressions for the following subfamily of allocations
{\small
\begin{align}\label{setC}
\mathcal{C}^r&=\biggl\{ \mathbf{Y} \in{\mathcal{L}^{0}(\mathbb{R})^N}\ |\  \exists\ (c_N, c_{N-1}, \cdots,c_{N-r}) \in \mathbb{R}^{r+1} \ \textrm{such that} \ \sum_{i=1}^{N}Y^i(w_j)=c_N, \nonumber\\
& \sum_{i=1}^{N-1}Y^i(w_j)=c_{N-1}, \cdots,  
 \sum_{i=1}^{N-r}Y^i(w_j)=c_{N-r},\,\textrm{for \ } j=1,
\cdots, M \biggr\} 
\end{align}
}
\noindent for $0\leq r\leq N-1$. 
This class corresponds to $r$ subgroups of size one and one remaining bigger subgroup of size $N-r$, and the two extreme cases are recovered for $r=0$ and $r=N-1$. Note that the summation over the $Y^i$'s of the subgroups in \eqref{setC} has been re-parametrized compared to \eqref{sgca} for the sake of more accessible expressions below.

The following are the optimal solutions when computing the systemic risk measure \eqref{eq:RMFin} with the set $\mathcal{C}=\mathcal{C}^r$ of type \eqref{setC}. The proof is deferred to the appendix. For notational simplicity we denote by $y^k_j:=Y^{k,*}(\omega_j)$ for $k=1,\cdots,N$, $j=1,\cdots,M$ the optimal allocation. The optimal $c_{N-r}, \cdots,c_N$ are given by

\begin{equation}  \label{c_N}
c_{N-r}=-\beta_{N-r}\log(\frac{\gamma}{\alpha_1 \beta_N d_{N-r}}),
\end{equation}
where 
\begin{align*} 
\beta_{N-r}&=\sum_{i=1}^{N-r}
\frac{1}{\alpha_i} \quad \mbox{and} \quad \beta_N=\sum_{i=1}^{N}
\frac{1}{\alpha_i}, \\
d_{N-r}&=\sum_{j=1}^{M}p_j \exp{\left[-\frac{1}{\beta_{N-r}}
\sum_{i=1}^{N-r}X^i(w_j)-\frac{1}{\beta_{N-r}}\sum_{i=1}^{N-r}\frac{1}{
\alpha_i}\log(\frac{\alpha_1}{\alpha_i}) \right]},
\end{align*} 
and by
\begin{align}  
c_{k}& = c_{k-1} - \frac{1}{\alpha_k}\log(\frac{\gamma}{\alpha_k \beta_N K_k}) \nonumber \\
&= c_{N-r} - \sum_{j=N-r+1}^{k} \frac{1}{\alpha_j}\log(\frac{\gamma}{\alpha_j \beta_N K_j}) \nonumber \\
&= -\beta_{N-r}\log(\frac{\gamma}{\alpha_1 \beta_N d_{N-r}}) - \sum_{j=N-r+1}^{k} \frac{1}{\alpha_j}\log(\frac{\gamma}{\alpha_j \beta_N K_j})\label{c_k}
\end{align}
for $k=N-r+1,\cdots,N$, $r \geq 1$, with
\begin{equation*}
K_{k}=\sum_{j=1}^{M}p_j \exp{\left(-\alpha_{k} X^k(w_j) \right)}.
\end{equation*}
In particular the optimal $c_N$ provides the value of the systemic risk measure, i.e.
\begin{equation}\label{rCr}
\rho(\mathbf{X})= -\beta_{N-r}\log(\frac{\gamma}{\alpha_1 \beta_N d_{N-r}}) - \sum_{j=N-r+1}^{N} \frac{1}{\alpha_j}\log(\frac{\gamma}{\alpha_j \beta_N K_j}).
\end{equation}
The optimal allocations are given  by 
\begin{equation}  \label{1.24}
y^1_j=\frac{1}{\alpha_1\beta_{N-r}}\sum_{i=1}^{N-r}X^i(w_j)-X^1(w_j)+\frac{1}{\alpha_1\beta_{N-r}}\sum_{i=1}^{N-r}\frac{1}{\alpha_i}\log(\frac{\alpha_1}{
\alpha_i})+\frac{1}{\alpha_1\beta_{N-r}}c_{N-r} 
\end{equation}
for $j=1,\cdots,M$, 
by

\begin{align} 
y^k_j& =\frac{1}{\alpha_k}\left[\alpha_1 X^1(w_j)-\alpha_k X^k(w_j)-\log(\frac{
\alpha_1}{\alpha_k})+\alpha_1 y^1_j \right] \label{1.23} \\
&=\frac{1}{\alpha_k\beta_{N-r}}\sum_{i=1}^{N-r}X^i(w_j)-X^k(w_j)-\frac{1}{\alpha_k}\log(\frac{
\alpha_1}{\alpha_k}) \nonumber \\
& \quad \ + \frac{1}{\alpha_k\beta_{N-r}}\sum_{i=1}^{N-r}\frac{1}{\alpha_i}\log(\frac{\alpha_1}{
\alpha_i})+ \frac{1}{\alpha_k \beta_{N-r}}c_{N-r} \label{1.23:I}
\end{align}
for all $k=2,\cdots, N-r-1$ and $j\in {1,\cdots, M}$, and by 

\begin{equation}
y^k_j =c_k-c_{k-1}= - \frac{1}{\alpha_k}\log(\frac{\gamma}{\alpha_k \beta_N K_k})   \label{eq:opt}
\end{equation}
for all $k=N-r,\cdots, N$ and $j\in {1,\cdots, M}$.

\begin{remark}
One could extend the above setting further by adding the possibility to limit cross-subsidization in the allocations. This can be done by introducing another constraint into the family \eqref{sgca} of cash allocations:
\begin{align*}
\mathcal{C}^{h,b}&=\biggl\{ \mathbf{Y} \in{\mathcal{L}^{0}(\mathbb{R})^N} | \ \ \ Y^i\geq b_i, i=1,\cdots,N  \,;\\
 & \  \qquad\sum_{i=1}^{h_1}Y^i(w_j)=d_1,
\sum_{i=h_1+1}^{h_2}Y^i(w_j)=d_{2}, \cdots, \sum_{i=h_{k-1}+1}^{N}Y^i(w_j)=d_{k}, \\
& \  \qquad \textrm{for \ } j=1,
\cdots, M,  \textrm{ and\ } d_{t}\in\mathbb{R} \textrm{\ for\ }  t=0,\cdots,k \biggr\} \subseteq \mathcal{C}_{\mathbb{R}} ,
\end{align*}
where $(b_1,...,b_N)\in\mathbb{R}^N$. For example, putting $b:=(0,...,0)$ excludes cash withdrawals from institutions and in this sense doesn't allow for any cross-subsidization. The systemic risk measure and corresponding optimal allocations solution can now be computed by resorting to the Karush Kuhn Tucker conditions (\cite{Boyd}), see the computations in \cite{tesi_pastore}.
 
\end{remark}

\begin{example}
We conclude this section with a numerical example. We consider a system of four banks represented by the random variables $X^1,X^2,X^3$ and $X^4$ on the probability space $(\Omega,\mathcal{F},\mathbb{P})$, where $\Omega=(\omega_1,\omega_2,\omega_3,\omega_4)$, $\mathcal{F}=2^{\Omega}$ and $\mathbb{P}(\omega_1)=0.64$,
$\mathbb{P}(\omega_2)=\mathbb{P}(\omega_3)=0.16$ and $\mathbb{P}(\omega_4)=0.04$. 
We assume that $X^4$ is independent of $X^1, X^2, X^3$, that $X^2$ is comonotone with  $X^1$ and that $X^3$ is countermonotone with  $X^1$. Furthermore 
$X^1(w_1)=X^1(w_3)=100, X^1(w_2)=X^1(w_4)=-50$, $X^2(w_1)=X^2(w_3)=50, X^2(w_2)=X^2(w_4)=-25$, 
$X^3(w_1)=X^3(w_3)=-25, X^3(w_2)=X^3(w_4)=50$ and $X^4(w_1)=X^4(w_2)=50, X^4(w_3)=X^4(w_4)=-25$. We set $\alpha_i=0.3$ for $i=1,\cdots,4$ and $\gamma=50$ and consider the set $\mathcal{C}^r$ defined in \eqref{setC}.

In this setting, Tables \ref{table4} and \ref{table5} reproduce the deterministic allocations and optimal scenario-dependent 
allocations for $r=2$ given by \eqref{1.24},  \eqref{1.23} and \eqref{eq:opt} as well as the systemic risk measure $\rho$ given by \eqref{rCr}. In Table \ref{table6} we provide the systemic risk measures for all $r=0,1,2,3$.
From Table \ref{table6} we note that the maximum and minimum value of $\rho$ are obtained respectively in the
deterministic ($r=3$) and the fully unconstrained scenario-dependent ($r=0$)
cases.
Whenever one groups ($X_1$ and $X_3$) or ($X_2$ and $X_3$),    $\rho$ is substantially reduced
($-0.56$ or $4.44$), compared to the deterministic case ($79.02$), as these
couples of vectors are counter monotone.
Whenever one groups $X_4$ with any of the $X_1$, $X_2$, $X_3$, there is little
difference ($68.36, 63.71, 72.96$) with respect to the deterministic case
($79.02$), as  $X_4$ is independent from the others.
Grouping $X_1$ and $X_2$ has very little effect ($74.48$) compared to the
deterministic case ($79.02$), as $X_1$ and $X_2$ are comonotone.

\begin{table}
\centering
{\begin{tabular}{|c|c|}
\hline
Systemic risk measure& Case\\
\hline
$-26.36$&$r=0$\\
\hline
$-0.56$&$r=2$, $\left\{X_1,X_3\right\}$\\
\hline
$4.44$&$r=2$, $\left\{X_2,X_3\right\}$\\
\hline
$63.71$&$r=2$, $\left\{X_2,X_4\right\}$\\
\hline
$68.36$&$r=2$, $\left\{X_1,X_4\right\}$\\
\hline
$72.96$&$r=2$, $\left\{X_3,X_4\right\}$\\
\hline
$74.48$&$r=2$, $\left\{X_1,X_2\right\}$\\
\hline
$79.02$&$r=3$\\
\hline
\end{tabular}}
\caption{Systemic risk measure.}
\label{table6}
\end{table}

\begin{table}
\centering
{\begin{tabular}{|c|c|}
\hline
Groups: $\{X_1\},\{X_2\}, \{X_3\}, \{X_4)$&Deterministic\\
\hline
$Y_1$&$36.18$\\
$Y_2$&$15.82$\\
$Y_3$&$15.82$\\
$Y_4$&$11.20$\\
\hline
Systemic risk\,$=\sum Y_i$&${\bf 79.02}$\\
\hline
\end{tabular}}
\caption{Case $r=3$.}
\label{table4}
\end{table}

 \begin{table}
\centering
{\tiny \begin{tabular}{|c|c|c|}
\hline
Groups: $\{X_1,X_2\}, \{X_3\},
\{X_4)$&Random&Deterministic\\
\hline
&$Y_1(\omega_1)=Y_1(\omega_3)=11.27$&\\
$Y_1$&$Y_1(\omega_2)=Y_1(\omega_4)=36.23$&\\
&$\mathbb E[Y_1]=6.23$&\\
\hline
&$Y_2(\omega_1)=Y_2(\omega_3)=48.73$&\\
$Y_2$&$Y_2(\omega_2)=Y_2(\omega_4)=11.23$&\\
&$\mathbb E[Y_1]=41.23$&$Y_1+Y_2=47.46$\\
\hline
$Y_3$&&15.82\\
\hline
$Y_4$&&11.20\\
\hline
Systemic risk\,$=\sum Y_i$&&{\bf 74.48}\\
\hline
\hline
Groups: $\{X_1,X_3\}, \{X_2\}, \{X_4)$&&\\
\hline
&$Y_1(\omega_1)=Y_1(\omega_3)=-76.29$&\\
$Y_1$&$Y_1(\omega_2)=Y_1(\omega_4)=36.21$&\\
&$\mathbb E[Y_1]=-53.79$&\\
\hline
&$Y_3(\omega_1)=Y_3(\omega_3)=48.71$&\\
$Y_3$&$Y_3(\omega_2)=Y_3(\omega_4)=-63.79$&\\
&$\mathbb E[Y_3]=26.21$&$Y_1+Y_3=-27.58$\\
\hline
$Y_2$&&15.82\\
\hline
$Y_4$&&11.20\\
\hline
Systemic risk\,$=\sum Y_i$&&{\bf -0.56}\\
\hline
\hline
Groups: $\{X_1,X_4\}, \{X_2\}, \{X_3)$&&\\
\hline
&$Y_1(\omega_1)=-6.64,Y_1(\omega_2)=68.36$&\\
$Y_1$&$Y_1(\omega_3)=-44.14,Y_1(\omega_4)=30.86$&\\
&$\mathbb E[Y_1]=0.86$&\\
\hline
&$Y_4(\omega_1)=43.36,Y_4(\omega_2)=-31.64$&\\
$Y_4$&$Y_4(\omega_3)=80.86,Y_4(\omega_4)=5.86$&\\
&$\mathbb E[Y_4]=35.86$&$Y_1+Y_4=36.72$\\
\hline
$Y_2$&&15.82\\
\hline
$Y_3$&&15.82\\
\hline
Systemic risk\,$=\sum Y_i$&&{\bf 68.36}\\
\hline
\hline
Groups: $\{X_2,X_3\}, \{X_1\}, \{X_4)$&&\\
\hline
&$Y_2(\omega_1)=Y_2(\omega_3)=-58.97$&\\
$Y_2$&$Y_2(\omega_2)=Y_2(\omega_4)=16.03$&\\
&$\mathbb E[Y_2]=-43.97$&\\
\hline
&$Y_3(\omega_1)=Y_3(\omega_3)=16.03$&\\
$Y_3$&$Y(\omega_2)=Y_3(\omega_4)=-58.97$&\\
&$\mathbb E[Y_3]=1.03$&$Y_2+Y_3=-42.94$\\
\hline
$Y_1$&&36.18\\
\hline
$Y_4$&&11.20\\
\hline
Systemic risk\,$=\sum Y_i$&&{\bf 4.44}\\
\hline
\hline
Groups: $\{X_2,X_4\}, \{X_1\}, \{X_3)$&&\\
\hline
&$Y_2(\omega_1)=5.86, Y_2(\omega_2)=43.36$&\\
$Y_2$&$Y_2(\omega_3)=-31.64,Y_2(\omega_4)=5.86$&\\
&$\mathbb E[Y_2]=5.86$&\\
\hline
&$Y_4(\omega_1)=5.85, Y_4(\omega_2)=-31.65$&\\
$Y_4$&$Y(\omega_3)=43.35,Y_4(\omega_4)=5.85$&\\
&$\mathbb E[Y_4]=5.85$&$Y_2+Y_4=11.71$\\
\hline
$Y_1$&&36.18\\
\hline
$Y_3$&&15.82\\
\hline
Systemic risk\,$=\sum Y_i$&&{\bf 63.71}\\
\hline
\hline
Groups: $\{X_3,X_4\}, \{X_1\}, \{X_2)$&&\\
\hline
&$Y_3(\omega_1)=47.98, Y_3(\omega_2)=10.48$&\\
$Y_3$&$Y_3(\omega_3)=10.48,Y_3(\omega_4)=-27.02$&\\
&$\mathbb E[Y_2]=32.98$&\\
\hline
&$Y_4(\omega_1)=-27.02, Y_4(\omega_2)=10.48$&\\
$Y_4$&$Y(\omega_3)=10.48,Y_4(\omega_4)=47.98$&\\
&$\mathbb E[Y_4]=-12.02$&$Y_3+Y_4=20.96$\\
\hline
$Y_1$&&36.18\\
\hline
$Y_2$&&15.82\\
\hline
Systemic risk\,$=\sum Y_i$&&{\bf 72.96}\\
\hline
\end{tabular}}
\caption{Case $r=2$.}
\label{table5}
\end{table}

\end{example}


\appendix
\section{Appendix}

\subsection{Gaussian Case with Random Injections}

We provide here the computations necessary to minimize the function \eqref{Lagrange_gauss}.
We first consider
\begin{align}
&\esp{(X^i +Y^i-d_i)^-} = \esp{(X^i +m_i + \alpha_i I_D -d_i)^-}  \notag \\
& = \esp{(X^i +m_i + \alpha_i -d_i)^- I_D} + \esp{(X^i +m_i -d_i)^-
I_{A^c}}  \notag \\
& = \esp{\left\{(X^i +m_i + \alpha_i -d_i)^- - (X^i +m_i
-d_i)^-\right\}I_D} + \esp{(X^i +m_i -d_i)^- } \label{gauss_1}
\end{align}
for $i=1,\cdots,N$.
To compute \eqref{gauss_1}, we distinguish between the cases $\alpha_i>0$
and $\alpha_i<0$. Note that by the definition of $\mathcal{C}$, we cannot a
priori argue on the sign of $\alpha$. For $\alpha_i>0$, we have that $
\left\{X^i \leq d_i -m_i\right\}= \left\{X^i \leq d_i
-m_i-\alpha_i\right\}\cup \left\{d_i -m_i-\alpha_i<X^i \leq d_i -m_i\right\}$. 
Here we set $A_1:=\left\{X^i \leq d_i -m_i-\alpha_i\right\}$ and $A_2:=\left\{d_i
-m_i-\alpha_i<X^i \leq d_i -m_i\right\}$. Then
\begin{equation*}
(X^i +m_i + \alpha_i -d_i)^- - (X^i +m_i -d_i)^-= -\alpha_i I_{A_1} + (X^i
+m_i -d_i)I_{A_2} ,
\end{equation*}
and
\begin{align*}
&\esp{(X^i +Y^i-d_i)^-} \\
&\quad= -\alpha_i \esp{I_{A_1}I_D}+ \esp{(X^i +m_i -d_i)
I_{A_2}I_D} + \esp{(X^i +m_i -d_i)^- } \\
&\quad = (m_i -d_i)
F_{i,S}(d_i -m_i, d)  -(m_i+\alpha_i -d_i)F_{i,S}(d_i -m_i-\alpha_i, d) \\
& \quad \ +\int_{d_i -m_i-\alpha_i}^{d_i -m_i}\int_{-\infty}^d xf_{i,S}(x,y)
dydx +\esp{(X^i +m_i -d_i)^- },
\end{align*}
where $F_{i,S}$ and $f_{i,S}$ are the joint distribution function and the
density of $(X^i,S)$, respectively. Recall that in our setting $(X^i,S)\sim
N_2(\bar \mu^i, \bar Q^i)$ with mean vector $\bar \mu^i=(\mu^i, \sum_{j=1}^n
\mu_j)$ and covariance matrix 
\begin{equation*}
\bar Q^i=\left( 
\begin{array}{cc}
\sigma^2_i & \sigma^2_i + \sum_{j\neq i}\rho_{i,j} \\ 
\sigma^2_i + \sum_{j\neq i}\rho_{i,j} & \sum_{j=1}^n \sigma^2_j +
\sum_{j,k=1}^n \rho_{j,k} \\ 
& 
\end{array}
\right). 
\end{equation*}
Analogous computations hold in the case $\alpha_i<0$. Summing up, we obtain
that 
\begin{align*}
&\esp{\sum_{i=1}^N(X^i +Y^i-d_i)^-} =\sum_{i=1}^N\esp{(X^i
+Y^i-d_i)^-} \\
&\quad = \sum_{i=1}^N\esp{(X^i +m_i -d_i)^- } \\
& \quad \ + \sum_{i=1}^NI_{\alpha_i \geq 0} \left[(m_i -d_i) F_{i,S}(d_i -m_i, d)  -(m_i+\alpha_i
-d_i)F_{i,S}(d_i -m_i-\alpha_i, d) \right.
\\
& \quad \quad \left.+\int_{d_i -m_i-\alpha_i}^{d_i -m_i}\int_{-\infty}^d
xf_{i,S}(x,y) dydx\right] \\
& \quad + \sum_{i=1}^NI_{\alpha_i <0} \left[(m_i -d_i) F_{i,S}(d_i -m_i, d) -(m_i+\alpha_i -d_i)F_{i,S}(d_i -m_i-\alpha_i, d)
\right. \\
& \quad \quad\left.+\int_{d_i -m_i-\alpha_i}^{d_i -m_i}\int_{-\infty}^d
xf_{i,S}(x,y) dydx\right] 
\end{align*}
\begin{align*}
&= \sum_{i=1}^N\esp{(X^i +m_i -d_i)^- } \\
& \quad \ + \sum_{i=1}^N\left[(m_i -d_i) F_{i,S}(d_i -m_i, d)  -(m_i+\alpha_i
-d_i)F_{i,S}(d_i -m_i-\alpha_i, d) \right.
\\
& \quad \quad \left.+\int_{d_i -m_i-\alpha_i}^{d_i -m_i}\int_{-\infty}^d
xf_{i,S}(x,y) dydx\right] \\
&= \sum_{i=1}^N\esp{(X^i +m_i -d_i)^- } \\
& \quad \ + \sum_{i=1}^{N-1}\left[(m_i -d_i) F_{i,S}(d_i -m_i, d)  -(m_i+\alpha_i
-d_i)F_{i,S}(d_i -m_i-\alpha_i, d) \right.\\
& \quad \quad \left.+\int_{d_i -m_i-\alpha_i}^{d_i -m_i}\int_{-\infty}^d
xf_{i,S}(x,y) dydx\right] \\
& \quad + (m_N -d_N) F_{N,S}(d_N -m_N, d)  -(m_N-\sum_{j=1}^{N-1}\alpha_j
-d_N)F_{N,S}(d_N -m_N \\
& \quad \quad +\sum_{j=1}^{N-1}\alpha_j, d)+\int_{d_N-m_N+\sum_{j=1}^{N-1}\alpha_j}^{d_N -m_N}\int_{-\infty}^d
xf_{N,S}(x,y) dydx,
\end{align*}
where in the last equality we have used the constraint $\sum_{j=1}^{N}\alpha_j=0$.
We now denote by $\mu_i$, $\sigma_i$ the mean and the quadratic variation of $X^i$, $i=1,\cdots,N$, and $\Phi(x)=\int_{+\infty}^x \frac{1}{\sqrt{2\pi}}e^{-t^2/2} dt$. Set  $\bar{f}_{i,S}(x, y)=\int_{-\infty}^y f_{i,S}(x,s) ds$.

 \begin{enumerate}
\item By computing the derivatives with respect to $\alpha_i$, $i=1,\cdots,N-1$, we obtain
$\frac{\partial \phi}{\partial \alpha_i}=0$ if and only if 
{\small
\begin{align*}
0&= 
 \lambda \Big( (m_i+\alpha_i
-d_i)\bar{f}_{i,S}(d_i -m_i-\alpha_i, d) - F_{i,S}(d_i -m_i-\alpha_i, d)  \\
& \ +(d_i -m_i-\alpha_i) \int_{-\infty}^d f_{i,S}(d_i -m_i-\alpha_i,y)
dy \\
& \quad + F_{N,S}(d_N -m_N+\sum_{j=1}^{N-1}\alpha_j, d)\\
& \quad  \ - (m_N-\sum_{j=1}^{N-1}\alpha_j
-d_N)\bar{f}_{N,S}(d_N -m_N+\sum_{j=1}^{N-1}\alpha_j, d)  \\
& \quad  \ \ + (m_N-\sum_{j=1}^{N-1}\alpha_j
-d_N) \int_{-\infty}^d f_{N,S}(d_N -m_N+\sum_{j=1}^{N-1}\alpha_j,y)
dy 
\Big)\\
&=\lambda \left(  F_{N,S}(d_N -m_N+\sum_{j=1}^{N-1}\alpha_j, d)- F_{i,S}(d_i -m_i-\alpha_i, d) \right).
\end{align*}
}
\noindent We then obtain that the equation above has a solution if $\lambda=0$ or when
\begin{equation} \label{eq_F2}
F_{i,S}(d_i -m_i-\alpha_i, d) =F_{N,S}(d_N -m_N+\sum_{j=1}^{N-1}\alpha_j, d)
\end{equation}
for $i=1,\cdots,N-1$.

\item By computing the derivatives with respect to $m_i$, for $i=1,\cdots,N$, we obtain
$\frac{\partial \phi}{\partial m_i}=0$ if and only if 
\begin{align*}
0&=  1+
 \lambda \left( \Phi(\frac{d_i-\mu_i-m_i}{\sigma_i})- (m_i
-d_i)\bar{f}_{i,S}(d_i -m_i, d)   +  F_{i,S}(d_i -m_i, d) \right. \\
& \left. - F_{i,S}(d_i -m_i-\alpha_i, d) +
(m_i+\alpha_i -d_i)\bar{f}_{i,S}(d_i -m_i-\alpha_i, d)  \right. \\
& \left. +(d_i -m_i-\alpha_i) \int_{-\infty}^d f_{i,S}(d_i -m_i-\alpha_i,y) 
 - (d_i -m_i) \int_{-\infty}^d f_{i,S}(d_i -m_i,y) dy \right)\\
&=  1 + \lambda \left( \Phi(\frac{d_i-\mu_i-m_i}{\sigma_i}) +  F_{i,S}(d_i -m_i, d)  -F_{N,S}(d_N -m_N+\sum_{j=1}^{N-1}\alpha_j, d) \right),
\end{align*}
where we have used \eqref{eq_phi},  \eqref{eq_F2} and the notation above. In particular
{\small
\begin{equation}\label{eq_L}
\lambda =-\left( \Phi(\frac{d_N-\mu_N-m_N}{\sigma_N}) +  F_{N,S}(d_N-m_N, d)  -F_{N,S}(d_N -m_N+\sum_{j=1}^{N-1}\alpha_j, d) \right)^{-1},
\end{equation}
}
if the denominator is different from zero.
By \eqref{eq_L} we then obtain 
\begin{align}
&\Phi(\frac{d_i-\mu_i-m_i}{\sigma_i}) +  F_{i,S}(d_i -m_i, d)\nonumber\\
&=\Phi(\frac{d_N-\mu_N-m_N}{\sigma_N}) +  F_{N,S}(d_N-m_N, d),
\end{align}
for $i=1,\cdots,N-1$.
\end{enumerate}

\subsection{Example on a Finite Probability Space}

We give the proof of the optimal solutions \eqref{c_N}-\eqref{eq:opt}  in Section \ref{finite_space}. \\
 \begin{proof}
We restrict the proof to the case $r=0$. The general case can be obtained following the same steps. For
further details on the proof in the general case, we refer to \cite
{tesi_pastore}.  Note that the following computations
apply in any other case when the derivatives of $U_i$ are invertible for
all $i=1,\cdots,N$. We can rewrite the definition of $\rho$ in this
particular setting as follows:  
\begin{equation}  \label{ro_ut}
\rho(\mathbf{X}):=\inf\left\{c \in \mathbb{R}| \ \sum_{i=1}^N Y^i =c, \mathbb E\left[
\sum_{i=1}^N \expon{-\alpha_i (X^i+Y^i)}\right]\leq \gamma\right\}.
\end{equation}
Note that for $r=0$
 we now have $y^N_j:=c - \sum_{i=1}^{N-1} y^i_j$ for $j=1,\cdots,M$. We
compute $\rho$ by using the method of Lagrange multipliers 
to minimize the function 
\begin{align*}
&\phi(c,y_1^1,\cdots,y^1_{M}, \cdots, y^{N-1}_1,\cdots,y^{N-1}_M,\lambda)\\
& = c+ \lambda \left( \sum_{j=1}^M p_j\left[\sum_{k=1}^{N-1} 
\expon{-\alpha_k(X^k(\omega_j)+y^k_j)}\right.\right.\\
&\left. \left.\qquad\qquad + \expon{-\alpha_N(X^N(\omega_j)+c-
\sum_{i=1}^{N-1} y^i_j)}\right] -\gamma\right).
\end{align*}
We have:
\begin{enumerate}
\item \emph{By computing the derivatives with respect to $y^k_j$, $k=1,\cdots,N-1$
, $j=1,\cdots,M$}: $\frac{\partial \phi}{\partial y^k_j}=0$ if and only if
for all fixed $j=1,\cdots,M$ 
\begin{equation}  \label{eq_j1_0}
\alpha_k \expon{-\alpha_k(X^k(\omega_j)+y^k_j)} = \alpha_N\expon{-\alpha_N(X^N(
\omega_j)+c- \sum_{i=1}^{N-1} y^i_j)}.
\end{equation}
This also implies that for all fixed $j=1,\cdots,M$ 
\begin{equation}  \label{eq_j1}
\alpha_k \expon{-\alpha_k(X^k(\omega_j)+y^k_j)} =\alpha_1 
\expon{-\alpha_1(X^1(\omega_j)+y^1_j)}
\end{equation}
for all $k=1,\cdots,N-1$, i.e. 
\begin{equation}  \label{eq_j1_2}
y^k_j = \frac{1}{\alpha_k}\left[\alpha_1X^1(\omega_j) - \alpha_k
X^k(\omega_j)- \log{\frac{\alpha_1}{\alpha_k}}+\alpha_1 y^1_j\right].
\end{equation}
Furthermore, by \eqref{eq_j1_0} we obtain that 
\begin{equation}  \label{eq_j1_3}
y^1_j = \frac{1}{\alpha_1\beta_N}\sum_{i=1}^N X^i(\omega_j) - X^1(\omega_j) + 
\frac{1}{\alpha_1\beta_N}\sum_{i=1}^N\frac{1}{\alpha_i}\log{\frac{\alpha_1}{
\alpha_i}}+ \frac{1}{\alpha_1\beta_N} c,
\end{equation}
where $\beta_N=\sum_{k=1}^{N} \frac{1}{\alpha_k}$ as before.

\item \emph{By computing the derivatives with respect to $c$}: $\frac{\partial \phi
}{\partial c}=0$ if and only if 
\begin{equation}  \label{eq_c1}
0=1 - \alpha_N \lambda \sum_{j=1}^M p_j \expon{-\alpha_N(X^N(\omega_j)+c-
\sum_{i=1}^{N-1} y^i_j)}.
\end{equation}
We can insert \eqref{eq_j1_0} in \eqref{eq_c1} and obtain 
\begin{equation*}
0=1 - \alpha_k\lambda \sum_{j=1}^M p_j \expon{-\alpha_k(X^k(\omega_j)+y^k_j)}
,
\end{equation*}
i.e. 
\begin{equation}  \label{eq_c2}
\sum_{j=1}^M p_j \expon{-\alpha_k(X^k(\omega_j)+y^k_j)} =\frac{1}{
\alpha_k\lambda}
\end{equation}
for all $k=1,\cdots,N-1$.

\item \emph{By computing the derivatives with respect to $\lambda$}: $\frac{
\partial \phi}{\partial \lambda}=0$ if and only if 
\begin{align}
\gamma&=\mathbb E{ \sum_{i=1}^N \expon{-\alpha_i (X^i+Y^i)}}  \notag \\
&=\sum_{j=1}^M p_j\left[\sum_{k=1}^{N-1} \expon{-\alpha_k(X^k(%
\omega_j)+y^k_j)} + \expon{-\alpha_N(X^N(\omega_j)+c- \sum_{i=1}^{N-1}
y^i_j)}\right]. \label{eq_v}
\end{align}
We
now substitute \eqref{eq_c1} and \eqref{eq_c2} in  \eqref{eq_v} and obtain:
\begin{align}
\gamma&=\sum_{j=1}^M p_j\left(\sum_{k=1}^{N-1} \expon{-\alpha_k(X^k(
\omega_j)+y^k_j)}\right.\nonumber\\
& \left.\qquad+ \expon{-\alpha_N(X^N(\omega_j)+c- \sum_{i=1}^{N-1}
y^i_j)}\right)  \notag \\
&=\sum_{k=1}^{N-1} \sum_{j=1}^M p_j \expon{-\alpha_k(X^k(\omega_j)+y^k_j)}\nonumber\\
&\qquad +
\sum_{j=1}^M p_j \expon{-\alpha_N(X^N(\omega_j)+c- \sum_{i=1}^{N-1} y^i_j)} 
\notag \\
&= \sum_{k=1}^{N-1} \frac{1}{\alpha_k\lambda} + \frac{1}{\alpha_N\lambda} 
\quad = \quad \frac{1}{\lambda}\sum_{k=1}^{N} \frac{1}{\alpha_k}.  \label{eq_l}
\end{align}
Hence 
\begin{equation}  \label{eq_l_exp}
\lambda=\frac{1}{\gamma}\sum_{k=1}^{N} \frac{1}{\alpha_k}= \frac{\beta_N}{%
\gamma}.
\end{equation}
\end{enumerate}
We now compute $c$ by inserting \eqref{eq_j1_2}, \eqref{eq_j1_3} and %
\eqref{eq_l_exp} in \eqref{eq_c2} for $k=1$:
\begin{equation}  \label{eq_c3}
e^{-\frac{c}{\beta_N}} \sum_{j=1}^M p_j \expon{-\frac{1}{\beta_N}(%
\sum_{i=1}^N X^i(\omega_j)
+\sum_{i=1}^N\frac{1}{\alpha_i}\log{\frac{\alpha_1}{\alpha_i}})} =\frac{%
\gamma}{\alpha_1\beta_N}.
\end{equation}
Hence the systemic risk measure, i.e. the optimal $c$, is given by 
\begin{equation}  \label{eq_c4}
\rho(\mathbf{X})=c^*=-\beta_N \log{\left[\frac{\gamma}{\alpha_1\beta_N d_N}\right]},
\end{equation}
where $d_N=\sum_{j=1}^M p_j \expon{-\frac{1}{\beta_N}\sum_{i=1}^N
X^i(\omega_j) -
\frac{1}{\beta_N}\sum_{i=1}^N\frac{1}{\alpha_i}\log{\frac{\alpha_1}{%
\alpha_i}}}$. \newline
 By substituting the optimal value
for $c$ in \eqref{eq_j1_2} and \eqref{eq_j1_3} we also obtain  the optimal allocations

\begin{align*}  
y^k_j &= \frac{1}{\alpha_k \beta_N}\sum_{i=1}^N X^i(\omega_j) - 
X^k(\omega_j)-\frac{1}{\alpha_k}\log(\frac{X^j
\alpha_1}{\alpha_k})\\
& \qquad+ 
\frac{1}{\alpha_k\beta_N}\sum_{i=1}^N\frac{1}{\alpha_i}\log{\frac{\alpha_1}{
\alpha_i}} - \frac{1}{\alpha_k}\log{\left[\frac{\gamma}{\alpha_k\beta_N d}\right]}, 
\end{align*}
for $j=1,\cdots,M$ and $k=1,\cdots,N$.
\end{proof}

\

\


\begin{center}
{\bf Acknowledgment}
\end{center}
\noindent We wish to thank Francesco Rotondi for his help with the simulations.\\\\

\bibliographystyle{plainnat}
\bibliography{risksym}

\end{document}